\newtheorem{theorem}{Theorem}
\newtheorem{lemma}{Lemma}
\newtheorem{claim}{Claim}[section]
\newtheorem{definition}{Definition}
\newtheorem{proposition}{Proposition}
\theoremstyle{definition}
\DeclareMathOperator{\operatorClassNP}{NP}
\newcommand{\classNP}{\ensuremath{\operatorClassNP}}
\DeclareMathOperator{\operatorClassFPT}{FPT\xspace}
\newcommand{\classFPT}{\ensuremath{\operatorClassFPT}\xspace}
\DeclareMathOperator{\operatorClassW}{W}
\newcommand{\classW}[1]{\ensuremath{\operatorClassW[#1]}}
\newcommand{\ps}{\mathcal{P}}
\newcommand{\fw}{\mathcal{W}}
\newcommand{\fc}{\mathcal{C}}
\newcommand{\fcf}{\fc_\ell}
\newcommand{\fcfs}{\fc_\ell^*}
\newcommand{\col}{R}
\newcommand{\ev}{\mathcal{T}}
\newcommand{\svs}{\mathtt{start}}
\newcommand{\ordv}{\mathtt{ordv}}
\newcommand{\lkg}{linkage\xspace}
\newcommand{\wlkg}{walkage\xspace}
\newcommand{\lkgs}{linkages\xspace}
\newcommand{\wlkgs}{walkages\xspace}
\newcommand{\concs}{\circ}
\newcommand{\concm}{\diamond}
\newcommand{\bls}{\mathcal{B}}
\newcommand{\we}{\mathtt{we}}
\newlength{\RoundedBoxWidth}
\newsavebox{\GrayRoundedBox}
\newenvironment{GrayBox}[1]%
   {\setlength{\RoundedBoxWidth}{.93\textwidth}
    \def\boxheading{#1}
    \begin{lrbox}{\GrayRoundedBox}
       \begin{minipage}{\RoundedBoxWidth}}%
   {   \end{minipage}
    \end{lrbox}
    \begin{center}
    \begin{tikzpicture}%
       \node(Text)[draw=black!20,fill=white,rounded corners,%
             inner sep=2ex,text width=\RoundedBoxWidth]%
             {\usebox{\GrayRoundedBox}};
        \coordinate(x) at (current bounding box.north west);
        \node [draw=white,rectangle,inner sep=3pt,anchor=north west,fill=white] 
        at ($(x)+(6pt,.75em)$) {\boxheading};
    \end{tikzpicture}
    \end{center}}
\newenvironment{defproblemx}[2][]{\noindent\ignorespaces%
                                \FrameSep=6pt%
                                \parindent=0pt%
                \vspace*{-1.5em}
                \ifthenelse{\isempty{#1}}{%
                  \begin{GrayBox}{\textsc{#2}}%
                }{%
                  \begin{GrayBox}{\textsc{#2} parameterized by~{#1}}%
                }
                \begin{tabular*}{\textwidth}{@{\hspace{.1em}} >{\itshape} p{1.8cm} p{0.8\textwidth} @{}}%
            }{
                \end{tabular*}%
                \end{GrayBox}%
                \ignorespacesafterend
            }
\newcommand{\Oh}{\mathcal{O}}
\newcommand{\pname}{\textsc}
\newcommand{\ProblemFormat}[1]{\pname{#1}}
\newcommand{\ProblemIndex}[1]{\index{problem!\ProblemFormat{#1}}}
\newcommand{\ProblemName}[1]{\ProblemFormat{#1}\ProblemIndex{#1}{}\xspace}
\newcommand{\probkCycle}{\ProblemName{$k$-Cycle}}
\newcommand{\probkPath}{\ProblemName{$k$-Path}}
\newcommand{\probLongCycle}{\ProblemName{Longest Cycle}}
\newcommand{\probLongSTP}{\ProblemName{Longest $(s,t)$-Path}}
\newcommand{\probLongSTC}{\ProblemName{Longest $(s,t)$-Cycle}}
\newcommand{\probTCycle}{\ProblemName{$T$-Cycle}}
\newcommand{\probMaxColP}{\ProblemName{Maximum Colored $(s,t)$-Path}}
\newcommand{\probMaxColCycle}{\ProblemName{Maximum Colored Cycle}}
\newcommand{\probLongTCycle}{\ProblemName{Longest $T$-Cycle}}
\newif\iflong
\begin{document}
\longtrue

\title{Fixed-Parameter Tractability of Maximum  Colored Path and Beyond\thanks{The research leading to these results has received funding from the Research Council of Norway via the project  BWCA (grant no. 314528). Kirill Simonov acknowledges support by the Austrian Science Fund (FWF, project Y1329). Giannos Stamoulis acknowledges support by the ANR project ESIGMA (ANR-17-CE23-0010) and the French-German Collaboration ANR/DFG Project UTMA (ANR-20-CE92-0027).}
}

\author{
Fedor V. Fomin\thanks{
Department of Informatics, University of Bergen, Norway.}
\and
Petr A. Golovach\addtocounter{footnote}{-1}\footnotemark{}
\and
Tuukka Korhonen\addtocounter{footnote}{-1}\footnotemark{}
\and
Kirill Simonov\thanks{Algorithms and Complexity Group, TU Wien, Austria.}
\and 
Giannos Stamoulis\thanks{LIRMM, Universit\'e de Montpellier, CNRS, France.}
}

\date{}

\maketitle

\thispagestyle{empty}
\begin{abstract}
We introduce a general method for obtaining fixed-parameter algorithms for problems about finding paths in undirected graphs, where the length of the path could be unbounded in the parameter.
The first application of our method is as follows.

We give a randomized algorithm, that given a colored $n$-vertex undirected graph, vertices $s$ and $t$, and an integer $k$, finds an $(s,t)$-path containing at least $k$ different colors in time $2^k n^{\Oh(1)}$.
This is the first FPT algorithm for this problem, and it generalizes the algorithm of Bj\"orklund, Husfeldt, and Taslaman~[SODA~2012] on finding a path through $k$ specified vertices.
It also implies the first $2^k n^{\Oh(1)}$ time algorithm for finding an $(s,t)$-path of length at least $k$.

Our method yields FPT algorithms for even more general problems.
For example, we consider the problem where the input consists of an $n$-vertex undirected graph $G$, a matroid $M$ whose elements correspond to the vertices of $G$ and which is represented over a finite field of order $q$, a positive integer weight function on the vertices of $G$, two sets of vertices $S,T \subseteq V(G)$, and integers $p,k,w$, and the task is to find $p$ vertex-disjoint paths from $S$ to $T$ so that the union of the vertices of these paths contains an independent set of $M$ of cardinality $k$ and weight $w$, while minimizing the sum of the lengths of the paths.
We give a $2^{p+\Oh(k^2 \log (q+k))} n^{\Oh(1)} w$ time randomized algorithm for this problem.
\end{abstract}

\newpage
\pagestyle{plain}
\setcounter{page}{1}


\section{Introduction}\label{sec:intro}
The study of long cycles and paths in graphs is a popular research direction in parameterized algorithms.
Starting from the color-coding  of  Alon, Yuster, and Zwick \cite{AlonYZ95}, powerful algorithmic techniques have been developed \cite{DBLP:journals/siamcomp/Bjorklund14,BjorklundHKK17,rank-treewidth2,cut-and-count,FominLPS16,Koutis08,Williams09,Zehavi16}, see also 
\cite[Chapter~10]{cygan2015parameterized}, for finding long cycles and paths in graphs. However, most of the known methods are applicable only in the scenario when the size of the solution is bounded by the parameter. 
Let us explain what we mean by that by the following example. 

Consider two very related problems,  \probkCycle and \probLongCycle. In both problems, we are given a graph\footnote{In this paper, graphs are assumed to be undirected if it is not explicitly mentioned to be otherwise.} $G$ and an integer parameter $k$. In \probkCycle we ask whether $G$ has a cycle of length \emph{exactly} $k$. In \probLongCycle, we ask whether $G$ contains  a cycle of length \emph{at least} $k$. While in the first problem any solution should have exactly $k$ vertices, in the second problem the solution could be even a Hamiltonian cycle on $n$ vertices. The essential difference in applying color-coding (and other methods) to these problems is that for \probkCycle, a random coloring of the vertices of $G$ in $k$ colors will color the vertices of a solution cycle with different colors  with probability $e^{-k}$.
Such information about colorful solutions allows dynamic programming to solve \probkCycle (as well as the related \probkPath problem, the problem of finding a path of length exactly $k$).  However, since a  solution cycle for \probLongCycle is not upper-bounded by a function of $k$, the coloring argument falls apart. 
As Fomin et al. write  in \cite{FominLPS16} ``This is why color-coding and other techniques applicable to \probkPath do not seem to work here.'' 
Sometimes, like in the case of \probLongCycle, a simple ``edge contraction'' trick, see  \cite[Exercise~5.8]{cygan2015parameterized}, allows reducing the problem to \probkCycle.
We are not aware of general methods for solving problems related to cycles and paths when the size of the solution is not upper-bounded by the parameter.

The main result of this paper is a theorem that allows deriving algorithms for various parameterized problems about paths, cycles, and beyond, in the scenario when the size of the solution is not upper-bounded by the parameter. We discuss numerous applications of the theorem in the next subsection.

Our theorem is about finding a \emph{$k$-colored $(S,T)$-\lkg} in a colored graph.
Let $G$ be a graph, $S$ and $T$ be sets of vertices of $G$, and $p$ be a positive integer.
An {\em $(S,T)$-\lkg} of order $p$ is a set $\ps$ of $p = |\ps|$ vertex-disjoint paths, each starting in $S$ and ending in $T$.
The set of vertices in the paths of $\ps$ is denoted by $V(\ps)$.
The \emph{total length} (or often just the length) of an $(S,T)$-\lkg is the total number of vertices in its paths, i.e., $|V(\ps)|$.
For a coloring  $c : V(G) \rightarrow [n]$ of $G$,  an $(S,T)$-\lkg $\ps$ is called \emph{$k$-colored} if $V(\ps)$ contains at least $k$ different colors, i.e., $|c(V(\ps))| \ge k$.
Let us note that in the above definition the sets $S$ and $T$ are not necessarily disjoint and that the coloring $c$ is not necessarily a proper coloring in the graph-coloring sense.
We also note that for vertices $s,t \in V(G)$, an $(\{s\},\{t\})$-\lkg of order $1$ corresponds to an $(s,t)$-path.

\begin{theorem}\label{thm:maintheorem}
There is a randomized algorithm, that given as an input an $n$-vertex graph $G$, a coloring $c : V(G) \rightarrow [n]$ of $G$, two sets of vertices $S,T \subseteq V(G)$, and integers $p,k$, in time $2^{k+p} n^{\Oh(1)}$ either returns a  $k$-colored $(S,T)$-\lkg of order $p$ and of the minimum total length, or determines that $G$ has  no $k$-colored $(S,T)$-\lkg of order $p$.
\end{theorem}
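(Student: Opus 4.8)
The plan is to turn \Cref{thm:maintheorem} into a family of decision questions solved by a randomized algebraic identity test, in the spirit of the Bj\"orklund--Husfeldt--Taslaman method for detecting a cycle (or path) through a prescribed set of $k$ vertices, but carried out over an algebra that additionally encodes the ``$\geq k$ colors'' requirement. First I would fix a target total length $\ell\in\{p,p+1,\dots,n\}$ and design a Monte-Carlo procedure that, with one-sided error, decides whether $G$ has a $k$-colored $(S,T)$-\lkg of order $p$ and total length exactly $\ell$. Running this for $\ell=p,p+1,\dots$ and taking the first $\ell^{*}$ that answers \textsc{yes} gives the optimum length (or the conclusion that no such \lkg exists); a \lkg of length $\ell^{*}$ is then reconstructed by the standard self-reducibility argument --- repeatedly test whether deleting a vertex, and then an edge, keeps the answer \textsc{yes} at length $\ell^{*}$, and delete it if so, until the remaining graph is exactly a disjoint union of $p$ paths from $S$ to $T$. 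All of this multiplies the running time only by $n^{\Oh(1)}$, and the failure probability is driven below $1/\mathrm{poly}(n)$ by $\Oh(n)$-fold repetition of the decision procedure.

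For the decision step I would evaluate, at a random point, the coefficient of a fixed top monomial of a generating polynomial $P$ living in $R:=\mathbb{F}[\xi_{1},\dots,\xi_{k}]/(\xi_{1}^{2},\dots,\xi_{k}^{2})$, where $\mathbb{F}$ is a field of characteristic~$2$ and size polynomial in $n$; note $R$ is commutative with $\dim_{\mathbb{F}}R=2^{k}$, and every linear form in $R$ squares to~$0$. Draw an independent random scalar $x_{e}\in\mathbb{F}$ for each edge, a random scalar $\gamma_{u}\in\mathbb{F}$ for each vertex, and a random linear form $\delta_{c}=\sum_{i=1}^{k}d_{c,i}\xi_{i}\in R$ for each color $c$. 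Let $P$ be the sum, over all \emph{walk systems} --- ordered $p$-tuples of walks, the $i$-th running from $S$ to $T$, with total length $\ell$ --- of the product of the $x_{e}$ over the edges traversed (with multiplicity) times $\prod_{u}(1+\gamma_{u}\,\delta_{c(u)})$ over the vertices visited (with multiplicity); auxiliary ``label'' variables are attached to the vertices of the walks, as in the cycle-detection argument, so that the involutions below are fixed-point free. The procedure outputs the coefficient of $\xi_{1}\cdots\xi_{k}$ in $P$. Since $P$ can be evaluated by a length-$\ell$ dynamic program over (vertex, path-index, label) states whose values are elements of $R$, and multiplying an element of $R$ by a factor $1+\gamma_{u}\delta_{c(u)}$ costs $2^{k}\,n^{\Oh(1)}$, the evaluation runs in $2^{k}\,n^{\Oh(1)}$ time apart from the bookkeeping for the $p$ paths discussed next.

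The heart of the argument is that over $\mathbb{F}$ (characteristic~$2$) only ``genuine'' contributions survive. Fixed-point-free involutions --- pair a walk system in which some walk is non-simple with the one obtained by reversing the first repeated loop, and pair a walk system in which two walks meet with the one obtained by swapping their tails at the first common vertex --- cancel, in pairs, every walk system that is not a vertex-disjoint family of $p$ simple paths; each involution preserves the multiset of traversed vertices and edges, hence preserves both the $x_{e}$-product and the factor $\prod_{u}(1+\gamma_{u}\delta_{c(u)})$, so the cancellation is exact. To keep a genuine \lkg from being annihilated by the characteristic-$2$ collapse of its $p!$ orderings (which vanishes for $p\ge2$), I would impose a canonical order on the $p$ paths and carry out the disjointness handling inside the algebraic framework via a $p$-fold inclusion--exclusion over the $p$ paths, which is the source of the $2^{p}$ factor. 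What remains is $P\equiv\sum_{\ps}\bigl(\textstyle\prod_{e\in E(\ps)}x_{e}\bigr)\prod_{u\in V(\ps)}(1+\gamma_{u}\delta_{c(u)})$, summed over all $(S,T)$-\lkgs $\ps$ of order $p$ and length $\ell$. Using $\delta_{c}^{2}=0$, the color factor of $\ps$ collapses to $\prod_{c\in c(V(\ps))}(1+\sigma_{c}\delta_{c})$ with $\sigma_{c}=\sum_{u\in V(\ps):\,c(u)=c}\gamma_{u}$, so the coefficient of $\xi_{1}\cdots\xi_{k}$ that $\ps$ contributes equals $\sum_{B}\bigl(\prod_{c\in B}\sigma_{c}\bigr)\det\bigl[(d_{c,i})_{c\in B,\,i\in[k]}\bigr]$, the sum being over $k$-subsets $B\subseteq c(V(\ps))$ --- a polynomial in the random parameters that is not identically zero exactly when $|c(V(\ps))|\ge k$. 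Since distinct \lkgs carry distinct $x_{e}$-monomials, their contributions do not cancel, so $[\xi_{1}\cdots\xi_{k}]P$ is a nonzero polynomial in the randomness iff a $k$-colored $(S,T)$-\lkg of order $p$ and length $\ell$ exists, and Schwartz--Zippel over a large enough $\mathbb{F}$ turns this into the claimed one-sided guarantee.

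The step I expect to be the main obstacle is precisely the interaction of the two cancellation layers: the weighting and the auxiliary labels must be chosen so that the Bj\"orklund--Husfeldt--Taslaman-type involutions are simultaneously exact --- which forces the color factor $\prod_{u}(1+\gamma_{u}\delta_{c(u)})$ to depend only on the vertex multiset of a walk system --- and compatible with the device that handles the $p$-path structure, so that no genuine \lkg disappears in characteristic~$2$ while all non-\lkg walk systems still do. Everything else (the length sweep, the self-reduction, the error amplification, and the accounting that one arithmetic operation in $R$ costs $2^{k}n^{\Oh(1)}$) I expect to be routine.
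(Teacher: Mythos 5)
Your high-level architecture matches the paper's — a characteristic-$2$ polynomial summed over walk systems of each target length, evaluated by dynamic programming and tested by Schwartz--Zippel — and your exterior-algebra encoding of the ``at least $k$ colors'' constraint (a ring $R$ with $\delta_c^2=0$, random linear forms $\delta_c$, and reading off the coefficient of $\xi_1\cdots\xi_k$) is a genuinely different and compact alternative to the paper's explicit label variables $f_c(c(v),r)$. However, the step you explicitly defer as ``the main obstacle'' is not a finishing detail to be routinely discharged: it is a genuine gap, and closing it is essentially the entire technical content of the paper's proof of \Cref{lem:main_phi_exists}.

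Concretely, the loop-reversal involution as you state it is not fixed-point-free, and your color factor cannot detect the offending fixed points precisely \emph{because} it depends only on the vertex multiset. Take $V(G)=\{s,a,b,t,z\}$ with $E(G)=\{sa,ab,at\}$ and $z$ isolated, $c(s)=c(a)=1$, $c(b)=2$, $c(t)=3$, $c(z)=4$, $p=1$, $k=3$, $\ell=5$. The walk $W=s\,a\,b\,a\,t$ is the unique $(s,t)$-walk with edge multiset $\{sa,ab,ab,at\}$; its only repeated vertex is $a$, and the loop between the two occurrences of $a$ has the palindromic middle $b$, so reversing it does nothing. Yet $(1+\gamma_a\delta_1)^2=1$ in your ring, so the color factor collapses to $(1+\gamma_s\delta_1)(1+\gamma_b\delta_2)(1+\gamma_t\delta_3)$, whose $\xi_1\xi_2\xi_3$-coefficient is a nonzero polynomial; hence $[\xi_1\xi_2\xi_3]P$ contains the monomial $x_{sa}x_{ab}^2x_{at}\,\gamma_s\gamma_b\gamma_t\,\det[d_{c,i}]$ with coefficient $1$, and your algorithm reports a $3$-colored path of length $5$ — while the only $(s,t)$-path, $sat$, uses only two colors. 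In the paper's formalism the labeled walk $W$ with a label on $b$ is excluded outright by the no-labeled-digon condition defining $\fcf$, and the remaining labelings of $W$ are non-proper and cancel via the easy label-swap; but an exterior-algebra factor on vertex multisets has no handle on digons, and attaching BHT-style labels to a fixed set (in BHT that set is the prescribed $T$) does not apply here, since the ``interesting'' set $X$ is only determined a posteriori. The paper's recursively defined map $\phi$ in \Cref{def:invophi}, with its $18$ cases interleaving loop reversals, on-the-fly label swaps, palindrome-skipping recursions, and suffix swaps for $p>1$, is exactly what replaces the naive involution you sketch, and it is the heart of the theorem rather than its periphery. Your treatment of the $p$-path layer (canonical ordering plus an unspecified inclusion–exclusion) is likewise too thin to substitute for the paper's careful handling of ending-vertex sets and the case groups B and D.
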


Few remarks are in order. First, \Cref{thm:maintheorem} cannot be extended to directed graphs. It is easy to show, see   
\Cref{prop:hard-directed}, that finding a $2$-colored $(s,t)$-path in a $2$-colored directed graph is already 
\classNP-hard. Second, 
by another simple reduction, see \Cref{prop:hard-secoco}, it can also be observed that if the time complexity of \Cref{thm:maintheorem} could be improved to $(2-\varepsilon)^{k+p} n^{\Oh(1)}$ for $\varepsilon>0$, even in the case when $p=1$, $G$ is colored with $k$ colors, and $S=T=V(G)$, then \textsc{Set Cover} would admit a $(2-\varepsilon)^n (mn)^{\Oh(1)}$ time algorithm, contradicting the Set Cover Conjecture (SeCoCo) of Cygan~et~al.~\cite{CyganDLMNOPSW16}.
We also remark that actually we prove an even more general result than \Cref{thm:maintheorem}, our result in full generality will be stated as \Cref{thm:weightedmain}.
It can be also observed that by a simple reduction that subdivides edges, the coloring could be on the edges of $G$ instead of vertices (or on both vertices and edges).

The algorithm in  \Cref{thm:maintheorem}   invokes DeMillo-Lipton-Schwartz-Zippel lemma  for  polynomial identity testing and thus is ``heavily'' randomized.   
We do not know whether \Cref{thm:maintheorem}  could be derandomized. 
The special case of \Cref{thm:maintheorem} when the coloring is a bijection, the problem of finding an $(S,T)$-\lkg of order $p$ and of length at least $k$, can be reduced to the (rooted) topological minor containment.
To see why, observe that if we enumerate all possible collections $\ps$ of $p$ paths of total length $k$, then we can check for each collection $\ps$  if it is contained as a rooted topological minor in $G$.
The topological minor containment admits a deterministic \classFPT algorithm parameterized by the size of the pattern graph~\cite{DGroheKMW11}.
However the running time of the algorithm of Grohe et al.~\cite{DGroheKMW11} is bounded by a tower of exponents in $k$ and $p$.
Our next theorem gives a deterministic algorithm for computing an $(S,T)$-\lkg of order $p$ and of length at least $k$ 
whose running time is single-exponential in the the parameter $k$ for any fixed value of $p$.
The other advantage of the algorithm in \Cref{thm:detmaintheorem} is that it works on directed graphs too. In the following statement, a \emph{directed} $(S,T)$-\lkg is defined analogously to an $(S,T)$-\lkg, but is composed of directed paths from $S$ to $T$.

\begin{restatable}{theorem}{detmaintheorem}\label{thm:detmaintheorem}
There is a deterministic algorithm that, given an $n$-vertex digraph $G$, two sets of vertices $S,T \subseteq V(G)$, an integer $p$, and an integer $k$, in time $p^{\Oh(kp)} n^{\Oh(1)}$ either returns a directed   $(S,T)$-\lkg of order $p$ and of total length at least $k$, or determines that $G$ has  no directed  $(S,T)$-\lkg of order $p$ and total length at least $k$.
\end{restatable}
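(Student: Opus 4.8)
The plan is to reduce the problem to a bounded number of instances of a ``disjoint paths with prescribed lengths'' problem and then solve each by a bounded-depth branching procedure.

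\emph{Step 1: trivial cases and a reduction to prescribed lengths.} First compute, by a standard max-flow computation (split each vertex into an arc of capacity one, attach a super-source to $S$ and a super-sink to $T$), the maximum number of vertex-disjoint directed $S$--$T$ paths. If this is less than $p$, report that no linkage exists. If $k \le p$, then any directed $(S,T)$-\lkg of order $p$ has total length at least $p \ge k$, so output any $p$ such paths. So assume $k > p$ and that $p$ vertex-disjoint directed $S$--$T$ paths exist. The key observation---this is exactly the rooted-topological-minor reformulation mentioned above, with the pattern a disjoint union of $p$ directed paths whose first and last vertices are rooted to $S$ and to $T$---is that $G$ has a directed $(S,T)$-\lkg of order $p$ and total length at least $k$ if and only if there is a tuple of positive integers $\ell_1 \ge \cdots \ge \ell_p$ with $\ell_1 + \cdots + \ell_p = k$ such that $G$ has $p$ vertex-disjoint directed $S$--$T$ paths $Q_1, \dots, Q_p$ with $|V(Q_j)| \ge \ell_j$ for every $j$. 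The ``if'' direction is immediate; for ``only if'', take a linkage of total length at least $k$, sort its paths by number of vertices, set every $\ell_j := 1$, and distribute the remaining $k - p$ units among the $\ell_j$'s without ever exceeding $|V(Q_j)|$, which is possible because the total available room $\big(\sum_j |V(Q_j)|\big) - p$ is at least $k - p$. There are at most $\binom{k-1}{p-1}$ such tuples, so we fix one tuple and now aim to find vertex-disjoint directed $S$--$T$ paths $Q_1, \dots, Q_p$ with $|V(Q_j)| \ge \ell_j$.

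\emph{Step 2: prefixes plus a flow completion.} Cut each $Q_j$ right after its $\ell_j$-th vertex. This turns the task into: find vertex-disjoint directed paths $R_1, \dots, R_p$ with $R_j$ starting in $S$, having exactly $\ell_j$ vertices, and ending at a vertex $e_j$, such that the digraph obtained from $G$ by deleting $\bigcup_j (V(R_j) \setminus \{e_j\})$ contains $p$ vertex-disjoint directed paths from $\{e_1, \dots, e_p\}$ to $T$. Call such an $(R_1, \dots, R_p)$ a \emph{completable prefix system}: gluing the $R_j$'s to the $p$ disjoint completion paths produces vertex-disjoint $S$--$T$ paths with the required lengths, and conversely the $\ell_j$-prefixes of a solution, together with the remaining parts of the $Q_j$'s, certify completability. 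The point is that completability of a given prefix system is testable in polynomial time by one more flow computation. So everything reduces to searching for a completable prefix system, i.e.\ for $p$ vertex-disjoint directed paths rooted in $S$ of prescribed lengths $\ell_1, \dots, \ell_p$ summing to $k$.

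\emph{Step 3: bounded-depth branching.} A prefix system has only $k$ vertices in total, so I would build it one vertex at a time (all of $R_1$, then all of $R_2$, etc.), branching on the next vertex. The claim that makes this efficient: whenever a completable prefix system exists, one exists whose next vertex (given the already-committed partial system) lies in a set $Z$ of size $2^{\Oh(p)}$, computable in polynomial time from the current residual digraph---roughly, $Z$ is assembled from the at most $4^p$ important separators of size $\le p$ between the current endpoint set and $T$ in the residual digraph, together with their min-cut vertices, reflecting that the only obstruction to rerouting a union-of-paths pattern is an $(S,T)$-vertex-cut of size $\le p$. With this in hand the search tree has $\le k$ levels, each node branching into $2^{\Oh(p)}$ children, hence $p^{\Oh(kp)}$ leaves in total; running the polynomial-time completability test at each leaf and summing over the $\binom{k-1}{p-1}$ tuples gives total running time $p^{\Oh(kp)} n^{\Oh(1)}$. (Equivalently, one can repeatedly delete \emph{irrelevant} vertices---those used by no completable prefix system---until the digraph has $p^{\Oh(kp)}$ vertices and then brute-force; this is the irrelevant-vertex step of the general topological-minor algorithm, but specialized to path patterns so as to avoid the graph-minors structure theorem and its tower-function running time.)

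\emph{Main obstacle.} The technical heart is the rerouting claim in Step~3: that any completable prefix system can be rerouted so its next vertex falls into the small, efficiently computable set $Z$, while preserving both vertex-disjointness among the $R_j$'s and the existence of the completion flow---and pinning down $|Z|$ precisely enough that the branching multiplies out as claimed. This is where the simplicity of the pattern (a disjoint union of paths, whose rerouting obstructions are small $(S,T)$-cuts) lets the standard theory of important separators replace the heavy machinery of Grohe et al.; the delicate part is controlling the interaction of the $p$ prefixes with the completion flow. The remaining, routine, points are the exact correspondence between subdivided path patterns and the ``$|V(Q_j)| \ge \ell_j$'' formulation (notably the length-one patterns, i.e.\ vertices of $S \cap T$), and checking that a completion flow always glues back to the prefixes into a genuine directed $(S,T)$-\lkg of order $p$.
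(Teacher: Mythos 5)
Your proposal takes a genuinely different route from the paper. The paper handles long paths by derandomized random separation: it colors the vertex set with $p+1$ colors via a perfect-hash-family construction so that the length-$k$ suffixes of the solution paths land in disjoint color classes, then for each candidate $v_i$ takes an arbitrary shortest $(v_i,t)$-path inside its color class and finishes with a flow computation; correctness hinges on a token-sliding lemma (\Cref{lemma:tokens}) showing that at least one such replacement must succeed. You instead propose building prescribed-length prefixes one vertex at a time by bounded-width branching and checking completability at the leaves by max-flow. Your Steps~1 and~2 are sound: the reduction to a tuple $(\ell_1,\dots,\ell_p)$ with $\sum\ell_j=k$, and the decomposition into a prefix system plus a flow completion, are both correct.

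The gap is exactly where you flag it, and I do not believe it can be filled as sketched. The claim in Step~3 that the next vertex can always be restricted to a set $Z$ of size $2^{\Oh(p)}$ built from important separators is both unproved and, as stated, false. Consider $p=1$, $k=5$, and the digraph with arcs $s\to v_i\to u_i\to w_i\to x_i\to t$ for $i\in[m]$, together with $s\to y\to t$. A solution exists (any $s v_i u_i w_i x_i t$), and every completable length-$5$ prefix must begin $s, v_i$ for some $i$. But the minimum $(s,t)$-vertex-cut has size $m+1>p$, so there are no important separators of size $\le p$ at all, and any ``assembled'' $Z$ of size $2^{\Oh(p)}$ either misses all the $v_i$ or (if one tries to patch it with ``min-cut vertices'' or arbitrary out-neighbours) must choose among them blindly; the default choice $y$ leads to a dead end since the only continuation from $y$ is too short. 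Important separators certify small cuts, not the existence of long continuations, so they cannot by themselves control the branching factor; the interaction you identify between the prefix-building and the completion flow is precisely the hard part, and it is what the paper's token-sliding lemma resolves. The fallback you mention---an irrelevant-vertex rule specialized to path patterns---is equally unsubstantiated: restricting the pattern to a union of paths does not on its own bypass the structural machinery behind irrelevant-vertex arguments in (directed) graphs.
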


\subsection{Applications of \Cref{thm:maintheorem}}

\Cref{thm:maintheorem} implies FPT algorithms for several problems.
It encompasses a number of fixed-parameter-tractability results and improves the running times for several fundamental well-studied problems. 

\medskip
\noindent\emph{Longest path/cycle}. 
When the coloring $c : V(G) \rightarrow [n]$ is a bijection, and thus all vertices of $G$ are colored in different colors, then an  $(S,T)$-\lkg  is  {$k$-colored} if and only if its length is at least $k$.
In this case, \Cref{thm:maintheorem} outputs an $(S,T)$-\lkg  of order $p$ with at least $k$ vertices in time $2^{k+p} n^{\Oh(1)}$.
In particular, for $p=1$ it implies that \probLongSTP (i.e., for $s,t\in V(G)$ and $k\geq 0$, to decide whether there is an $(s,t)$-path of length at least $k$) is solvable in time $2^{k} n^{\Oh(1)}$.
Since  one can solve \probLongCycle  (to decide whether $G$ contains  a cycle of length at least $k$)  by solving for every edge $st\in E(G)$ the \probLongSTP problem, \Cref{thm:maintheorem} also yields an algorithm solving \probLongCycle in time $2^{k} n^{\Oh(1)}$.
To the best of our knowledge, the previous best known algorithm for \probLongSTP runs in time $4.884^k n^{\Oh(1)}$~\cite{DBLP:journals/ipl/FominLPSZ18} and the previous best known algorithm for \probLongCycle runs in time $1.66^{2k}n^{\Oh(1)}=2.76^kn^{\Oh(1)}$~\cite{BjorklundHKK17,Zehavi16}. The latter algorithm  follows by combining the result of 
 Zehavi  \cite{Zehavi16} stating that \probLongCycle is solvable in time $t(G,2k) n^{\Oh(1)}$, where $t(G,k)$ is the best known running time for solving \probkPath, with  the fastest algorithm for \probkPath of Bj{\"{o}}rklund et al.~\cite{BjorklundHKK17}.
 
 For $p=2$, the problem of finding an $(S,T)$-\lkg of length at least $k$ is equivalent to the problem of finding a cycle of length at least $k$ passing through a given pair of vertices $s,t$.
 A randomized algorithm of running time $(2e)^k n^{\Oh(1)}$ for this problem,  known as \probLongSTC, was given by Fomin et al. in \cite[Theorem~4]{FominGSS20} (see also \cite{FominGSS22}). 
 
 As we already have mentioned the problem of finding an $(S,T)$-\lkg of order $p$ and of length at least $k$ can be reduced to the (rooted) topological minor containment. 
 For $p\geq 3$, \Cref{thm:maintheorem,thm:detmaintheorem} provide the {\sl first} (randomized and deterministic) single-exponential in $k + p$ and single-exponential in $k$ for constant $p$, respectively, algorithms for computing an $(S,T)$-\lkg of order $p$ and of length at least $k$.
 For directed graphs,  \Cref{thm:detmaintheorem}  gives the first FPT algorithm for the problem parameterized by $k + p$.

 \medskip
\noindent\emph{$T$-cycle}.  In the \probTCycle problem, we are given a graph $G$ and a set $T\subseteq V(G)$ of terminals. The task is to decide whether there is a cycle passing through all terminals \cite{BjorklundHT12,Kawarabayashi08,Wahlstrom13}.
By the celebrated result of Bj{\"{o}}rklund, Husfeldt, and  Taslaman \cite{BjorklundHT12}, \probTCycle is solvable in time $2^{|T|} n^{\Oh(1)}$, and their algorithm in fact returns the shortest such cycle.
To solve \probTCycle as an application of \Cref{thm:maintheorem}, we do the following.
We pick a terminal vertex $t \in T$, create a twin vertex $s$ of $t$ (i.e., a vertex $s$ with $N(s) = N(t)$), and color $s$ and $t$ with color $1$.
We then color all non-terminal vertices of $G$ with color $1$ too.
The remaining terminal vertices $T\setminus \{t\}$ we color in $|T|-1$ colors from $2$ to $|T|$, such that no color repeats twice.
Then $G$ has a $T$-cycle if and only if there is a  $|T|$-colored $(\{s\},\{t\})$-\lkg of order $1$.
Therefore, using the algorithm of \Cref{thm:maintheorem}, we can also find the shortest $T$-cycle in time $2^{|T|} n^{\Oh(1)}$.
One could use  \Cref{thm:maintheorem} to generalize the algorithmic result of  Bj{\"{o}}rklund, Husfeldt, and  Taslaman in different settings.
For example, instead of a cycle passing through all terminal vertices, we can ask for a cycle containing at least $k$ terminals from a set $T$ of unbounded size, in time $2^k n^{\Oh(1)}$.

Another generalization of  \probTCycle comes from covering terminal vertices by at most $p$ disjoint cycles. 
For example, in the basic VRP (vehicle routing problem) one wants to route $p$ vehicles, one route per vehicle, starting and finishing at the depot so that all the customers are supplied with their demands and the total travel cost is minimized   \cite{Christofides85}. In the simplified situation when the clients are viewed as terminal vertices $T$ of a graph and routes in VRP are required to be disjoint, this problem turns into the problem of finding a ``$p$-flower'' of minimum total length containing all vertices of $T$. By $p$-flower we mean a family of $p$ cycles that intersect only in one (depot) vertex $s$. To see this problem as a problem of finding a colored $(S,T)$-\lkg, we replace the depot $s$ by a set $S$ of $2p$ vertices whose neighbors are identical to the neighbors of $s$. 
Then similar to  \probTCycle,  this variant of VRP reduces to computing a minimum length $(|T|+1)$-colored $(S,S)$-\lkg of order $p$; thus it is solvable in time $2^{|T|+p} n^{\Oh(1)}$ by \Cref{thm:maintheorem}.

 \medskip
\noindent\emph{Colored paths and cycles}. 
The problems of finding a path, cycle, or another specific subgraph in a colored graph with the maximum or the minimum number of different colors appear in different subfields of algorithms, graph theory, optimization, and operations research \cite{BroersmaLWZ05,CerulliDGR06,CohenIMTP21,CouetouxNV17,HassinMS07,KowalikL16,KumarLSS21,PanolanSZ19,wirth2001multicriteria}. In particular, the seminal color-coding technique of Alon, Yuster, and Zwick
\cite{AlonYZ95}, builds on an algorithm finding a colorful path in a $k$-colored graph, that is, a path of $k$ vertices and $k$ colors, in time $\Oh(2^{k} n)$.

 In the \probMaxColP problem, we are given  a  graph $G$ with a coloring  $c : V(G) \rightarrow [n]$ and integer $k$. The task is to identify whether $G$ contains a $k$-colored $(s,t)$-path, i.e., an $(s,t)$-path with at least $k$ different colors.
In the literature, this problem   
is also known as   \textsc{Maximum Labeled Path}~\cite{CouetouxNV17} and   \textsc{Maximum Tropical Path}~\cite{CohenIMTP21}.
\Cref{thm:maintheorem} yields the {\sl first} \classFPT algorithm for  \probMaxColP, as well as for \probMaxColCycle (decide whether $G$ contains a $k$-colored cycle).
It is also the first \classFPT algorithm for the even more restricted variant of deciding if a given $k$-colored graph contains any $k$-colored path.
A recent paper of Cohen et al.~\cite{CohenIMTP21} claims a $\Oh(2^{k} n^2)$ time  deterministic algorithm for computing a shortest $k$-colored path in a given $k$-colored graph.
Unfortunately, a closer inspection of the algorithm of Cohen et al. reveals that it computes a $k$-colored walk instead of a $k$-colored path.\footnote{
The error in~\cite{CohenIMTP21} occurs on p. 478. It is claimed that if $P$ is a shortest $(u,v)$-path that  uses the set $C$  of colors  and  $P'$ is a $(w,t)$-sub-path of $P$ using colors $C'\subseteq C$,  then $P'$ must be a shortest  $(w,t)$-path among all $(w,t)$-paths using colors $C'$.  This claim is correct for walks but not for paths.} 
  
It is interesting to note that the minimization version of the colored $(s,t)$-path, i.e., to decide whether there is an $(s,t)$-path containing at most $k$ different colors, is  \classW{1}-hard even on very restricted classes of graphs \cite{EibenK20}.

\medskip
\noindent\emph{Beyond graphs: frameworks}.  Frameworks provide a natural generalization of colored graphs. 
Following Lov{\'{a}}sz~\cite{Lovasz19}, we say that a pair $(G,M)$, where $G$ is a graph and $M=(V(G),\mathcal{I})$ is a matroid on the vertex set of $G$, is a \emph{framework}. Then we seek for  a path, cycle, or   $(S,T)$-\lkg in $G$ maximizing the rank function of $M$.
Note that frameworks $(G,M)$ where $M$ is a partition matroid generalize colored graphs.
Indeed, the universe $V(G)$ of $M$ is partitioned into color classes $L_1, \dots, L_n$ and a set $I$ is independent if $|I\cap L_i|\leq 1$ for every color $i\in [n]$. 
However, by plugging different types of matroids into the definition of the framework, we obtain problems that cannot be captured by colored graphs.

Frameworks, under the name \emph{pregeometric graphs}, were used by  Lov{\'{a}}sz in his influential work 
on representative families of linear matroids \cite{Lovasz77}. The problem of computing maximum matching in frameworks is strongly related to the matchoid,  the matroid parity, and polymatroid matching problems. See the  Matching Theory book of 
  Lov{\'{a}}sz and Plummer \cite{LovaszPlummerbook876} for an overview. In their book,  Lov{\'{a}}sz and Plummer use the term  \emph{matroid graph} for frameworks. In his most recent monograph, \cite{Lovasz19},   Lov{\'{a}}sz introduces the term frameworks, and this is the term we adopt in our work. 
 More generally,  the problems of computing specific subgraphs of large ranks in a framework, belong to the broad class of problems about submodular function optimization under combinatorial constraints \cite{CalinescuCPV11,ChekuriP05,NemhauserWF78}.

Let  $(G,M)$ be a framework and 
let $r\colon 2^{V(G)}\rightarrow\mathbb{Z}_{\geq 0}$ be the rank function of the matroid $M$.
The \emph{rank of a subgraph} $H$ of $G$ is $r(V(H))$ and we denote it by $r(H)$. We say that an $(S,T)$-\lkg  $\ps$ in a framework  $(G,M)$  is \emph{$k$-ranked} if the rank of $\ps$, that is the rank in $M$ of the elements corresponding to the vertices of the paths of $\ps$,  is at least $k$.  
With additional work involving 
 (lossy) randomized truncation of the matroid, it is possible to extend  \Cref{thm:maintheorem} from colored graphs to 
 frameworks over a general class of representable matroids.
   
\begin{restatable}{theorem}{thmalgebraic}\label{thm:main-frameworks}
There is a randomized algorithm that, given a framework $(G,M)$, where $G$ is an $n$-vertex graph and $M$ is represented as a matrix  over a finite field of order $q$, sets of vertices $S,T\subseteq V(G)$, and an integer $k$, in time $2^{p + \Oh(k^2 \log (q+k))} n^{\Oh(1)}$ 
either finds a   $k$-ranked $(S,T)$-\lkg of order $p$  and of minimum total length, 
or determines that $(G,M)$ has no  $k$-ranked $(S,T)$-\lkg of order $p$.
\end{restatable}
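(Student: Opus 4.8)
The plan is to reduce \Cref{thm:main-frameworks} to the colored-graph algorithm of \Cref{thm:maintheorem} by two independent randomized transformations. First, fix (for the analysis only) an optimal $k$-ranked $(S,T)$-\lkg $\ps^\star$ of $(G,M)$ of order $p$ and minimum total length $L^\star$, together with a set $X^\star\subseteq V(\ps^\star)$ of $k$ elements that is independent in $M$ (these exist unless there is no $k$-ranked \lkg at all, which the algorithm will detect). The first transformation is a \emph{lossy randomized truncation}. Let $A$ be the given matrix over $\mathbb{F}_q$ representing $M$, with $d$ rows where $d$ is the rank of $M$; if $d<k$ we output ``no''. Otherwise sample a uniformly random $D\in\mathbb{F}_q^{k\times d}$ and let $M'$ be the matroid on $V(G)$ whose independent sets are the linearly independent column sets of $DA$; then $M'$ has rank at most $k$, and we write $A':=DA$ and $A'_v$ for the column of $A'$ indexed by $v$. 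Two facts make this work. (i) Since $DA$ is a linear image of $A$, $r_{M'}(Y)\le r_{M}(Y)$ for every $Y$, so every $k$-ranked \lkg of $(G,M')$ is a $k$-ranked \lkg of $(G,M)$, and the minimum total length of such a \lkg can only increase when passing from $M$ to $M'$. (ii) The restriction of the linear map $D$ to the $k$-dimensional space spanned by $\{A_v : v\in X^\star\}$ is a uniformly random linear map into $\mathbb{F}_q^k$, hence injective with probability $\prod_{i=1}^{k}(1-q^{-i})>1/4$; in that event $X^\star$ is still independent in $M'$, so $\ps^\star$ is still $k$-ranked in $(G,M')$ and $M'$ has rank exactly $k$. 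Combining (i) and (ii), with probability more than $1/4$ the minimum total length of a $k$-ranked $(S,T)$-\lkg of order $p$ is the same for $(G,M')$ as for $(G,M)$, namely $L^\star$. We therefore continue with the rank-$k$ matroid $M'$, given explicitly by the $k\times n$ matrix $A'$.

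The second transformation recognizes ``the columns indexed by $V(\ps)$ span $\mathbb{F}_q^k$'' by a \emph{random flag}. Sample a uniformly random ordered basis $b_1,\dots,b_k$ of $\mathbb{F}_q^k$ and color every $v\in V(G)$ with a dedicated color $c_0$ if $A'_v=0$ (a loop of $M'$), and otherwise with $\min\{i : A'_v\in\spn(b_1,\dots,b_i)\}\in[k]$; this uses $k+1$ colors, so we assume $k+1\le n$ (otherwise the instance is small enough to solve by brute force). A routine triangularity argument shows that if a vertex set hits all of the colors $1,\dots,k$ then the corresponding columns are linearly independent, hence a basis of $\mathbb{F}_q^k$; so any $(S,T)$-\lkg that hits all $k+1$ colors is $k$-ranked in $M'$, and so in $M$. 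For the converse I would prepend a trivial gadget: add $p$ new vertices $z_1,\dots,z_p$, each adjacent exactly to every vertex of $S$, all colored $c_0$, and make $\{z_1,\dots,z_p\}$ the new source set. Order-$p$ $(S,T)$-\lkgs of $G$ then correspond bijectively to order-$p$ \lkgs from $\{z_1,\dots,z_p\}$ to $T$ in the gadget graph, each longer by exactly $p$, and every such \lkg automatically hits $c_0$. If $X^\star=\{u_1,\dots,u_k\}$ survived the truncation, its columns form a basis of $\mathbb{F}_q^k$, and for the ordered basis $b_i:=A'_{u_i}$ each $u_i$ receives color $i$, so the gadgeted $\ps^\star$ hits all $k+1$ colors. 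Since there are at most $q^{\Oh(k^2)}$ ordered bases, a uniformly random one coincides with this prescribed one with probability at least $q^{-\Oh(k^2)}$.

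Putting this together, the algorithm repeats the two transformations independently $q^{\Oh(k^2)}\log(1/\delta)$ times; in each round it builds the gadgeted colored graph and invokes \Cref{thm:maintheorem} with parameters $p$ and $k+1$ in time $2^{(k+1)+p}n^{\Oh(1)}$, and if this returns a minimum-length $(k+1)$-colored \lkg it deletes the vertices $z_i$, checks (as a safeguard) that what remains is $k$-ranked in $M$, and records it. It finally returns the shortest recorded \lkg, or ``no'' if nothing was recorded. Soundness holds in every round by the triangularity argument, so if $(G,M)$ has no $k$-ranked $(S,T)$-\lkg of order $p$ the algorithm always answers ``no''; if it has one, then any round in which the truncation keeps $X^\star$ independent and the random flag is the good one — probability at least $\tfrac14 q^{-\Oh(k^2)}$ — yields, via \Cref{thm:maintheorem}, a $(k+1)$-colored \lkg of length $L^\star+p$ whose restriction is an optimal $k$-ranked \lkg. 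Hence the failure probability is at most $\delta$, and the running time is $q^{\Oh(k^2)}\cdot 2^{k+p}n^{\Oh(1)}=2^{p+\Oh(k^2\log q)}n^{\Oh(1)}\le 2^{p+\Oh(k^2\log(q+k))}n^{\Oh(1)}$, matching the claim.

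The main obstacle is making the lossy truncation pull its weight in both directions at once: fact (i) must hold \emph{deterministically}, so that soundness and the ``no''-verdict are never at risk, while fact (ii) — which preserves only one prescribed optimal solution rather than the whole matroid structure — is what supplies the constant success probability; and one must check that the color $c_0$, introduced precisely so that the unavoidable loops of $M'$ (those inherited from $M$ and those freshly created by $D$) cannot impersonate flag levels, together with the source gadget that makes $c_0$ always available, leaves both the soundness argument and the random-flag completeness argument intact. The remaining ingredients — the triangularity computation, the bijection induced by the gadget, and the count of ordered bases — are routine.
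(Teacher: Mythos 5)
Your proof is correct and proves the stated bound (in fact the slightly sharper $2^{p+\Oh(k^2\log q)}n^{\Oh(1)}$), but it takes a genuinely different route than the paper in the step that reduces a rank-$k$ representation to a coloring. Both arguments begin with the same lossy truncation — multiply the representation matrix on the left by a random $k\times(\mathrm{rank})$ matrix, observe that independence can only be destroyed, never created, and that a fixed size-$k$ independent set survives with constant probability. Where the paper first enlarges the field to order $\ge 2k$ and applies the DeMillo--Lipton--Schwartz--Zippel bound to the determinant, you stay over $\mathbb{F}_q$ and bound the survival probability directly by $\prod_{i\le k}(1-q^{-i})>1/4$; this spares the field-extension step and is why your exponent avoids the $\log(q+k)$ term. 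The larger departure is in converting the $k\times n$ matrix to a vertex coloring. The paper enumerates all $\le q^{k^2}$ ordered $k$-tuples of column vectors of the truncated matrix, colors a vertex $i$ iff its column \emph{equals} the $i$-th guessed vector, and puts everything else in a ``none-of-the-above'' color $k+1$. You instead sample a uniformly random ordered basis $(b_1,\dots,b_k)$ of $\mathbb{F}_q^k$ and color each non-loop vertex by its \emph{flag level}, the least $i$ with $A'_v\in\spn(b_1,\dots,b_i)$, reserving a dedicated color $c_0$ for loops. The triangularity argument then gives unconditional soundness — hitting all $k$ flag colors forces linear independence, without the chosen vertices' columns literally being the $b_i$ — so every round's output is a valid $k$-ranked linkage (your ``safeguard'' is indeed redundant), and you only need the random basis to coincide with $(A'_{u_1},\dots,A'_{u_k})$ for completeness. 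Both methods incur the same $q^{\Theta(k^2)}$ overhead, and both use essentially the same source gadget to make the extra color freely available; the coloring step is where the arguments truly diverge, and yours is a touch cleaner in that soundness does not depend on guessing correctly.
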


With minor adjustments, \Cref{thm:main-frameworks} can be adapted for frameworks with matroids
that are in general not representable over a finite field of small order.
For example, uniform matroids, and more generally transversal matroids, are representable over a finite field, but the field of representation must be large enough.
Despite this, we can apply \Cref{thm:main-frameworks} to transversal matroids.
Similarly, it is possible to apply  \Cref{thm:main-frameworks} in the situation when $M$ is represented by 
an integer matrix over rationals with entries bounded by $n^{\Oh(k)}$.

\medskip
\noindent\emph{Weighted extensions}.
\Cref{thm:maintheorem} can be extended into a weighted version in two different settings.
The first setting is to have weights on edges that affect the length of the $(S,T)$-\lkg.
It is easy to see that by subdividing edges, coloring the subdivision vertices with a new ``dummy color'', and increasing $k$ by one, all our algorithms work in the setting when the edges have polynomially-bounded positive integer weights.

The second weighted extension is more interesting.
It is to have weights on vertices, and asking for an $(S,T)$-\lkg containing a combination of weights and colors in a specific sense.
In this setting, we have in addition to the coloring $c : V(G) \rightarrow [n]$ a weight function $\we : V(G) \rightarrow \mathbb{Z}_{\ge 1}$.
For integers $k,w$, we say that an $(S,T)$-\lkg $\ps$ is $(k,w)$-colored if its vertices $V(\ps)$ contain a set $X \subseteq V(\ps)$ so that $|X|=k$, all vertices of $X$ have different colors, and the total weight of $X$ is exactly $\we(X) = \sum_{v \in X} \we(v) = w$.
This weighted version does not follow by direct reductions, but instead by a modification of \Cref{thm:maintheorem} (in our main proof, we will directly prove \Cref{thm:weightedmain} instead of \Cref{thm:maintheorem}).

\begin{theorem}\label{thm:weightedmain}
There is a randomized algorithm that, given as an input an $n$-vertex graph $G$, a coloring $c : V(G) \rightarrow [n]$ of $G$, a weight function $\we : V(G) \rightarrow \mathbb{Z}_{\ge 1}$, two sets of vertices $S,T$, and three integers $p,k,w$, in time $2^{k+p} n^{\Oh(1)} w$ either returns a $(k,w)$-colored $(S,T)$-\lkg of order $p$ and of minimum total length, or determines that no $(k,w)$-colored $(S,T)$-\lkgs of order $p$ exist.
\end{theorem}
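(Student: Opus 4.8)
The plan is to prove \Cref{thm:weightedmain} by the algebraic, polynomial‑identity‑testing method of Bj\"orklund and of Bj\"orklund--Husfeldt--Taslaman: build a polynomial over a field of characteristic two whose monomials encode the objects we are looking for, arrange that every object which is \emph{not} a solution cancels against a partner, and decide whether the polynomial is identically zero via the DeMillo--Lipton--Schwartz--Zippel lemma. First I would reduce to a fixed total length: for $\ell$ from small to large, call a decision subroutine that tests whether there is a $(k,w)$-colored $(S,T)$-\lkg of order $p$ of total length \emph{exactly} $\ell$; the least such $\ell$ is the optimum. Each call is amplified $\Oh(\log n)$ times so that no false negative occurs over the $n$ choices of $\ell$ with high probability, and once the optimal $\ell$ is found an actual \lkg is recovered by the usual self-reduction (successively fix vertices, then endpoints in $S\cup T$, then edges, re-calling the subroutine), at a polynomial overhead.

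For the decision subroutine I would relax $(S,T)$-\lkgs to $(S,T)$-\wlkgs (tuples of $p$ walks from $S$ to $T$ that need be neither simple nor pairwise disjoint) and, crucially, concatenate each \wlkg into a \emph{single} walk by adding auxiliary ``teleportation'' edges from every vertex of $T$ to every vertex of $S$: a solution then corresponds to a single walk of length $\ell$ that uses exactly $p-1$ teleportation edges and visits each vertex of $G$ at most once -- which lets the later dynamic program carry a state polynomial in $n$ rather than one tracking all $p$ endpoints. I would fix $\mathbb{F}=\mathbb{F}_{2^{b}}$ with $2^{b}$ polynomially large and work over $\mathcal{R}=\big(\mathbb{F}[z]/(z^{w+1})\big)\otimes\mathbb{F}[(\mathbb{Z}/2)^{k}]\otimes\mathbb{F}[(\mathbb{Z}/2)^{p}]$, of $\mathbb{F}$-dimension $(w+1)2^{k+p}$: the truncated-polynomial factor records the total weight of the designated set (we keep only the $z^{w}$ part), and the two group algebras provide ``slots'' for the $k$ selected colors and the $p$ path-segments. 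With an indeterminate $x_{v}$ per vertex and $x_{e}$ per edge, I would let $P_{\ell}$ be the sum, over all such walks together with a choice of $k$ vertex-occurrences on the walk meant to form $X$ and over all group-algebra slot-labelings of these $k$ occurrences and of the $p$ segments, of the product of the traversed vertex- and edge-variables, the monomial $z^{\sum_{v\in X}\we(v)}$, and the corresponding basis element of $\mathbb{F}[(\mathbb{Z}/2)^{k}]\otimes\mathbb{F}[(\mathbb{Z}/2)^{p}]$.

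The point of the construction is that three kinds of objects cancel. (i)~If the walk is not simple on $G$ (a vertex is revisited, equivalently two segments overlap), its contribution cancels against the walk produced by the standard Bj\"orklund swap at the lexicographically-first collision, a fixed-point-free involution along which the recorded monomial is constant, hence contributing $0$ in characteristic two. (ii)~If two of the $k$ designated occurrences carry the same color (in particular, sit on the same vertex), the contribution cancels against the one with their two color-slots swapped. (iii)~Contributions whose designated occurrences have total weight $\neq w$ lie outside the $z^{w}$-graded part and are discarded. What survives is precisely the genuine $(k,w)$-colored $(S,T)$-\lkgs of order $p$ and length $\ell$; moreover, summing over all slot-labelings turns the otherwise-even multiplicity $k!\,p!$ of such a \lkg among the labeled concatenated walks into a single ``all slots distinct'' component with coefficient $1$. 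Hence $P_{\ell}$ is a nonzero polynomial over $\mathcal{R}$ if and only if a solution of length $\ell$ exists, and I would test this by evaluating $P_{\ell}$ at a uniformly random point of $\mathbb{F}^{V(G)\cup E(G)}$ via a dynamic program over prefixes of the walk, whose state is (current vertex, number of teleportations used, accumulated element of $\mathcal{R}$) and whose each transition costs $n^{\Oh(1)}$ operations in $\mathcal{R}$; this gives total running time $2^{k+p}n^{\Oh(1)}w$, with DLSZ making every ``yes'' answer correct and ``no'' correct with high probability.

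The step I expect to be the main obstacle is making the several sign-reversing involutions \emph{mutually compatible}: the simplicity swap must not disturb any fixed point of the color-collision swap, and conversely, which forces a strict priority order (resolve non-simplicity first, then same-color pairs among the designated occurrences) and a careful verification that each swap genuinely changes the object while preserving the already-resolved structure. Tightly coupled to this is the characteristic-two bookkeeping: one must check that every genuine solution is counted an \emph{odd} number of times after all cancellations -- which is exactly what the $(\mathbb{Z}/2)^{k+p}$ group-algebra layer replacing permutation-labelings is designed to guarantee, and which one must verify survives being folded, together with the $z^{w+1}$-truncation, into the walk dynamic program without exceeding the $2^{k+p}n^{\Oh(1)}w$ budget. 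The remaining points -- that $S$ and $T$ may intersect, that a path can be a single vertex, and that $c$ and $\we$ are arbitrary -- should only require routine care.
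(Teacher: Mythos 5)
Your high-level strategy --- a polynomial over a characteristic-two algebra whose monomials encode (possibly non-simple, non-disjoint) labeled walks, sign-reversing involutions to cancel non-solutions, and a DeMillo--Lipton--Schwartz--Zippel test evaluated by a walk DP --- matches the paper's, and the $(w+1)2^{k+p}$-dimensional ring accounting is consistent with the claimed $2^{k+p}n^{\Oh(1)}w$ running time; your truncated-polynomial factor for weight is a clean repackaging of the extra DP coordinate the paper uses. The fatal gap, though, is precisely the step you flag as the expected obstacle, and it is far deeper than a priority order between two swaps. Your cancellation~(i) invokes ``the standard Bj\"orklund swap at the lexicographically-first collision,'' but there is no such standard swap in this setting. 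The BHT cycle-reversal involution genuinely fails because the designated set $X$ is chosen on the fly rather than being a fixed terminal set: reversing a loop can slide a designated occurrence of a vertex next to an undesignated occurrence of the same vertex, creating a \emph{labeled digon}, which destroys the invariant the cancellation depends on; the natural patch (move the designation to the other occurrence) creates the same problem in other configurations; and the obvious priority-ordered mix of the two is not an involution. The paper exhibits concrete counterexamples $abcd\hat{a}b$ and $\hat{a}bcac$ in \Cref{subsec:outline} and explicitly warns that ``the more patches we introduce, the more issues appear and the whole construction falls apart.'' Your proposal also never imposes a no-labeled-digon constraint on the generating family at all, yet the paper builds it in from the start precisely so that the palindrome lemma (\Cref{lem:phi_palindrome}) holds.

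Repairing this is the entire technical content of \Cref{lem:main_phi_exists}: a recursive, eighteen-case involution $\phi$ (\Cref{def:invophi}) interleaving subwalk reversal, label swap, and suffix swap, whose well-definedness already needs a nontrivial ``barren walkage'' argument (\Cref{lemma:well-definedphi}) and whose involutivity needs careful casework tracking how each operation preserves digons, palindromes, start/end vertices, and the multiplicity of the first vertex across walks. Your proposal supplies none of this, so the analogue of correctness part~(b) --- that the polynomial vanishes identically when no solution of length at most $\ell$ exists --- is unproved. Two subsidiary concerns: flattening the $p$ walks into a single walk via teleportation edges is not obviously compatible with the suffix-swap operation, which exchanges suffixes between two \emph{distinct} walks of the walkage (the paper keeps the walks separate and pays a $2^p$ DP coordinate over used ending vertices for exactly this reason); and your proposed order (resolve non-simplicity first, then color collisions) is the reverse of the paper's (\Cref{lem:properwalkage} first), which eliminates color collisions before the hard cancellation precisely so that $\phi$ may assume all labeled vertices carry distinct colors.
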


Note that \Cref{thm:weightedmain} implies \Cref{thm:maintheorem} by setting all vertex weights to $1$ and $w=k$.
\Cref{thm:weightedmain} allows to derive some applications of our technique that do not directly follow from \Cref{thm:maintheorem}, which we proceed to describe.

 \medskip
\noindent\emph{Longest $T$-cycle}.
Recall that in the \probTCycle problem the task is to find a cycle passing through a given set $T$ of terminal vertices.
Both the algorithm of Bj\"orklund, Husfeldt, and Taslaman~\cite{BjorklundHT12}, and the application of the algorithm of~\Cref{thm:maintheorem} find in fact the shortest $T$-cycle.
A natural generalization of the \probTCycle problem is the \probLongTCycle problem, where in addition to the set $T$  we are given an integer $k$ and the task is to find a cycle of length at least $k$ passing through the terminals $T$.
\Cref{thm:weightedmain} can be used to solve \probLongTCycle in time $2^{\max(|T|,k)} n^{\Oh(1)}$ as follows.
First, if $|T| \ge k$, any $T$-cycle has length at least $k$ and we just use the algorithm for \probTCycle.
Otherwise, like in the reduction for \probTCycle, we first pick a terminal $t \in T$ and create a twin $s$ of it.
Then,
we color $s$ and $t$ with color $1$, and all the other vertices with different colors from $2$ to $n$.
We also assign weight $3$ to the terminal vertices $T$, weight $1$ to the vertex $s$, and weight $2$ to all other vertices.
We invoke \Cref{thm:weightedmain} to find an $(\{s\},\{t\})$-\lkg of order $1$ that contains a set $X$ of vertices with distinct colors, size $|X|=k$, and weight $\we(X) = 2k+|T|$.
Any such set $X$ must be a superset of $T$ and not contain $s$, and therefore the found path must correspond to a cycle of length at least $k$ passing through the terminals $T$.

\medskip
\noindent\emph{Vehicle routing with profits}.
With \Cref{thm:weightedmain}, we can give an algorithm for the vehicle routing problem in a bit more general setting.
In particular, we consider the situation where the depot has $k$ parcels, $p$ vehicles, and for each vertex $v$ we know that we obtain a profit $\we(v)$ for delivering a parcel to that vertex.
We can use \Cref{thm:weightedmain} with the same reduction as used for VRP earlier, but instead letting the coloring of the vertices to be a bijection, to obtain a $2^{k+p} n^{\Oh(1)} w$ time algorithm for determining the shortest routing by cycles intersecting only at the depot that yields a total profit of $w$.

\medskip
\noindent\emph{Longest $k$-colored $(S,T)$-\lkg}.
\Cref{thm:weightedmain} can be also used to derive a longest path version of \Cref{thm:maintheorem}, in particular an algorithm that given a graph $G$, a coloring $c : V(G) \rightarrow [n]$, two sets of vertices $S,T \subseteq V(G)$, three integers $k,p,\ell$, in time $2^{p+\ell+k} n^{\Oh(1)}$ outputs a $k$-colored $(S,T)$-\lkg of order $p$ and length at least $\ell$.
The reduction is as follows.
First, if $p \ge \ell$, then any $(S,T)$-\lkg of order $p$ has length at least $\ell$, so we can use \Cref{thm:maintheorem}.
Otherwise, we are looking for a $k$-colored $(S,T)$-\lkg that contains at least $\ell-p$ edges.
We subdivide every edge, and for each created subdivision vertex we assign a new color and weight $2k$.
For the original vertices we keep their colors and assign weight $1$.
Now, any $k$-colored $(S,T)$-\lkg of order $p$ and length at least $\ell$ corresponds to an $(S,T)$-\lkg of order $p$ that contains a set $X$ of vertices with distinct colors, size $|X|=k+\ell-p$, and weight exactly $\we(X) = (\ell-p)\cdot 2k + k$ (note that here we use the property that we are looking for an exact weight instead of maximum weight).

\medskip
\noindent\emph{Weighted frameworks}.
We consider a generalization of frameworks into weighted frameworks.
In particular, we say that a triple $(G,M,\we)$, where $G$ is a graph, $M = (V(G), \mathcal{I})$ is a matroid, and $\we : V(G) \rightarrow \mathbb{Z}_{\ge 1}$ is a weight function, is a \emph{weighted framework}.
Now we can say that an $(S,T)$-\lkg $\ps$ in a weighted framework $(G,M,\we)$ is {\em $(k,w)$-ranked} if $V(\ps)$ contains a set $X$ of vertices with $X \in \mathcal{I}$, size $|X|=k$, and weight $\we(X) = w$.
By using the same reduction as from \Cref{thm:maintheorem} to \Cref{thm:main-frameworks}, we obtain the following theorem.

\begin{restatable}{theorem}{thmweightedframeworks}\label{thm:weightedframeworkthm}
There is a randomized algorithm that given a weighted framework $(G,M,\we)$, where $G$ is an $n$-vertex graph and $M$ is represented as a matrix  over a finite field of order $q$, sets of vertices $S,T\subseteq V(G)$, and integers $p,k,w$, in time $2^{p+\Oh(k^2 \log (q+k))} n^{\Oh(1)} w$ 
either finds a $(k,w)$-ranked $(S,T)$-\lkg of order $p$ and of minimum total length, 
or determines that $(G,M,\we)$ has no $(k,w)$-ranked $(S,T)$-\lkgs of order $p$.
\end{restatable}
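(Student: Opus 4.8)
\medskip
\noindent\emph{Proof plan.}
The plan is to combine two modifications of the proof of \Cref{thm:weightedmain} that act on independent parts of the construction: the passage from colored graphs to frameworks, which is exactly what separates \Cref{thm:maintheorem} from \Cref{thm:main-frameworks} and is implemented via a (lossy) randomized matroid truncation together with span-tracking, and the bookkeeping of vertex weights, which is already present in the main proof (it is what turns the $2^{k+p}n^{\Oh(1)}$ bound of \Cref{thm:maintheorem} into the $2^{k+p}n^{\Oh(1)}w$ bound of \Cref{thm:weightedmain}). Since the weight handling operates on a separate weight indeterminate and the matroid handling operates on the ``color'' part of the sieve, the two upgrades compose without interference, and their product yields the claimed running time $2^{p+\Oh(k^2\log(q+k))}n^{\Oh(1)}w$.

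Concretely, I would first preprocess the matroid: given the representation of $M$ over $\mathbb{F}_q$, compute a representation over a small extension field $\mathbb{F}_{q'}$ with $q'=(q+k)^{\Oh(1)}$ of the truncation $M'$ of $M$ to rank $k$ (if $r(M)<k$ this is $M$ itself); with high probability the computed matrix indeed represents $M'$, so that a set $X$ with $|X|=k$ is independent in $M$ if and only if it is a basis of $M'$. Hence, up to the truncation error, finding a $(k,w)$-ranked $(S,T)$-\lkg of minimum total length in $(G,M,\we)$ is the same as finding an $(S,T)$-\lkg $\ps$ of order $p$ and of minimum total length such that $V(\ps)$ contains a basis $X$ of $M'$ with $\we(X)=w$ (which forces $|X|=k$); if $r(M)<k$ there is no such basis and we correctly report infeasibility. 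I would then run the polynomial-sieving algorithm underlying \Cref{thm:weightedmain}, but with the role of ``set of colors already used'' replaced by ``span in $\mathbb{F}_{q'}^k$ of the representing vectors of the vertices selected so far'', augmented with the auxiliary indeterminates that make the existence of a full-rank $k\times k$ minor visible as a nonzero coefficient via a Cauchy--Binet style identity --- exactly as in the proof of \Cref{thm:main-frameworks}. There are at most $(q')^{\Oh(k^2)}=2^{\Oh(k^2\log(q+k))}$ relevant states, which replaces the factor $2^k$; the weight indeterminate $z$ and the extraction of the coefficient of $z^{w}$ are carried over verbatim from the proof of \Cref{thm:weightedmain}, contributing the factor $w$; and the minimum-total-length guarantee as well as the application of the DeMillo--Lipton--Schwartz--Zippel lemma are unchanged. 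The algorithm is Monte Carlo, as both the truncation and the identity test are randomized, with error probability boosted by repetition.

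The main obstacle is the one inherited from \Cref{thm:main-frameworks}: verifying that the algebraic certificate used in the main proof to witness that the selected vertices realize $k$ distinct colors can be re-engineered to witness instead that they contain a basis of $M'$, i.e.\ that detection of linear independence over $\mathbb{F}_{q'}$ composes correctly with the path/\lkg polynomial and that the lossy truncation is lossless, with good probability, on all $k$-subsets that could belong to an optimal solution --- this is precisely what pins down the required extension degree and hence the $\log(q+k)$ term. Because the weight machinery of \Cref{thm:weightedmain} touches only the dedicated weight indeterminate, it is orthogonal to this adaptation, so once the framework version of the sieve is in place, allowing for weights adds no further difficulty.
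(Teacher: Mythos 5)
Your proposal diverges from the paper's actual route in a significant way, and the divergence leaves a real gap.

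The paper does \emph{not} modify the polynomial sieve to track spans, and there is no Cauchy--Binet-type identity anywhere in its proof of \Cref{thm:main-frameworks} or \Cref{thm:weightedframeworkthm}. What the paper actually does is much more elementary. First (\Cref{lem:matroidfullrank}) it handles the case where $M$ is given by a $k\times n$ matrix over $\mathrm{GF}(q)$: since the matrix has at most $q^k$ distinct columns, it \emph{enumerates} all $q^{k^2}$ ordered $k$-tuples of column vectors that could constitute the sought basis $X$, and for each guess it builds an auxiliary coloring (color $i\in[k]$ to vertices whose column equals the $i$-th guessed vector, and a dummy color $k{+}1$ to everything else) and then simply calls the algorithm of \Cref{thm:weightedmain} on this colored, weighted instance with parameters $(k{+}1,w{+}1)$. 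The polynomial, the labeled walkages, and the involution $\phi$ are used \emph{unchanged}; the whole matroid machinery is confined to a black-box reduction to the colored problem. \Cref{thm:weightedframeworkthm} then reduces to \Cref{lem:matroidfullrank} by the randomized lossy $k$-truncation (extend the field to order $\ge 2k$, multiply the $r\times n$ representation by a random $k\times r$ matrix, apply Schwartz--Zippel), which is the one part of your plan that matches the paper.

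The gap in your proposal is exactly where you flag it as ``the main obstacle'' but do not resolve it. The cancellation argument of \Cref{lem:properwalkage} works because two labeled positions carrying the \emph{same color} can have their label values swapped to produce a distinct walkage with the same monomial; this is precisely how the algorithm restricts to proper walkages. If you replace ``same color'' by ``selected vectors are linearly dependent,'' there is no longer a canonical pair of labels to swap, and the bijective-labeling cancellation does not transfer. Similarly, the involution $\phi$ of \Cref{def:invophi} and all of \Cref{sec:proofinvo} are built around $\col(\fw)$ being a set of distinctly-colored vertices, not around rank conditions. You would have to redesign both cancellations for the span setting, and you do not do so; instead you assert that the needed algebraic certificate exists ``exactly as in the proof of \Cref{thm:main-frameworks},'' but that proof contains no such certificate. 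So as written the proposal does not constitute a proof. If you do want a span-tracking DP, you would need to supply its cancellation lemma from scratch; the paper's actual route (enumerate the $q^{k^2}$ candidate bases, then reduce to the colored case) avoids this entirely, which is why it is the one that appears in the paper.
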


Note that \Cref{thm:weightedframeworkthm} implies \Cref{thm:main-frameworks} by setting all vertex weights to $1$ and $w=k$.

Finally, we remark that even though the correctness argument of our algorithm is technical, the algorithm itself is simple and practical, consisting of only simple dynamic programming over walks in the graph.
In particular, the observed practicality of the algorithm of Bj\"orklund, Husfeldt, and Taslaman~\cite{BjorklundHT12} for \probTCycle on graphs with thousands of vertices holds also for our algorithm.

\medskip
\noindent\emph{Organization of the paper.}
The rest of the paper is organized as follows.
In \Cref{section:overview} we overview our techniques and outline our algorithms.
In \Cref{sec:prelim} we recall definitions and preliminary results.
In \Cref{sec:theta} we prove the main result, i.e., \Cref{thm:weightedmain} (recall that \Cref{thm:weightedmain} implies \Cref{thm:maintheorem}).
In \Cref{sec:extframeworks} we give the extensions of our results from colored graphs to frameworks, i.e., \Cref{thm:weightedframeworkthm}.
In \Cref{sec:detKiril} we prove \Cref{thm:detmaintheorem}.
Finally, we conclude in \Cref{sec:concl}.


\section{Techniques and outline}\label{section:overview}
The techniques behind \Cref{thm:maintheorem} build on the idea of exploiting cancellation of monomials, a fundamental tool in the area \cite{DBLP:journals/siamcomp/Bjorklund14,BjorklundHKK17,BjorklundHT12,BjorklundKK16,Koutis08,KoutisW16,LingasP15,PanolanSZ19,Williams09}.
In particular, we build on the cycle-reversal-based cancellation for \probTCycle introduced by Bj\"orklund, Husfeldt, and Taslaman~\cite{BjorklundHT12}, and on the bijective labeling-based cancellation introduced by Bj\"orklund~\cite{DBLP:journals/siamcomp/Bjorklund14} (see also~\cite{BjorklundHKK17}).
The algorithm of \Cref{thm:detmaintheorem} builds on color-coding~\cite{AlonYZ95}, generalizing ideas that appeared in~\cite{FominGSS20} for finding an $(s,t)$-cycle of length at least $k$.

In \Cref{subsec:newtech} we explain the new ideas of the techniques behind \Cref{thm:maintheorem} in comparison to the earlier works, in \Cref{subsec:outline} we give a more detailed outline of the proof of \Cref{thm:maintheorem}, and in \Cref{subsec:detoutline} we give an outline of the proof of \Cref{thm:detmaintheorem}.

\subsection{New techniques for \Cref{thm:maintheorem}\label{subsec:newtech}}
Let us first focus on the single path case of \Cref{thm:maintheorem}, i.e., $p = |S| = |T| = 1$, corresponding to the question of finding a $k$-colored $(s,t)$-path.
Our algorithm is analogous to the algorithm of Bj\"orklund, Husfeldt, and Taslaman~\cite{BjorklundHT12} for \probTCycle, but instead of having the ``interesting set'' of vertices $T$ fixed in advance, our algorithm can choose any interesting set $X \subseteq V(G)$ of vertices of size $|X| = k$ included in the path ``on the fly'' in the dynamic programming over the walks.
In particular, our dynamic programming over walks can choose whether it gives a label to a vertex or not. 
This is the crucial difference to the earlier works where there would be a set of vertices $Y \subseteq V(G)$ fixed in advance so that a vertex of $Y$ would always be given a label if encountered in the walk and the vertices $V(G) \setminus Y$ would never be given labels~\cite{DBLP:journals/siamcomp/Bjorklund14,BjorklundHKK17,BjorklundHT12,PanolanSZ19}.
This would impose a limitation that because these algorithms work in time exponential in the number of labels used (i.e. $2^k$, where $k$ is the number of labels), the intersection of the found path with the set $Y$ would have to be bounded in the parameter.
This explains why the previous techniques could not yield an FPT-algorithm for \ProblemName{Maximum Colored Path}, as no such suitable set $Y$ can be fixed in advance.

Our on the fly labeling of vertices allows our algorithm to find paths that visit the same color multiple times, while still making sure that at least $k$ different colors are visited.
In particular, the interesting set $X \subseteq V(G)$ of vertices in the path that we want to label is any set of size $k$ that contains $k$ different colors.
While our dynamic programming is still a straightforward dynamic programming over walks, the main difficulty over previous works is the argument that if no solution exists, then the polynomial that we compute is zero, i.e., all unwanted walks cancel out.

First, the argument of cancellation in the case when two vertices of the same color are given a label is a now-standard application of the bijective labeling based cancellation of Bj\"orklund~\cite{DBLP:journals/siamcomp/Bjorklund14}.
Therefore, our main focus is on a cancellation argument for walks where $k$ vertices of different colors have been labeled.
Here, our starting point is the cycle reversal based cancellation argument for \probTCycle~\cite{BjorklundHT12}, but in our case significantly more arguments are needed.
In particular, the main difference to earlier works caused by the introduction of the on the fly labeling is that a vertex can occur in a walk as both labeled and unlabeled.
Very much oversimplified, this case is handled by a new label-swap cancellation argument, where a label is moved from a labeled occurrence of a vertex into an unlabeled occurrence of the vertex.
While in isolation this argument is simple, it causes significant complications when combining with the cycle reversal based cancellation, in particular because of the ``no labeled digons'' property we have to impose to the labeled walk.
However, we manage to combine these two arguments into a one very technical cancellation argument.

Then, let us move from one $(s,t)$-path to an $(S,T)$-\lkg.
This generalization of using cancellation of monomials to find multiple paths is foreshadowed by an algorithm for minimum cost flow by Lingas and Persson~\cite{LingasP15}.
However, their arguments are considerably simpler due to not having labels on the walks.

To find $(S,T)$-\lkgs, we use a similar dynamic programming to the one path case, extending the set of walks from $S$ to $T$ one walk at a time.
Here, we must introduce a new cancellation argument for the case when two different walks intersect.
This argument is again simple in isolation: take the intersection point of the two intersecting walks and swap the suffixes of them starting from this point.
First, to make sure that this operation does anything we need to make sure that the suffixes are not equal.
We do this by enforcing that the ending vertices of the walks are different already in the dynamic programming, which adds the extra $2^p$ factor to the time complexity.
The second complication is that again, this suffix swap operation does not play well together with the other cancellation arguments, and we need to again significantly increase the complexity of the combination of the three cancellation arguments.
In the end, we have to consider 18 different cases in our cancellation argument, see \Cref{def:invophi}.

The extension from \Cref{thm:maintheorem} to the weighted setting of \Cref{thm:weightedmain} is a simple modification of the dynamic programming so that also the weight of the labeled vertices $X$ is stored.
Interestingly, this argument could be extended to look for paths containing a set of vertices $X$ with any property of $X$ that could be efficiently evaluated in dynamic programming.

\subsection{Outline of \Cref{thm:maintheorem}\label{subsec:outline}}
We first give the outline of the algorithm for the single path case of finding a $k$-colored $(s,t)$-path, and then discuss the generalization to $(S,T)$-\lkg.

Superficially, our approach follows the one of Bj{\"{o}}rklund, Husfeldt, and Taslaman \cite{BjorklundHT12} developed for the \textsc{$T$-cycle} problem.
Similar to  Bj{\"{o}}rklund et al., for every length $\ell\geq 1$, we define a certain family of walks $\fcf$ and a polynomial $f(\fcf)$ so that over a finite field of characteristic 2, the polynomial $f(\fcf)$ is non-zero if the graph contains a $k$-colored $(s,t)$-path of length $\ell$, and the polynomial $f(\fcf)$ is the identically zero polynomial if the graph does not contain any $k$-colored $(s,t)$-path of length $\le \ell$. 
Then by making use of the DeMillo-Lipton-Schwartz-Zippel lemma~\cite{DBLP:journals/jacm/Schwartz80,DBLP:conf/eurosam/Zippel79}, finding a $k$-colored $(s,t)$-path of minimum length boils down to evaluating the polynomial $f(\fcf)$ at a random point for increasing values of $\ell$.

With $k$-colored path, the role similar to the role of terminal vertices in \textsc{$T$-cycle} is played by a subset $X$ of $k$ vertices of the path with $k$ different colors.
However, a priori we do not know this set $X$, and there could be $n^k$ candidates so we cannot enumerate them.
Because of that, we define the polynomial $f$ on families of \emph{labeled} $(s,t)$-walks in the graph $G$.
A labeled $(s,t)$-walk of length $\ell$ is a pair of sequences $W = ((v_1, \ldots, v_\ell), (r_1, \ldots, r_\ell))$, where $v_1, \ldots, v_\ell$ is an $(s,t)$-walk of length $\ell$, and $r_1, \ldots, r_\ell$ is a sequence of numbers from $[0, k]$ indicating a labeling.
The interpretation of the labeling is that $r_i = 0$ indicates that the index $i$ of the walk is not labeled, and $r_i \ge 1$ indicates that the index $i$ is labeled with the label $r_i$, with the interpretation that the vertex $v_i$ at this index is selected to the set $X$.

Next we present the definition of the polynomial $f$.
The polynomial $f$ is over GF($2^{3 + \lceil \log_2 n \rceil}$), which is a field of characteristic 2 and order $\ge 8n$.
With every edge $uv \in E(G)$ we associate a variable $f_e(uv)$, with every vertex $v \in V(G)$ we associate a variable $f_v(v)$, and with every color-label pair $(x,y) \in [n] \times [k]$ we associate a variable $f_c(x,y)$.
For a labeled walk $W = ((v_1, \ldots, v_\ell), (r_1, \ldots, r_\ell))$ we associate the monomial 
\[f(W) = \prod_{i=1}^{\ell-1} f_e(v_{i}v_{i+1}) \cdot \prod_{i \in [\ell] \mid r_i \neq 0} f_v(v_i) \cdot f_c(c(v_i), r_i).\]
For the family of walks ${\fcf}$, which we will define immediately, we are interested in the polynomial
\[f({\fcf}) = \sum_{W \in {\fcf}} f(W).\]

For vertices $s,t$ and integers $k,\ell$, the family $\fcf$ is the  family of all labeled $(s,t)$-walks \[W = ((v_1 = s, v_2, \ldots, v_\ell = t), (r_1, \ldots, r_\ell))\] of length $\ell$ that satisfy the following two properties. 
The first property is that the labeling $(r_1, \ldots, r_\ell)$ is bijective, meaning that every label from $[k]$ is used exactly once.
Note that this implies that every monomial of $f(\fcf)$ has degree $\ell-1+2k$, being a product of $\ell-1$ edge variables, $k$ vertex variables, and $k$ color-label pair variables.
The second property is that the labeled walk $W$ has no \emph{labeled digons}.
By that we mean that $W$ cannot have a subwalk $v_{i-1} v_i v_{i+1}$ with $v_i$ being a labeled vertex (with $r_i \neq 0$) and $v_{i-1}=v_{i+1}$.
It is not immediately clear that having no labeled digons is useful, but this will turn out to be crucial similarly to the property of having no $T$-digons in the algorithm for $T$-cycle~\cite{BjorklundHT12}.

It is not difficult to prove that when a graph has a $k$-colored $(s,t)$-path of length $\ell$, then $f(\fcf)$ is a non-zero polynomial.
Indeed, a path has no repeated vertices and thus has no labeled digons, so if we take a $k$-colored $(s,t)$-path $v_1, \ldots, v_\ell$ and let the labels $r_1, \ldots, r_\ell$ take the values from $[k]$ on $k$ vertices with $k$ different colors, then the labeled walk $W = ((v_1, \ldots, v_\ell), (r_1, \ldots, r_\ell))$ appears in $\fcf$, and thus a corresponding monomial $f(W)$ appears in $f(\fcf)$.
Because $v_1, \ldots, v_\ell$ is a path and the labeled vertices have different colors, we can recover the labeled walk $W$ uniquely from the monomial $f(W)$, and therefore the monomial $f(W)$ must occur exactly once in the polynomial $f(\fcf)$ (i.e. with coefficient $1$), and therefore $f(\fcf)$ is non-zero.

The proof of the opposite statement---absence of a $k$-colored $(s,t)$-path of length $\le \ell$ implies that  $f(\fcf)$ is zero---is more complicated.
We have to show that in this case each monomial $f(W)$ for labeled walks $W \in \fcf$ occurs an even number of times in the polynomial $f(\fcf)$, in particular that there is an even number of labeled walks $W \in \fcf$ for every monomial $f(W)$.
The proof is based on constructing an \emph{$f$-invariant fixed-point-free involution} $\phi$  on $\fcf$, that is a function $\phi : \fcf \rightarrow \fcf$ such that for every $W \in \fcf$ it holds that $f(W) = f(\phi(W))$, $\phi(W) \neq W$, and $\phi(\phi(W)) = W$.

Let us start with the easy part of the proof, that is, constructing such $\phi$ for labeled walks where two vertices of the same color are labeled (which could be two different occurrences of the same vertex).
In this case, let $1 \le i<j \le \ell$ be the lexicographically smallest pair of indices so that $c(v_i) = c(v_j)$, $r_i \neq 0$, and $r_j \neq 0$.
The function $\phi$ works by swapping $r_i$ with $r_j$.
Because each label from $[k]$ occurs in $r_1, \ldots, r_\ell$ exactly once, this results in a different labeled walk $\phi(W)$ with the same monomial $f(\phi(W)) = f(W)$, and moreover $W = \phi(\phi(W))$ holds.
After this argument, we can let $\fcfs \subseteq \fcf$ be the family of labeled walks in $\fcf$ where all labeled vertices have different colors, and we know that $f(\fcfs) = f(\fcf)$.
Therefore, we can focus only to constructing $\phi : \fcfs \rightarrow \fcfs$.

Now, the first approach (which does not work) would be to adapt the strategy of Bj{\"{o}}rklund, Husfeldt,  and Taslaman  for our purposes.
The essence of their strategy is the following. 
Since walks from $\fcfs$ do not have labeled digons and because there is no $k$-colored $(s,t)$-path of length $\le \ell$, it is possible to show that every walk $W \in \fcfs$ has a ``loop'', that is a subwalk $vUv$ starting and ending in the same vertex $v$, and so that $U$ is not a palindrome.
Then $\phi(W)$ is the walk $W'$ obtained from $W$ by reversing $U$.
This approach does not work directly in our case.
The reason is that a labeled vertex could also occur several times in a walk as unlabeled.  
Because of that, reversing a subwalk can result in a walk with a labeled digon, and thus $\phi$ could map $W$ outside the family $\fcfs$. 
For example, for a walk $abcd\hat{a}b$ (here $\hat{a}$ is a labeled vertex),   reversing  $a\overleftarrow{bcd}\hat{a}b$   results in walk  $adcb\hat{a}b$ with labeled digon $b\hat{a}b$.
A natural ``patch'' for that type of walks is to not reverse but to apply a new type of operation of swapping a label from one occurrence of a vertex to another occurrence of it. 
For example, swapping a label for $abcd\hat{a}b$ would result in $\hat{a}bcd{a}b$. 
This results in a different labeled walk contributing the same monomial $f(W)$ to the polynomial.
See~\autoref{fig:example1} for an illustration of the above examples.

\begin{figure}[ht]
\centering
\includegraphics[width=10cm]{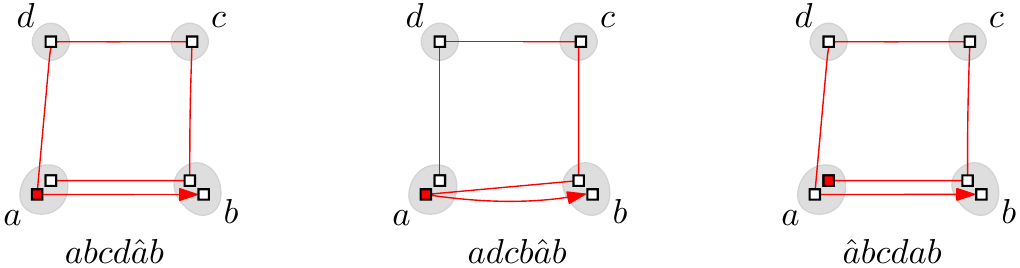}
\caption{An illustration of the walks $abcd\hat{a}b$, $a\protect\overleftarrow{bcd}\hat{a}b =adcb\hat{a}b$, and $\hat{a}bcd{a}b$.
The grey bags correspond to vertices of the graph. The squares are copies of the corresponding bag-vertex and these together with the red path illustrate the order the vertices appear in the walk. Red squares correspond to labeled vertices.}
\label{fig:example1}
\end{figure}

However,  the new operation of swapping a label brings us new problems. 
First of all, swapping a label could again result in a labeled digon. 
For example, swapping a label for  walk $\hat{a}bcac$ results in walk $abc\hat{a}c$ with labeled digon $c\hat{a}c$. 
An attempt to ``patch'' this by using a ``mixed'' strategy---when possible, swap a label, otherwise reverse---does not work either.
For example, for walk $W=\hat{a}bcac$ we cannot label swap (that will result in a labeled digon $c\hat{a}c$), hence we reverse. 
Thus we obtain walk   
$W'=\phi(W)=\hat{a}\overleftarrow{bc}ac=\hat{a}{cb}ac$. 
For $W'$,  swapping a label for $a$ is a valid operation, thus 
$\phi(W')=a{cb}\hat{a}c$, but then we would have that $\phi(\phi(W))\neq W$. See~\autoref{fig:example2} for an illustration of the above example.

\begin{figure}[ht]
\centering
\includegraphics[width=11cm]{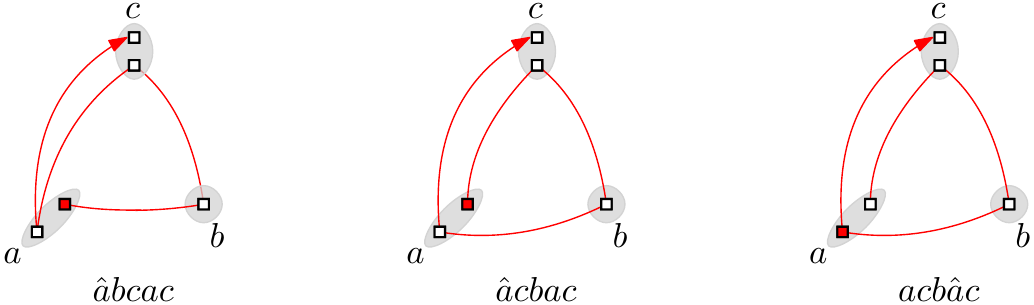}
\caption{An illustration of the walks $\hat{a}bcac$, $\hat{a}\protect\overleftarrow{bc}ac=\hat{a}{cb}ac$, and $a{cb}\hat{a}c$.
The grey bags correspond to vertices of the graph. The squares are copies of the corresponding bag-vertex and these together with the red path illustrate the order the vertices appear in the walk. Red squares correspond to labeled vertices.}
\label{fig:example2}
\end{figure}

At this moment, the situation becomes desperate: the more patches we introduce, the more issues appear and the whole construction falls apart.  
Moreover, on top of that,  one has to define the mapping $\phi$ recursively in order to deal with palindromic loops, making the situation even more complicated. 
We find it a bit surprising, that in the end a combination of label swaps and reverses allows the construction of the required mapping $\phi$. 
To make it happen, we use quite an involved strategy to identify what labels can be swapped and what subwalks could be reversed, and in what order. 
Whole \Cref{sec:proofinvo} is devoted to defining this strategy (for the more general setting of $(S,T)$-linkages) and to the proof of its correctness.

To evaluate the polynomial $f(\fcf)$, we apply quite standard dynamic programming techniques.
In particular, the polynomial can be evaluated in $2^k n^{\Oh(1)}$ time by dynamic programming over walks, where we store the length of the walk, the last two vertices of the walk, the subset of labels used so far (causing the $2^k$ factor), and whether the last vertex is labeled.
This is similar to the dynamic programming for \textsc{$T$-cycle}~\cite{BjorklundHT12}, with the difference only in that it is chosen in the dynamic programming which vertices of the walk are labeled, and that instead of a subset of $T$ we store the subset of the labels.

To extend the algorithm from a single $(s,t)$-path to an $(S,T)$-\lkg of order $p$, we define a family $\fcf$ of \emph{labeled \wlkgs} and a polynomial $f(\fcf)$ over them.
We note that by a simple reduction we can assume that $|S| = |T| = p$, and that $S$ and $T$ are disjoint.
A labeled \wlkg of order $p$ and total length $\ell$ is a $p$-tuple $\fw = (W^1, \ldots, W^p)$ of labeled walks $W^i$, whose sum of the lengths is $\ell$.
The family $\fcf$ contains labeled \wlkgs $\fw$ with the following properties:
They have order $p$, total length $\ell$, the starting vertices are ordered according to a total order on $V(G)$, ending vertices are distinct (each vertex in $T$ is an ending vertex of exactly one walk in $\fw$), the labeling is bijective (each label from $[k]$ is used exactly once), and no walk in $\fw$ contains a labeled digon.

The monomial $f(\fw)$ is then defined as
\[
f(\fw) = \prod_{i=1}^p f(W^i),
\]
and the polynomial $f(\fcf)$ as
\[
f(\fcf) = \sum_{\fw \in \fcf} f(\fw).
\]

The definitions are analogous to the single path case, in particular we recover the previously explained single path case by setting $p=1$.
The proof that if there exists a $k$-colored $(S,T)$-\lkg of order $p$ and total length $\ell$ then $f(\fcf)$ is non-zero is directly analogous to the one path case.
Also the proof that we can consider the smaller family $\fcfs \subseteq \fcf$ where all labeled vertices have different colors is analogous.

However, to prove that if there is no $k$-colored $(S,T)$-\lkgs of order $p$ and total length $\le \ell$ then $f(\fcfs)$ is identically zero we need new cancellation arguments beyond the previous cycle reversal and label swap arguments.
In particular, none of the previously considered arguments can be applied if we have a labeled \wlkg $\fw = (W^1, W^2)$ of order two, where both $W^1$ and $W^2$ are labeled paths that intersect.
In this case, the new argument is that we could swap the suffixes of $W^1$ and $W^2$ starting from the intersection point.
For example, for a \wlkg $\fw = (abc\hat{d}e, stcuv)$, we define $\phi(\fw) = (abcuv, stc\hat{d}e)$.
The property that the walks in $\fw$ have different ending vertices is crucial here to ensure that $\phi(\fw) \neq \fw$.

However, also with this suffix swap cancellation argument we run into problems.
In particular, the first challenge is that the suffix swap could create labeled digons, for example when $\fw = (ab\hat{c}de, stcbu)$ both of the walks are paths, but swapping the suffix after $c$ would create a labeled digon.
In this situation we can instead use the label swap operation on $c$, from the first walk to the second, but of course this will add again even more complications.
In the end, we manage to extend the strategy of $\phi$ from paths to \lkgs, but it makes the definition of $\phi$ even more complicated (see \Cref{def:invophi}, the path case uses the case groups A and C, while the \lkg case needs the addition of case groups B and D).

The dynamic programming for $(S,T)$-\lkg is similar to the $(s,t)$-path, extending the walks in the \wlkg one walk at the time.
It requires two new fields to store, the index of the walk that we are currently extending, and the subset of the ending vertices $T$ that have been already used.
Storing the used ending vertices causes the additional factor $2^p$ in the time complexity (as we can assume that $|T| = p$).

\subsection{Outline of \Cref{thm:detmaintheorem}\label{subsec:detoutline}}
Recall that the main difference to \Cref{thm:maintheorem} is that \Cref{thm:detmaintheorem} provides a deterministic algorithm that, moreover, works on directed graphs. The price is, however, that this algorithm is only suitable for the special case of finding an $(S, T)$-\lkg of length at least $k$, and the time complexity as a function of $k$ and $p$ is higher.
\Cref{thm:detmaintheorem} thus requires a completely different toolbox: the algorithm is based on ideas of random separation. Our result can be seen as a generalization of earlier works on finding paths and cycles of length at least $k$, the closest one being the result of Fomin et al.~\cite{FominGSS20} on finding an $(s, t)$-cycle of length at least $k$. Note that their result is stated for undirected graphs, and that the problem of finding an $(S,T)$-\lkg of order 2 and length at least $k$ is equivalent to the problem of finding an $(s,t)$-cycle of length at least $k$ (up to increasing $k$ by $2$) on undirected graphs.

Similarly to the earlier results, the case where the target $(S, T)$-\lkg is of length close to $k$ can be covered by a standard application of color-coding~\cite{AlonYZ95}. The difficulty is that the length of the $(S, T)$-\lkg can be arbitrarily larger than $k$. While because of that it would be intractable to highlight the target $(S, T)$-\lkg as a whole, it is still possible to apply random separation to give distinct colors to $k$-length segments at the end of each path in the $(S, T)$-\lkg. The main hurdle is then to argue that at least in one color we can pick a finishing segment as an arbitrary shortest path of length $k$, without intersecting any other path in the optimal solution. Afterwards, finding the desired $(S, T)$-\lkg is easy, as the length requirement is already satisfied; one only needs to find a suitable connection to complete the $(S, T)$-\lkg, which exists as witnessed by the optimal solution.

\begin{figure}[ht!]
    \centering
    \begin{subfigure}[b]{0.45\textwidth}
        \centering
        \includegraphics[width=\textwidth]{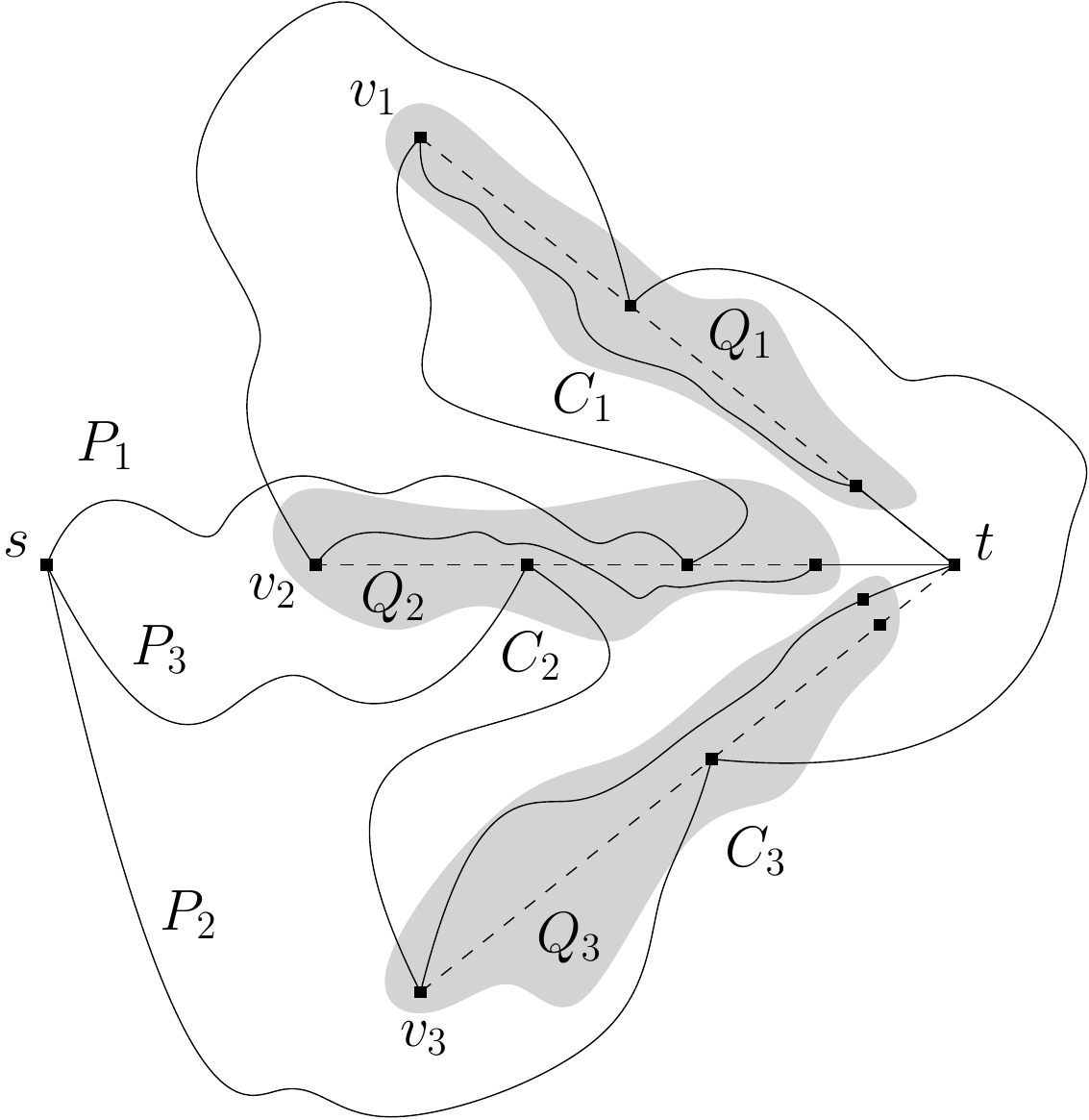}
        \caption{The given digraph $G$, edge directions are implicit along the paths $P_1$--$P_3$, and $Q_1$--$Q_3$.}
        \label{fig:tokens_a}
    \end{subfigure}
    \hfill
    \begin{subfigure}[b]{0.45\textwidth}
        \centering
        \includegraphics[width=\textwidth]{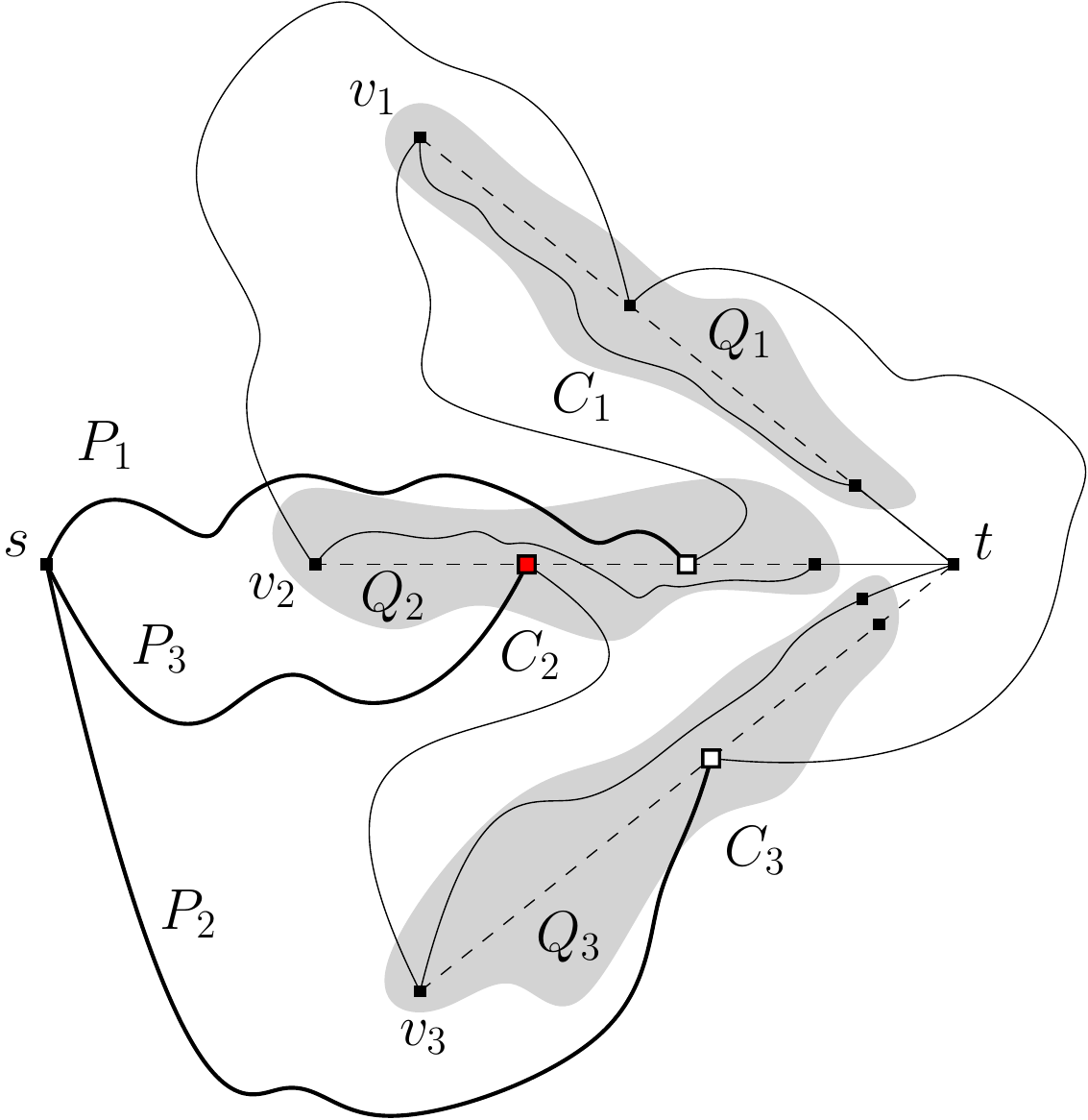}
        \caption{Starting state $S^1$: two tokens belong to $Q_2$ so \textbf{Push} is applied to $t^1_3$.}
        \label{fig:tokens_b}
    \end{subfigure}

    \begin{subfigure}[b]{0.45\textwidth}
        \centering
        \includegraphics[width=\textwidth]{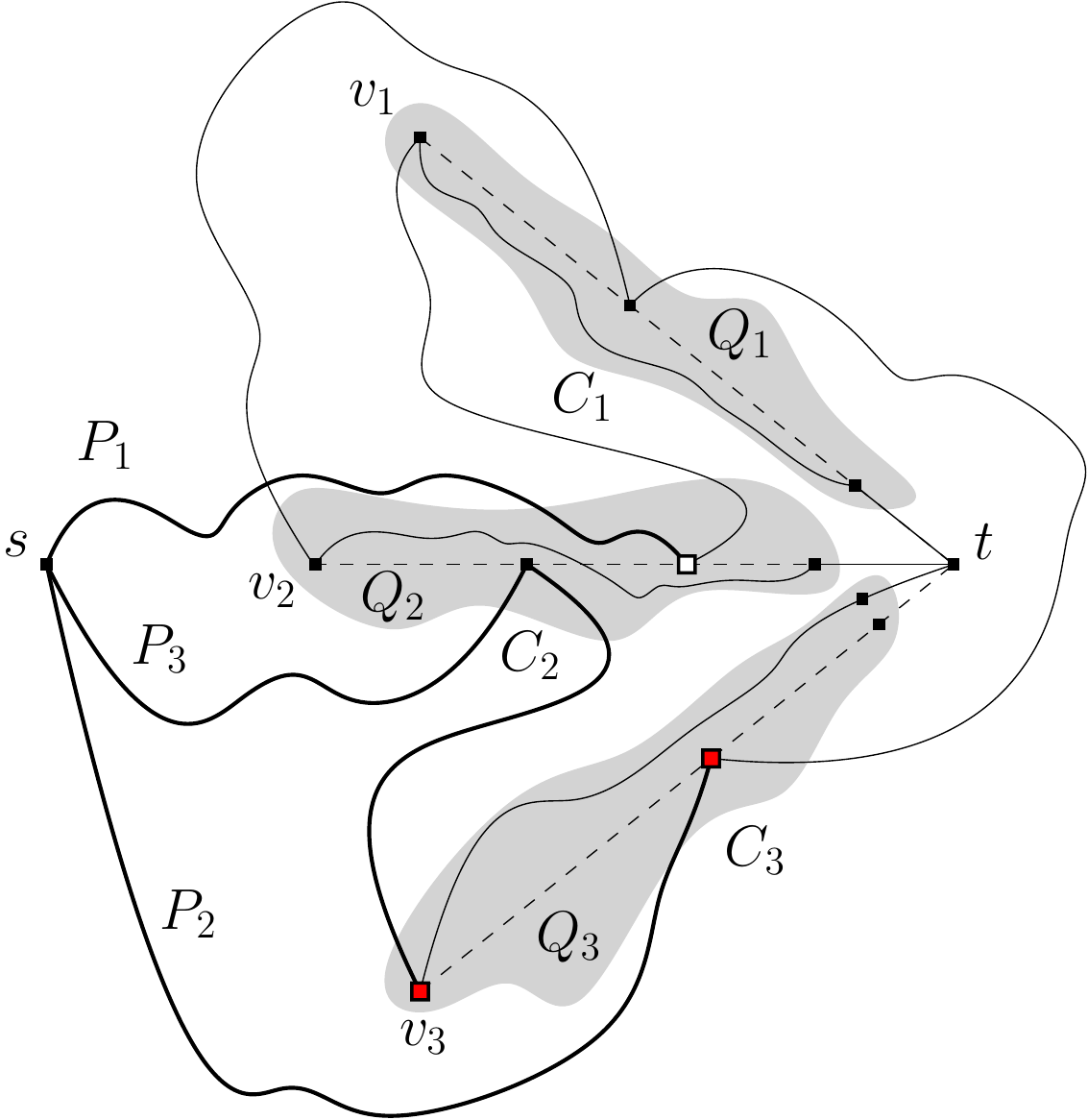}
        \caption{In $S^2$, $t^2_3 = v_3$ so \textbf{Clear} is applied to $Q_3$, moving $t^2_3$ to $t$ and $t^2_2$ next along $P_2$.}
        \label{fig:tokens_c}
    \end{subfigure}
    \hfill
    \begin{subfigure}[b]{0.45\textwidth}
        \centering
        \includegraphics[width=\textwidth]{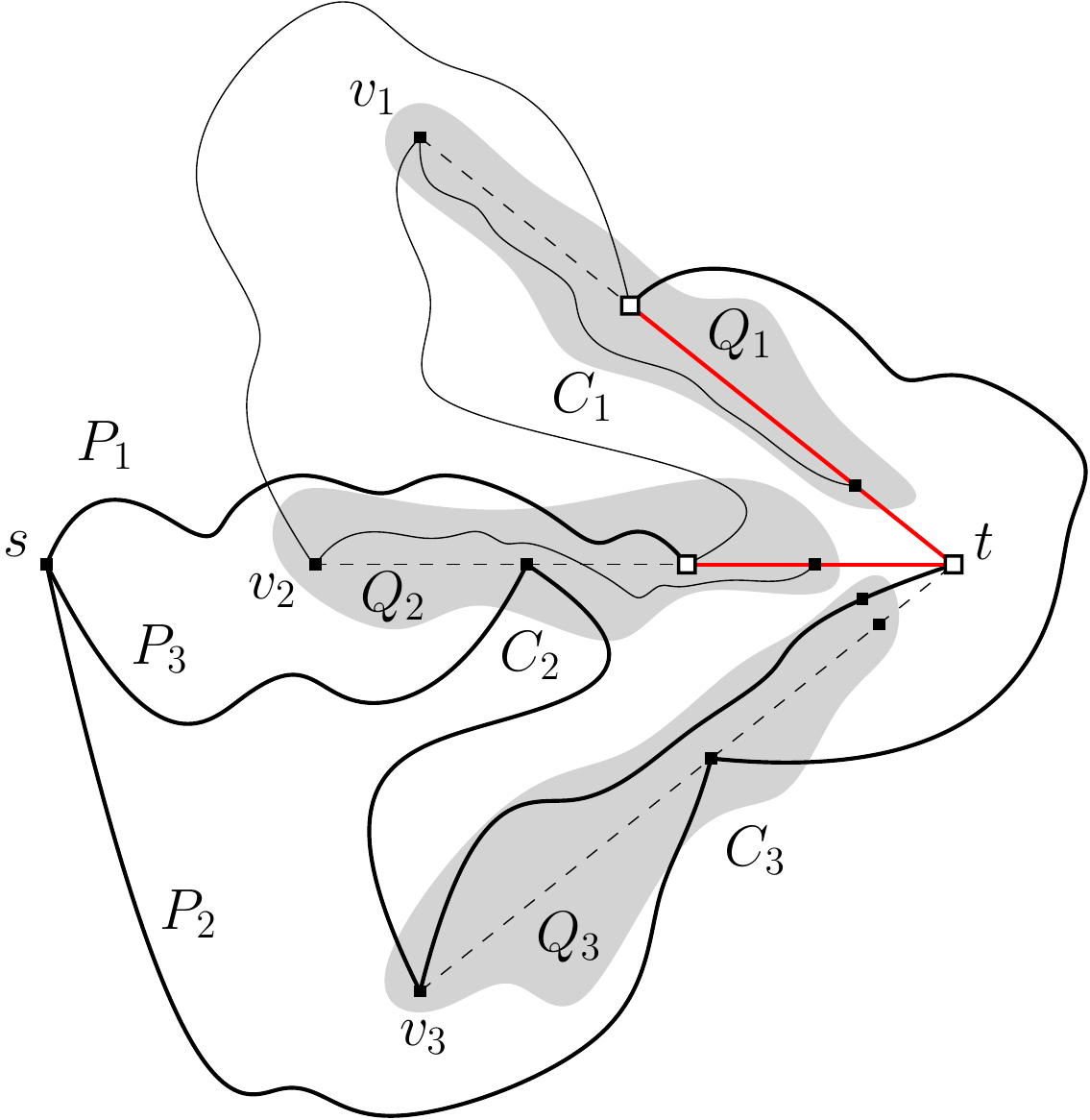}
        \caption{Finishing state $S^3$: $P_2$ is continued via $Q_1$, $P_1$ via $Q_2$, and $P_3$ is preserved to obtain the solution.}
        \label{fig:tokens_d}
    \end{subfigure}
    \caption{Illustration to the proof of \Cref{lemma:derandomize}. Large empty squares mark tokens in the current state; those with red filling are moved by the next rule.}
    \label{fig:tokens}
\end{figure}

\Cref{lemma:tokens} encapsulates the novel combinatorial result allowing the approach above, strongly generalizing a similar basic idea that appeared in~\cite{FominGSS20} for two undirected paths.
To give an intuition behind the lemma (see also \Cref{fig:tokens}), observe first that the problem of finding an $(S, T)$-\lkg of order $p$ and of total length at least $k$ is equivalent to the problem of finding an $(s, t)$\emph{-\lkg} of order $p$ and of total length at least $k + 2$, where an $(s, t)$-\lkg of order $p$ consists of $p$ internally-disjoint $(s, t)$-paths, for some $s, t \in V(G)$. Now let $(s, t)$-paths $P_1$, \ldots, $P_q$ come from the shortest solution, an $(s, t)$-\lkg of order $p$ and the smallest total length which is at least $k$. Let the sets $C_1$, \ldots, $C_q$ be the result of random separation applied to $k$-length suffixes of $P_1 - \{t\}$, \ldots, $P_q - \{t\}$, i.e., for each $i \in [q]$, the $k$-length suffix of $P_i - \{t\}$ is contained in $C_i$. The algorithm of \Cref{thm:detmaintheorem} seeks to find a solution where for some $i \in [q]$, $v_i$ is the $k$-th vertex of $P_i$ from $t$, and $Q_i$ is a $k$-length shortest path from $v_i$ to $t$ inside $C_i$, by guessing $v_i \in C_i$ and taking an arbitrary path $Q_i$ of the form above. The solution is then any collection of an $(s, v_i)$-path and $p - 1$ $(s, t)$-paths that do not intersect each other and $Q_i$, together with $Q_i$. If $Q_i$ does not intersect the $(s, v_j)$-prefix of $P_j$, for each $j \in [q]$, then the paths $P_1$, \ldots, $P_q$ certify that the desired collection of paths exists. Now comes \Cref{lemma:tokens}: it claims, roughly, that if this is not the case for all $i \in [q]$, then there is a shorter $(s, t)$-\lkg given by prefixes of $P_1$, \ldots, $P_q$ and suffixes of $Q_1$, \ldots, $Q_q$ (introducing another color to the random separation makes sure that the total length of the prefixes is still at least $k$), which is a contradiction. 

The proof of \Cref{lemma:tokens} can be imagined as the following token sliding game. First, we put a token on each $P_i$, at the first place of intersection with some $Q_j$. Then we move the tokens by applying two rules, \textbf{Push} and \textbf{Clear}. If two tokens end up on the same $Q_j$ for some $j \in [q]$, we move the farthest of them from $t$ further along its path $P_i$, until it hits another $Q_{j'}$; this is called \textbf{Push}. As for the \textbf{Clear}, if at any step $h$ the current token $t^h_i$ of the path $P_i$ reaches the vertex $v_i$, we forfeit this path: the token is moved to $t$, which corresponds to the $i$-th path of the shorter solution being exactly $P_i$, and all other tokens on $Q_i$ are moved next along their paths similarly to the rule \textbf{Push}. Moreover, every future application of any rule will not place a token on $Q_i$, skipping it to the next $Q_j$ that is still active. Clearly, this game is finite, as tokens are only being slid further along their paths. The main claim of \Cref{lemma:tokens} is that when the game is over, there is at least one remaining active token, these tokens are one per a path in $\{Q_j\}_{j \in [q]}$ (since \textbf{Push} is not applicable), and that all corresponding paths $P_i$ can be simultaneously extended each along its own $Q_j$ instead of taking their original routes, without intersections (since in \textbf{Push} we always keep the closest token to $t$). This is a shorter solution since a token of $P_i$, if active, is inside some $Q_j$ at distance less than $k$ from $t$, and the prefix of $P_i$ up to this token is shorter than the prefix of $P_i$ up to $v_i$.

Another challenge the proof of \Cref{thm:detmaintheorem} faces, is that while the random separation approach is well-known, it is normally applied to separating two, rarely three (e.g. \cite{FominGSS20}), sets. We, on the other hand, need to apply random separation to $p$ sets simultaneously, while making sure that it can be derandomized. To this end, in \Cref{lemma:derandomize} we devise in a deterministic way a family of functions that models random separation of $p$ sets of size at most $k$ each. The size of this family is bounded by $p^{\Oh(kp)} \log n$, which matches the inverse probability (up to the $\log n$ factor) of coloring the universe in $p$ colors uniformly at random so that each set receives its own color. The construction is based on perfect hash families~\cite{NaorSS95}.


\section{Preliminaries}\label{sec:prelim} 
In this section, we introduce basic notation and state some auxiliary results.

\subsection{Basic definitions and preliminary results}

We use $\mathbb{Z}_{\ge 1}$ to denote the set of positive integers and $\mathbb{Z}_{\ge 0}$ the set of non-negative integers. Also, given integers $p,q$ such that $p<q$, we use $[p,q]$ to denote the set $\{p,p+1,\ldots,q\}$ and, if $p\geq1$, we use $[p]$ to denote the set $\{1,\ldots,p\}$.

\medskip\noindent\emph{Parameterized Complexity.} We refer to the book of Cygan et al.~\cite{cygan2015parameterized} for introduction to the area. Here we only briefly mention the notions that are most important to state our results.  A \emph{parameterized problem} is a language $L\subseteq\Sigma^*\times\mathbb{N}$, where $\Sigma^*$ is a set of strings over a finite alphabet $\Sigma$. An input of a parameterized problem is a pair $(x,k)$, where $x$ is a string over $\Sigma$ and $k\in \mathbb{N}$ is a \emph{parameter}. 
A parameterized problem is \emph{fixed-parameter tractable} (or \classFPT) if it can be solved in time $f(k)\cdot |x|^{\Oh(1)}$ for some computable function~$f$.  
The complexity class \classFPT contains  all fixed-parameter tractable parameterized problems.

\medskip\noindent\emph{Graphs.}
We use standard graph-theoretic terminology and refer to the textbook of Diestel~\cite{Diestel12} for missing notions.
We consider only finite  graphs, and the considered graphs are assumed to be undirected if it is not explicitly said to be otherwise.  For  a graph $G$,  $V(G)$ and $E(G)$ are used to denote its vertex and edge sets, respectively. 
 Throughout the paper we use $n=|V(G)|=|G|$ and $m=|E(G)|$ if this does not create confusion. For a graph $G$ and a subset $X\subseteq V(G)$ of vertices, we write $G[X]$ to denote the subgraph of $G$ induced by $X$. 
For a vertex $v$, we denote by $N_G(v)$ the \emph{(open) neighborhood} of $v$, i.e., the set of vertices that are adjacent to $v$ in $G$. For $X\subseteq V(G)$, $N_G(X)=\big(\bigcup_{v\in X}N_G(v)\big)\setminus X$.
The \emph{degree} of a vertex $v$ is $d_G(v)=|N_G(v)|$. If $G$ is a digraph, $N_G^+(v)$ denotes the \emph{out-neighborhood} of $v$, i.e., the set of vertices that are adjacent to $v$ in $G$ via an arc from $v$, and $N_G^-(v)$ is the \emph{in-neighborhood}, defined symmetrically for arcs going to $v$. We may omit subscripts if the considered graph is clear from a context.

A walk $W$ of length $\ell$ in $G$ is a sequence of vertices $v_1, v_2, \ldots, v_\ell$, where $v_i v_{i+1} \in E(G)$ for all $1 \le i < \ell$.
The vertices $v_1$ and $v_\ell$ are the \emph{endpoints} of $W$ and the vertices $v_2,\ldots,v_{\ell-1}$ are the \emph{internal} vertices of $W$.
A path is a walk where no vertex is repeated.
For a path $P$ with endpoints $s$ and $t$, we say that $P$ is an $(s,t)$-path. 
A cycle is a path with the additional property that $v_\ell v_1 \in E(G)$ and $\ell \ge 3$.

\medskip\noindent\emph{DeMillo-Lipton-Schwartz-Zippel lemma.}
Our strategy involves the use the DeMillo-Lipton-Schwartz-Zippel lemma for randomized polynomial identity testing.
 
\begin{lemma}[\cite{DBLP:journals/jacm/Schwartz80,DBLP:conf/eurosam/Zippel79}]
\label{lem:schwartzzippel}
Let $p(x_1, \ldots, x_n)$ be a non-zero polynomial of total degree $d$ over a field $\mathbb{F}$, and let $S$ be a subset of $\mathbb{F}$.
If each $x_i$ is independently assigned an uniformly random value from $S$, then $p(x_1, \ldots, x_n) = 0$ with probability at most $d/|S|$.
\end{lemma}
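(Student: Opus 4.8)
The statement to prove is \Cref{lem:schwartzzippel}, the DeMillo--Lipton--Schwartz--Zippel lemma: a non-zero polynomial $p(x_1,\dots,x_n)$ of total degree $d$ over a field $\mathbb{F}$, evaluated at a point chosen uniformly at random from $S^n$ for a finite $S\subseteq\mathbb{F}$, vanishes with probability at most $d/|S|$. The plan is to proceed by induction on the number of variables $n$, using the classical single-variable root bound as the base case and a ``coefficient-extraction'' argument to reduce the inductive step to that base case.

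First I would dispose of the base case $n=1$. Here $p(x_1)$ is a non-zero univariate polynomial of degree at most $d$ over a field, hence it has at most $d$ roots in $\mathbb{F}$, so at most $d$ of the $|S|$ candidate values in $S$ can be roots; the failure probability is therefore at most $d/|S|$. (If one wants $n=0$ as the base case, a non-zero constant never vanishes, and $0\le d/|S|$ trivially.)

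For the inductive step, assume the claim for polynomials in fewer than $n$ variables. Write $p$ as a polynomial in $x_n$ with coefficients in $\mathbb{F}[x_1,\dots,x_{n-1}]$:
\[
p(x_1,\dots,x_n)=\sum_{i=0}^{t} p_i(x_1,\dots,x_{n-1})\, x_n^i,
\]
where $t\le d$ is the largest index with $p_t\not\equiv 0$; such $t$ exists since $p\not\equiv 0$. Note $\deg p_t \le d-t$. Sample $x_1,\dots,x_{n-1}$ first. Let $E_1$ be the event that $p_t(x_1,\dots,x_{n-1})=0$; by the inductive hypothesis applied to the non-zero polynomial $p_t$ in $n-1$ variables, $\Pr[E_1]\le (d-t)/|S|$. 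Conditioned on $E_1$ not occurring, the univariate polynomial $q(x_n):=p(x_1,\dots,x_{n-1},x_n)$ has its leading coefficient $p_t(x_1,\dots,x_{n-1})$ non-zero, hence $q$ is a non-zero polynomial of degree exactly $t$; by the base case, $\Pr[\,q(x_n)=0 \mid \overline{E_1}\,]\le t/|S|$. Combining via a union bound over ``$E_1$ happens'' and ``$E_1$ does not happen but $q(x_n)=0$'':
\[
\Pr[p=0]\le \Pr[E_1] + \Pr[\,q(x_n)=0 \mid \overline{E_1}\,] \le \frac{d-t}{|S|}+\frac{t}{|S|}=\frac{d}{|S|},
\]
which completes the induction.

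The only genuinely delicate point—hardly an ``obstacle'' but worth stating carefully—is the correct choice of which variable to peel off and the bookkeeping of degrees: one must track that $\deg p_t\le d-t$ so that the two contributions $(d-t)/|S|$ and $t/|S|$ add to exactly $d/|S|$, and one must be careful that the conditioning on $\overline{E_1}$ is compatible with applying the base-case root bound on the remaining independent variable $x_n$ (which it is, since $x_n$ is sampled independently of $x_1,\dots,x_{n-1}$). Everything else is routine, and no result beyond the fact that a field is an integral domain (so a degree-$t$ univariate polynomial has at most $t$ roots) is needed.
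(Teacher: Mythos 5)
Your proof is correct and is the standard textbook argument for the DeMillo--Lipton--Schwartz--Zippel lemma. The paper does not prove this lemma itself; it merely cites \cite{DBLP:journals/jacm/Schwartz80,DBLP:conf/eurosam/Zippel79} and uses it as a black box, so there is no in-paper proof to compare against. Your induction on the number of variables, with the degree bookkeeping $\deg p_t \le d-t$ and the union bound over the event $E_1$ and its complement, is exactly the classical route and is sound.
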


\subsection{Hardness results}
We conclude this section by showing the  \classNP-hardness of finding a $k$-colored $(s,t)$-path on directed graphs, for any $k \ge 2$, and the optimality of the time complexity of \Cref{thm:maintheorem} assuming the Set Cover Conjecture (SeCoCo) of Cygan~et~al.~\cite{CyganDLMNOPSW16}.

We start with the hardness for directed graphs.
\begin{proposition}\label{prop:hard-directed}
For any integers $k,\ell\geq 2$, it is \classNP-complete to decide, given a directed graph $G$, a coloring $c\colon V(G)\rightarrow[\ell]$, and two vertices $s$ and $t$, whether $G$ has a $k$-colored $(s,t)$-path.  
\end{proposition}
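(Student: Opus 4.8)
The plan is to reduce from a known \classNP-hard path problem in directed graphs that already forces the path to avoid certain structure. A natural candidate is the \probkPath problem on directed graphs, or even the problem of deciding whether a directed graph has a Hamiltonian $(s,t)$-path, which is well known to be \classNP-complete. I would first handle the base case $k = \ell = 2$ and then explain how to pad the construction to obtain arbitrary $k, \ell \ge 2$.

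For the base case, I would take an instance of directed Hamiltonian path with designated endpoints $s_0, t_0$ on a digraph $G_0$ with $n_0$ vertices. The goal is to build a $2$-colored digraph $G$ together with $s, t$ so that a $2$-colored $(s,t)$-path in $G$ exists if and only if $G_0$ has a Hamiltonian $(s_0, t_0)$-path. The idea is to color $s_0$ and all of $V(G_0)$ with color $1$, append a ``gadget'' that can only be traversed after visiting every vertex of $G_0$, and give the gadget's relevant vertex color $2$. Concretely, one natural construction: make $\ell - 1$ (here $1$) extra copies, or use a layered gadget that counts the number of visited vertices. Actually the cleanest route is the classical reduction from \emph{Directed Hamiltonian Path} to \emph{Longest Path}: subdivide or attach a counter so that reaching the color-$2$ vertex $t$ forces having passed through $n_0$ distinct vertices of $G_0$; since $G$ restricted to the color-$1$ part is essentially $G_0$, passing through $n_0$ distinct color-$1$ vertices means traversing a Hamiltonian path of $G_0$. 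The point of \Cref{prop:hard-directed} not extending \Cref{thm:maintheorem} to digraphs is precisely that in directed graphs one cannot freely reverse subwalks, so here I only need the reduction, not any structural lemma.

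To get general $k$ and $\ell$: for the color bound, add $k-2$ pendant-like mandatory vertices, each with its own fresh color $3, 4, \dots, k$, arranged in a directed path that must be traversed (say, inserted between $t$ and a new sink $t'$), so that any $(s,t')$-path collects exactly those $k-2$ extra colors plus colors $1$ and $2$, giving $k$ colors iff the original instance is positive. For the number-of-colors-available bound $\ell$: if $\ell > k$ we may simply add $\ell - k$ isolated vertices (or vertices reachable from nowhere) carrying the unused colors, which does not affect which $(s,t')$-paths exist. Membership in \classNP{} is immediate, since a path is a polynomial-size certificate and checking the color count is trivial.

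The main obstacle I expect is designing the gadget that forces ``$n_0$ distinct color-$1$ vertices were visited'' robustly in a \emph{directed} graph while using only color $1$ on that part — i.e., ensuring that a positive $(s,t)$-path of $G$ cannot ``cheat'' by revisiting vertices or by reaching the color-$2$ vertex through a shortcut. The standard fix is to route the only arc(s) into the color-$2$ vertex through the end of a long mandatory directed path/counter that is edge-disjoint from $G_0$ and whose only entry point is $t_0$; combined with the fact that a path cannot repeat vertices, this forces exactly a Hamiltonian traversal of $G_0$. Verifying this equivalence carefully (both directions, and that no spurious short $(s,t)$-path exists) is the crux of the argument; everything else is bookkeeping.
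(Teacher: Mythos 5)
Your base‑case reduction is not correct, and the gap is not just a matter of gadget engineering — the approach cannot work. In the $2$-colored $(s,t)$-path problem the only constraint on the path is that it visit two distinct colors; there is no length constraint. In your construction the whole of $G_0$ has color $1$ and some vertex of the gadget has color $2$, so \emph{any} $(s,t)$-path that reaches a color-$2$ vertex is already $2$-colored, regardless of how many vertices of $G_0$ it traversed. Your proposed counter/mandatory-path gadget, being edge-disjoint from $G_0$ with unique entry point $t_0$, only forces the path to reach $t_0$ before leaving $G_0$; a short $(s_0,t_0)$-path in $G_0$ (if one exists) then extends to a $2$-colored $(s,t)$-path. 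The only way a ``counter'' could force $n_0$ distinct color-$1$ visits would be to somehow make the number of distinct visited vertices observable through colors — but with only one color on $V(G_0)$ nothing distinguishes a Hamiltonian traversal from a two-vertex traversal. There is no polynomial-size gadget that repairs this, because such a gadget would effectively have to encode the cardinality constraint that is the whole source of hardness in Hamiltonian path.

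The paper's proof is much more direct and exploits a different phenomenon: it reduces from \textsc{$2$-Disjoint Paths} on digraphs (NP-complete by Fortune, Hopcroft, and Wyllie). Given $(G,(s_1,t_1),(s_2,t_2))$, add a new vertex $w$ with arcs $t_1\to w$ and $w\to s_2$, give $w$ color $1$ and every other vertex color $2$. A $2$-colored $(s_1,t_2)$-path must visit $w$, and since $w$'s only in-arc is from $t_1$ and only out-arc is to $s_2$, and since the path repeats no vertices, its $s_1$-to-$t_1$ portion and $s_2$-to-$t_2$ portion are vertex-disjoint — i.e., a solution to the disjoint-paths instance, and conversely. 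The reason this works is that ``the path must pass through a designated vertex'' is exactly a vertex-disjointness constraint in the directed setting, not a cardinality constraint; that is the correct primitive to reduce from. Your padding strategy for larger $k$ (a mandatory chain of fresh-color vertices appended at the end) and for larger $\ell$ (unreachable dummy-colored vertices) is fine and essentially what ``straightforward to generalize'' means in the paper, but the base case must be replaced by something along the lines above.
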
 
 \begin{proof}
 We show the claim for $k=\ell=2$ as it is straightforward to generalize the proof for other values of $k$ and $\ell$.   We reduce from the \textsc{Disjoint Paths} problem on directed graphs.  The task of this problem is, given a (directed) graph $G$ and $k$ pairs of terminal vertices $(s_i,t_i)$ for $i\in\{1,\ldots,k\}$, decide whether $G$ has vertex-disjoint $(s_i,t_i)$-paths for $i\in\{1,\ldots,k\}$. This problem is well-known to be \classNP-complete on directed graphs even if $k=2$~\cite{FortuneHW80}. 
 Consider an instance $(G,(s_1,t_1),(s_2,t_2))$ of \textsc{Disjoint Paths}, where $G$ is a directed graph. We assume that the terminal vertices are pairwise distinct. We construct the directed graph $G'$ from $G$ by adding a vertex $w$ and arcs $(t_1,w)$ and $(w,s_2)$. Note that $G$ has vertex-disjoint $(s_1,t_1)$ and $(s_2,t_2)$-paths if and only if $G'$ has an $(s_1,t_2)$-path containing $w$. We define the coloring $c$ by setting $c(w)=1$ and defining $c(v)=2$ for all $v\in V(G')\setminus\{w\}$. Clearly, $G'$ has a $2$-colored $(s_1,t_2)$-path if and only if $G'$ has an $(s_1,t_2)$-path containing $w$. This immediately implies \classNP-hardness. 
 \end{proof}

Then, we show that \Cref{thm:maintheorem} is optimal assuming the Set Cover Conjecture.

\begin{proposition}\label{prop:hard-secoco}
If there is a $(2-\varepsilon)^k n^{\Oh(1)}$ time algorithm for finding a $k$-colored path in a $k$-colored graph for some $\varepsilon > 0$, then there is a $(2-\varepsilon)^n (mn)^{\Oh(1)}$ time algorithm for \textsc{Set Cover}.
\end{proposition}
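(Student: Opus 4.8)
The plan is to reduce \textsc{Set Cover} on a universe $U$ of size $n$ and a family of sets $\mathcal{F}=\{F_1,\ldots,F_m\}$ to the problem of finding a $k$-colored path in a $k$-colored graph, with $k=n$. I will construct a graph $G$ in which every $(s,t)$-path corresponds to a sequence of chosen sets, and the colors seen along the path are exactly the elements covered by those sets. Since \textsc{Set Cover} asks to cover all $n$ elements, the target becomes an $n$-colored path.

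First I would build a ``layered gadget'' graph. Create vertices $s=z_0, z_1, \ldots, z_m=t$, thought of as checkpoints, and for each set $F_j$ install a gadget connecting $z_{j-1}$ to $z_j$ that offers two parallel options: a ``skip'' edge $z_{j-1}z_j$ (set $F_j$ not chosen), and a ``traversal path'' from $z_{j-1}$ to $z_j$ that passes through one vertex for each element of $F_j$ (set $F_j$ chosen). The element-vertex for $u\in F_j$ gets color $u$ (identifying $U$ with $[n]$), while all the checkpoint vertices $z_0,\ldots,z_m$ and any internal connector vertices get a single dummy color; but since we want a $k$-colored graph with exactly $k=n$ colors, I instead fold the dummy color into one of the existing element colors, or—cleaner—add one extra element to the universe and one extra mandatory set covering it, so that the dummy color is forced and $k=n+1$ (this only changes $k$ by an additive constant, which is absorbed into the $(2-\varepsilon)^k$ bound after adjusting $\varepsilon$). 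An $(s,t)$-path in $G$ is forced to go through the checkpoints in order, and in each gadget it either skips or traverses; the set of non-dummy colors it collects is exactly $\bigcup_{j\in J}F_j$ where $J$ indexes the traversed gadgets. Hence $G$ has an $n$-colored $(s,t)$-path iff $\mathcal{F}$ has a subcover, and the minimum such is irrelevant here (we only need the decision version). To turn the $(s,t)$-path version into the ``$k$-colored path anywhere'' version stated in the proposition, note that $s$ and $t$ have degree one in the construction (attach pendant vertices if needed), so any $n$-colored path must have $s$ and $t$ as its endpoints; alternatively one reduces via standard gadgeteering. The number of vertices of $G$ is $O(\sum_j |F_j| + m) = (mn)^{\Oh(1)}$, and the number of colors is $n+\Oh(1)$.

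Now suppose there is a $(2-\varepsilon)^k N^{\Oh(1)}$ time algorithm for finding a $k$-colored path in a $k$-colored $N$-vertex graph. Running it on $G$ with $k=n+\Oh(1)$ and $N=(mn)^{\Oh(1)}$ solves \textsc{Set Cover} in time $(2-\varepsilon)^{n+\Oh(1)}(mn)^{\Oh(1)} = (2-\varepsilon')^n (mn)^{\Oh(1)}$ for any $\varepsilon' < \varepsilon$ (absorbing the constant shift in the exponent into the polynomial factor, or slightly shrinking the base), contradicting SeCoCo. This proves the proposition.

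The main obstacle—really the only delicate point—is ensuring the correspondence between paths and subcovers is \emph{exact}: we must prevent an $(s,t)$-path from ``cheating'' by revisiting a color through some unintended route, and we must make sure that traversing a gadget genuinely forces the path to collect all of $F_j$'s colors (not a subset), while the skip option collects none. This is handled by making each gadget's traversal a single induced path with no shortcuts, and by the checkpoint structure forcing monotone left-to-right progress; a short case analysis on the structure of $(s,t)$-paths in $G$ closes it. Everything else is routine bookkeeping on the parameter and the polynomial factors.
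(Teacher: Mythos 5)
Your reduction has a genuine gap: it never uses the cardinality bound $t$ of the \textsc{Set Cover} instance. In your layered construction every gadget has a ``skip'' edge, so an $(s,t)$-path may traverse any subset $J \subseteq [m]$ of the gadgets, and the path is $n$-colored iff $\bigcup_{j\in J}F_j = U$. Thus your graph has an $n$-colored path iff \emph{some} subfamily of $\mathcal{F}$ covers $U$ --- which is trivially decidable in linear time (just check whether $\bigcup_j F_j = U$). You have reduced a trivial problem, not \textsc{Set Cover}, and the proposition does not follow.

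The paper avoids this by replacing your ``$m$ gadgets with skip edges'' with ``$t$ identical stages with no skip edges.'' Each stage is a bundle of $m$ internally-disjoint $(a,b)$-paths, one per set, so traversing a stage forces the path to commit to exactly one set; chaining $t$ stages forces the path to choose exactly $t$ sets (with repetition allowed, which is harmless). This is the essential trick your reduction is missing: the size bound $t$ has to be encoded structurally, because the ``$k$-colored path'' objective alone gives you no way to limit how many sets a path chooses. Enforcing a cardinality constraint with skip edges would require something like edge weights or a length bound, neither of which the target problem provides; fixing the number of stages in advance is what makes the reduction go through. The rest of your write-up (colors correspond to elements, dummy color, parameter bookkeeping, size analysis) is fine and aligns with the paper's argument, but the reduction itself is incorrect as stated.
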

\begin{proof}
In the {\sc Set Cover} problem, we are given a universe $U$ of $n$ elements, a collection $\mathcal{S}$ of $m$ subsets of $U$, and an integer $t$ and we ask whether there is a collection $\mathcal{S}'\subseteq \mathcal{S}$ of size $t$ such that for every $u\in U$, there is a set $S\in\mathcal{S}'$ such that $u\in S$.

Given an instance $(U,\mathcal{S},t)$ of \textsc{Set Cover},
where $|U|=n$ and $\mathcal{S}=\{S_1,\ldots, S_m\}$,
we construct a graph $G$ as follows.
We first construct the graph $H$ by considering two vertices $a$ and $b$ and adding $m$ internally vertex-disjoint $(a,b)$-paths $P_{S_1},\ldots, P_{S_m}$, where for every $i\in[m]$,
the vertices in $P_{S_i}$ are bijectively mapped to the elements of $S_i$.
We call $a$ the \emph{source} of $H$ and $b$ the \emph{sink} of $H$.
We finally construct a graph $G$ that is obtained by considering $t$ copies $H_1,\ldots,H_t$ of $H$, for each $i\in[t-1]$, identifying the sink $b_i$ of $H_i$ with the source $a_{i+1}$ of $H_{i+1}$, and adding two new vertices $v$ and $v'$ of degree one, adjacent to $a_1$ and $b_t$ respectively.
See~\autoref{fig:prop2G} for an illustration of the construction of graph $G$.
Note that $t\leq m$ and $|V(G)|=(mn)^{\Oh(1)}$.

\begin{figure} 
\begin{center}
\begin{tikzpicture}
\node[circle,draw=black,fill=white,inner sep= 1.5pt,minimum width = 1pt,label={below:$v$}] (s) at (-0.8,0) {};

\node[circle,draw=black,fill=white,inner sep= 1.5pt,minimum width = 1pt,label={below:$a_1$}] (v1) at (0,0) {};

\node[circle,draw=black,fill=white,inner sep= 1.5pt,minimum width = 3pt] (v2) at (3,0) {};

\node[circle,draw=black,fill=white,inner sep= 1.5pt, minimum width = 1pt] (v3) at (6,0) {};

\node[circle,draw=black,fill=white,inner sep= 1.5pt,minimum width = 1pt] (vk) at (8,0) {};

\node[circle,draw=black,fill=white,inner sep=1.5pt,minimum width = 1pt,label={below:$~b_{t}$}] (vk1) at (11,0) {};

\node[circle,draw=black,fill=white,inner sep= 1.5pt,minimum width = 1pt,label={below:$v'$}] (t) at (11.8,0) {};

\draw (s) to (v1) (t) to (vk1);

\draw (v1) to [bend right = 20] 
node[pos=0.33,circle,fill=black,inner sep=1pt] {} node[pos=0.66,circle,fill=black,inner sep=1pt] {} 
(v2);
\draw (v1) to [bend right = 35]
 node[pos=0.25,circle,fill=black,inner sep=1pt] {} 
 node[pos=0.50,circle,fill=black,inner sep=1pt] {} node[pos=0.75,circle,fill=black,inner sep=1pt] {} 
 (v2);
\draw (v1) to [bend right = 50]
 node[pos=0.2,circle,fill=black,inner sep=1pt] {} node[pos=0.4,circle,fill=black,inner sep=1pt] {} 
node[pos=0.6,circle,fill=black,inner sep=1pt] {} 
node[pos=0.8,circle,fill=black,inner sep=1pt] {}
 (v2);

\draw (v1) to [bend left = 20]
node[pos=0.2,circle,fill=black,inner sep=1pt] {} 
node[pos=0.4,circle,fill=black,inner sep=1pt] {} 
node[pos=0.6,circle,fill=black,inner sep=1pt] {} 
node[pos=0.8,circle,fill=black,inner sep=1pt] {}
 (v2);
\draw (v1) to [bend left = 35]
node[pos=0.33,circle,fill=black,inner sep=1pt] {} 
node[pos=0.66,circle,fill=black,inner sep=1pt] {} 
 (v2);
\draw (v1) to [bend left = 50] 
node[pos=0.25,circle,fill=black,inner sep=1pt] {} 
node[pos=0.50,circle,fill=black,inner sep=1pt] {} 
node[pos=0.75,circle,fill=black,inner sep=1pt] {} 
 (v2);

\draw (v2) to [bend right = 20]
node[pos=0.33,circle,fill=black,inner sep=1pt] {} 
node[pos=0.66,circle,fill=black,inner sep=1pt] {} 
(v3);

\draw (v2) to [bend right = 35]
node[pos=0.25,circle,fill=black,inner sep=1pt] {} 
node[pos=0.50,circle,fill=black,inner sep=1pt] {} 
node[pos=0.75,circle,fill=black,inner sep=1pt] {} 
 (v3);

\draw (v2) to [bend right = 50]
node[pos=0.2,circle,fill=black,inner sep=1pt] {} 
node[pos=0.4,circle,fill=black,inner sep=1pt] {} 
node[pos=0.6,circle,fill=black,inner sep=1pt] {} 
node[pos=0.8,circle,fill=black,inner sep=1pt] {}
 (v3);

\draw (v2) to [bend left = 20] 
node[pos=0.2,circle,fill=black,inner sep=1pt] {} 
node[pos=0.4,circle,fill=black,inner sep=1pt] {} 
node[pos=0.6,circle,fill=black,inner sep=1pt] {} 
node[pos=0.8,circle,fill=black,inner sep=1pt] {}
(v3);

\draw (v2) to [bend left = 35] 
node[pos=0.33,circle,fill=black,inner sep=1pt] {} 
node[pos=0.66,circle,fill=black,inner sep=1pt] {} 
(v3);

\draw (v2) to [bend left = 50] 
node[pos=0.25,circle,fill=black,inner sep=1pt] {} 
node[pos=0.50,circle,fill=black,inner sep=1pt] {} 
node[pos=0.75,circle,fill=black,inner sep=1pt] {} 
(v3);

\draw (vk) to [bend right = 20] 
node[pos=0.33,circle,fill=black,inner sep=1pt] {} 
node[pos=0.66,circle,fill=black,inner sep=1pt] {} 
(vk1);

\draw (vk) to [bend right = 35] 
node[pos=0.25,circle,fill=black,inner sep=1pt] {} 
node[pos=0.50,circle,fill=black,inner sep=1pt] {} 
node[pos=0.75,circle,fill=black,inner sep=1pt] {} 
(vk1);

\draw (vk) to [bend right = 50] 
node[pos=0.2,circle,fill=black,inner sep=1pt] {} 
node[pos=0.4,circle,fill=black,inner sep=1pt] {} 
node[pos=0.6,circle,fill=black,inner sep=1pt] {} 
node[pos=0.8,circle,fill=black,inner sep=1pt] {}
(vk1);

\draw (vk) to [bend left = 20] 
node[pos=0.2,circle,fill=black,inner sep=1pt] {} 
node[pos=0.4,circle,fill=black,inner sep=1pt] {} 
node[pos=0.6,circle,fill=black,inner sep=1pt] {} 
node[pos=0.8,circle,fill=black,inner sep=1pt] {}
(vk1);

\draw (vk) to [bend left = 35]
node[pos=0.33,circle,fill=black,inner sep=1pt] {} 
node[pos=0.66,circle,fill=black,inner sep=1pt] {} 
 (vk1);

\draw (vk) to [bend left = 50]
node[pos=0.25,circle,fill=black,inner sep=1pt] {} 
node[pos=0.50,circle,fill=black,inner sep=1pt] {} 
node[pos=0.75,circle,fill=black,inner sep=1pt] {} 
 (vk1);

\draw [thick,decorate,
    decoration = {brace}] (-1.2,-0.65) to node[pos=0.5,label={left:$m$}] {}  (-1.2,0.65);
\node[label={$\vdots$}] () at (1.5,-0.4) {};
\node[label={$\vdots$}] () at (4.5,-0.4) {};
\node[label={$\cdots$}] () at (7,-0.4) {};
\node[label={$\vdots$}] () at (9.5,-0.4) {};

\node[label={below:$H_1$}] () at (1.5,-0.8) {};
\node[label={below:$H_2$}] () at (4.5,-0.8) {};
\node[label={below:$H_t$}] () at (9.5,-0.8) {};

\end{tikzpicture}
\caption{Construction of the graph $G$.
}\label{fig:prop2G}
\end{center}
\end{figure}
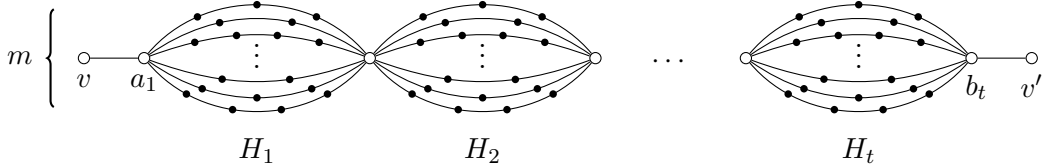

Assuming an ordering $u_1,\ldots,u_n$ of $U$,
for each $i\in[n]$, we assign color $i$ to all vertices of $G$ that correspond to $u_i$, color $n+1$ and $n+2$ to $v$ and $v'$, and color $n+3$ to all vertices in $V(G)\setminus\{v,v'\}$ that do not correspond to members of $U$.
Observe that $(U,\mathcal{S},t)$ is a yes-instance of \textsc{Set Cover} if and only if there is an $n+3$-colored path in $G$.
Therefore, a $(2-\varepsilon)^k n^{\Oh(1)}$ time algorithm for finding a $k$-colored path in a $k$-colored $n$-vertex graph implies the existence of a $(2-\varepsilon)^n (mn)^{\Oh(1)}$ time algorithm for finding a set cover of size $t$ in a universe $U$ of size $n$ with a collection $\mathcal{S}$ of $m$ subsets of $U$.
\end{proof}

 \section{Randomized algorithm for colored $(S,T)$-\lkgs \label{sec:theta}}
In this section we prove the main result, i.e., \Cref{thm:weightedmain}.
Recall that \Cref{thm:maintheorem} is a special case of \Cref{thm:weightedmain}.

Let $G$ be an $n$-vertex graph, $p$ an integer, and $S, T\subseteq V(G)$.
An $(S,T)$-\lkg of order $p$ is a set $\ps$ of $p = |\ps|$ vertex-disjoint paths between $S$ and $T$.
We denote by $V(\ps)$ the vertices in the paths of $\ps$.
The length of an $(S,T)$-\lkg is the total number $|V(\ps)|$ of vertices in the paths.
Let $c : V(G) \rightarrow [n]$ an arbitrary coloring of $G$, and $\we : V(G) \rightarrow \mathbb{Z}_{\ge 1}$ a weight function.
For positive integers $k$ and $w$, we say that an $(S,T)$-\lkg $\ps$ is $(k,w)$-colored if there exists a set $X \subseteq V(\ps)$ with $|X| = k$, all vertices of $X$ have different colors, and $\we(X) = \sum_{v \in X} \we(v)  = w$.
We give a $2^{p+k} n^{\Oh(1)} w$ time algorithm for the problem of finding a minimum length $(k,w)$-colored $(S,T)$-\lkg of order $p$ (\Cref{thm:weightedmain}).

We will assume that $|S| = |T| = p$, and $S$ and $T$ are disjoint, as the general case can be reduced to this case by the following reduction:
We add $p$ vertices $s_1, \ldots, s_p$ with $N(s_i) = S$ and $p$ vertices $t_1, \ldots, t_p$ with $N(t_i) = T$, all with the same new color and weight equal to $k \cdot \max_{v \in V(G)} \we(v) + 1$.
Then, we can set $S = \{s_1, \ldots, s_p\}$ and $T = \{t_1, \ldots, t_p\}$, and solve the problem with $k$ increased by one and $w$ increased by $k \cdot \max_{v \in V(G)} \we(v) + 1$.
Because we can assume that the original weights are at most $w+1$, this increases the target weight $w$ by a factor $\Oh(k)$, and therefore does not increase the time complexity of the algorithm.
 
\medskip

\subsection{Labeled walks and \wlkgs}
In this subsection we define labeled walks and labeled \wlkgs.

\medskip\noindent\emph{Labeled walks.}
Let $\ell$ be an integer.
A {\em walk of length $\ell$} in $G$ is a sequence of vertices $v_1, \ldots, v_\ell$ of $G$, where $v_{i} v_{i+1} \in E(G)$ for all $1 \le i < \ell$.
A {\em labeled walk of length $\ell$} is a pair of sequences $W = ((v_1, v_2, \ldots, v_\ell), (r_1, r_2, \ldots, r_\ell))$, where $v_1, \ldots, v_\ell$ is a walk of length $\ell$, and $r_1, \ldots, r_\ell$ is a sequence of integers from $[0,k]$, indicating a labeling.
The interpretation of the labeling is that $r_i = 0$ indicates that the index $i$ is unlabeled and $r_i \neq 0$ indicates that the index $i$ is labeled with the label $r_i \in [k]$.
A labeled walk is \emph{injective} if each label from $[k]$ appears in it at most once.
Most of the labeled walks that we treat in the algorithm have length at least one, but the definition allows also an empty labeled walk of length zero.
The set of vertices \emph{collected} by $W$ is $\col(W) = \{v_i \mid r_i \neq 0\}$, i.e., the set of vertices that occur at labeled indices.
The set of edges of $W$ is $E(W) = \{v_{i} v_{i+1} \mid 1 \le i < \ell\}$.
An index $i$ in a labeled walk of length $\ell$ is a \emph{digon} if $1 < i < \ell$ and $v_{i-1} = v_{i+1}$ (see~\autoref{fig:digon} for an illustration).
An index $i$ in a labeled walk is a \emph{labeled digon} if it is a digon and $r_i \neq 0$.
\begin{figure}[ht]
\centering
\begin{tikzpicture}
\node[fill=black,circle,inner sep=0pt,minimum size=5pt, label={above left:$v_{i-1}$}] (A) at (0,0) {};
\node[fill=black,circle,inner sep=0pt,minimum size=5pt, label={above right:$v_{i+1}$}] () at (0,0) {};
\node[fill=black,circle,inner sep=0pt,minimum size=5pt, label={above:$v_{i}$}] (B) at (0,1.5) {};
\node[fill=black,circle,inner sep=0pt,minimum size=5pt, label={above:$v_{i-2}$}] (V) at (-2,-0) {};
\node[fill=black,circle,inner sep=0pt,minimum size=5pt, label={above:$v_{i+2}$}] (U) at (2,-0) {};

\draw[dashed] ($(V)+(-1,0)$) -- (V.center);

\draw (V.center)  to   (A.center) to  (B.center) to (A.center) to (U.center);

\node[label={right:$v_{i}v_{i+1}$}] () at (0.1,0.8)  {};
\node[label={left:$v_{i-1}v_{i}$}] () at (-0.1,0.8)  {};

\draw[dashed] ($(U)+(1,0)$) -- (U.center);

\end{tikzpicture}
\caption{An example of a labeled walk $W$ with a digon $i$.}
\label{fig:digon}
\end{figure}
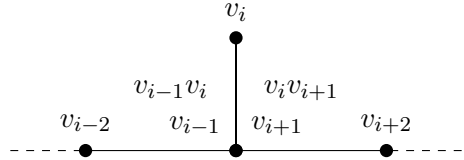

\medskip\noindent\emph{Labeled \wlkgs.}
A \emph{labeled \wlkg of order $p$} is a tuple $\fw = (W^1, \ldots, W^p)$, where each $W^i = ((v^i_1, \ldots, v^i_{\ell_i}), (r^i_1, \ldots, r^i_{\ell_i}))$ is a labeled walk of length $\ell_i \ge 1$.
The {\em length} of $\fw$ is $\sum_{i=1}^p \ell_i$.
The {\em set of edges} of $\fw$ is $E(\fw) = \bigcup_{i=1}^p E(W^i)$.
The set of vertices {\em collected} by $\fw$ is $\col(\fw) = \bigcup_{i=1}^p \col(W^i)$.
The weight $\we(\fw)$ of $\fw$ is the sum of the weights of the labeled vertices, i.e., $\we(\fw) = \sum_{i=1}^p \sum_{j \in [\ell_i] \mid r^i_j \neq 0} \we(v^i_j)$.
Note that the weight of a vertex can be counted more than once if the vertex occurs as labeled more than once.
A labeled \wlkg is \emph{injective} if each label from $[k]$ appears in it at most once, and \emph{bijective} if each label from $[k]$ appears in it exactly once.
Note that every labeled walk in an injective labeled \wlkg is injective.

The set of \emph{ending} vertices of a labeled \wlkg $\fw$ of order $p$ is $\ev(\fw) = \{v^i_{\ell_i} \mid i \in [p]\}$.
The tuple of \emph{starting} vertices of $\fw$ is $\svs(\fw) = (v^1_1, \ldots, v^p_1)$.
Let $<$ be a total order on $V(G)$.
A labeled \wlkg is \emph{ordered} if $\svs(\fw)$ is ordered according to $<$, i.e., $v^i_1 < v^{i+1}_1$ holds for all $1 \le i < p$.
The asymmetry that the starting vertices are an ordered tuple while the ending vertices are an unordered set is essential for our algorithm.
A \emph{labeled \lkg} is a labeled \wlkg where every vertex occurs at most once, i.e., the walks are vertex-disjoint paths.

We also define \emph{semiproper} and \emph{proper} labeled \wlkgs.
The intuition here is that,
in~\Cref{subsec:thm1:algo}, we define a polynomial  over semiproper \wlkgs (see also \Cref{def:familywalkages}). Then, \wlkgs that are semiproper but not proper will be handled by using standard techniques and therefore we can focus on proper \wlkgs.
Dealing with proper \wlkgs will be the most technical part of the proof.
A labeled \wlkg is \emph{semiproper} if it is injective, no walk in it contains labeled digons, and the ending vertices of the \wlkg are distinct, i.e., $v^i_{\ell_i} \neq v^j_{\ell_j}$ for $i \neq j$.
A labeled \wlkg $\fw$ is \emph{proper} if it is semiproper and all of its labeled indices correspond to vertices of different colors, i.e., $|c(\col(\fw))| = |\{(i,j) \mid r^i_{j} \neq 0\}|$.
Note that being proper implies that no vertex is labeled twice, and note that if $\fw$ is bijective and proper then $|c(\col(\fw))| = k$.

\subsection{Algorithm}\label{subsec:thm1:algo}
We assume that there is a total order $<$ on $V(G)$, and for a set $S \subseteq V(G)$ we denote by $\ordv(S)$ the tuple containing the elements of $S$ ordered according to $<$.
Note that $G$ contains a $(k,w)$-colored $(S,T)$-\lkg of order $p$ and length $\ell$ if and only if there is a bijective proper ordered labeled \lkg $\fw$ with order $p$, length $\ell$, weight $\we(\fw) = w$, tuple of starting vertices $\svs(\fw) = \ordv(S)$, and set of ending vertices $\ev(\fw) = T$.
We define a family of labeled \wlkgs that includes all such labeled \lkgs, but relaxes the condition of being a \lkg to \wlkg, and the condition of being proper to semiproper.

For each integer $\ell$, we define a family of labeled \wlkgs $\fcf$ of length $\ell$.

\begin{definition}[Family $\fcf$]\label{def:familywalkages}
Let $\ell$ a positive integer.
The family $\fcf$ consists of the bijective semiproper ordered labeled \wlkgs $\fw$ with order $p$, length $\ell$, weight $\we(\fw) = w$, tuple of starting vertices $\svs(\fw) = \ordv(S)$, and set of ending vertices $\ev(\fw) = T$.
\end{definition}

\medskip\noindent\emph{Definition of the polynomial.}
Let $q = 2^{3 + \lceil \log_2 n \rceil}$ and keep in mind that GF($q$) is a finite field of characteristic 2 and order $q \ge 8n$.
Next, we define a polynomial over GF($q$) that will be evaluated at a random point by our algorithm.
For each edge $uv \in E(G)$ we associate a variable $f_e(uv)$, for each vertex $v \in V(G)$ we associate a variable $f_v(v)$, and for each color-label-pair $(x,y) \in [n] \times [k]$ we associate a variable $f_c(x,y)$.
For a labeled walk $W = ((v_1, \ldots, v_\ell), (r_1, \ldots, r_\ell))$ we associate the monomial 
\[f(W) = \prod_{i=1}^{\ell-1} f_e(v_{i}v_{i+1}) \cdot \prod_{i \in [\ell] \mid r_i \neq 0} f_v(v_i) \cdot f_c(c(v_i), r_i).\]
For a labeled \wlkg $\fw = (W^1, \ldots, W^p)$ we associate the monomial
\[f(\fw) = \prod_{i=1}^p f(W^i).\]
For a family $\mathcal{F}$ of labeled \wlkgs we associate the polynomial
\[f(\mathcal{F}) = \sum_{\fw \in \mathcal{F}} f(\fw).\]

In particular, because the \wlkgs in $\fcf$ are bijective, every monomial in the polynomial $f(\fcf)$ has degree $\ell-p+2k$, being a product of $\ell-p$ variables corresponding to the edges of the \wlkg, $k$ variables corresponding to the labeled vertices, and $k$ variables corresponding to the color-label-pairs.

\medskip\noindent\emph{Algorithm for finding a $(k,w)$-colored $(S,T)$-\lkg.}
Our algorithm for finding a $(k,w)$-colored $(S,T)$-\lkg of order $p$ works as follows.
Starting with $\ell = p$, we evaluate the polynomial $f(\fcf)$ at a random point $x$ over GF($q$), for increasing values of $\ell$.
If $f(\fcf)(x) \neq 0$, we return that $G$ contains a $(k,w)$-colored $(S,T)$-\lkg of order $p$, and moreover that the shortest $(k,w)$-colored $(S,T)$-\lkg of order $p$ has length $\ell$.
Otherwise, we continue increasing $\ell$ until $\ell = n+1$ in which case we return that $G$ does not contain a $(k,w)$-colored $(S,T)$-\lkg of order $p$.
\medskip

For the proof of correctness of the algorithm, in \Cref{subsec:randcorr} we show that with probability at least $1/2$ this algorithm returns the length of the shortest $(k,w)$-colored $(S,T)$-\lkg of order $p$, and never returns a length shorter than the shortest $(k,w)$-colored $(S,T)$-\lkg of order $p$.

\medskip\noindent\emph{Proof of time complexity of the algorithm.}
Next we prove the time complexity of the algorithm.
The evaluation of the polynomial is done using dynamic programming.
This is a standard application of dynamic programming over walks while keeping track of the set of labels used so far, the weight of the labeled vertices, and the set of ending vertices used.
We prove that it can be performed in time $2^{p+k} n^{\Oh(1)}w$.

\begin{lemma}
\label{lem:randtime}
Let $S,T$ be disjoint subsets of $V(G)$ of size $|S|=|T|=p$, $c : V(G) \rightarrow [n]$ a coloring of $G$, $\we : V(G) \rightarrow \mathbb{Z}_{\ge 1}$ a weight function, $\ell \le n$ an integer, $k, w$ integers, and $q = 2^{3 + \lceil \log_2 n \rceil}$.
The polynomial $f(\fcf)$ can be evaluated at a random point over GF($q$) in time $2^{p+k} n^{\Oh(1)} w$.
\end{lemma}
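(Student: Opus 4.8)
The polynomial $f(\fcf)$ is a sum over labeled \wlkgs that are built walk-by-walk, left-to-right, so the natural approach is a standard DP over partial \wlkgs where a state records exactly the information needed to (a) decide how the partial object may be legally extended and (b) compute the monomial contribution incrementally. Concretely, a DP state will consist of: the index $i\in[p]$ of the walk currently being extended; the current last vertex $v$ of walk $W^i$ and the previous vertex $v'$ (needed to forbid labeled digons); a flag indicating whether the last index of $W^i$ is labeled; the set $L\subseteq[k]$ of labels used so far (the $2^k$ factor); the accumulated weight $w'\in[0,w]$ of labeled vertices so far (the factor $w$, after capping at $w$ since weights are positive); the set $U\subseteq T$ of ending vertices already committed by completed walks (the $2^p$ factor); and the current total length $\ell'\le\ell\le n$. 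The DP value is the sum over all partial \wlkgs matching this state of the product of the corresponding edge/vertex/color-label variables, evaluated at the random point. The number of states is $p\cdot n^2\cdot 2\cdot 2^k\cdot (w+1)\cdot 2^p\cdot (n+1) = 2^{p+k}\,n^{\Oh(1)}\,w$, and each transition (append a neighbor to $v$, optionally label it with a fresh label from $[k]\setminus L$, or, if we have reached length $\ell$ contributions from the start, close off $W^i$ provided $v\in T\setminus U$ and move to walk $i+1$) touches $\Oh(n)$ choices and does $\Oh(1)$ field operations, so the total running time is $2^{p+k}\,n^{\Oh(1)}\,w$ as claimed. Field arithmetic in GF($q$) with $q=2^{3+\lceil\log_2 n\rceil}$ costs $n^{\Oh(1)}$ per operation and is absorbed into the polynomial factor.

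\medskip
The key correctness points to verify are: first, that the constraints defining $\fcf$ — order $p$, length $\ell$, bijective labeling, semiproperness (injective labeling, no labeled digons, distinct ending vertices), ordered starting tuple $\svs(\fw)=\ordv(S)$, ending set $\ev(\fw)=T$, weight $\we(\fw)=w$ — are each \emph{local} in a sense compatible with the chosen state. The ordered-starting condition is handled by only initializing walk $W^i$ at the $i$-th smallest element of $S$ under $<$; injectivity and bijectivity of the labeling are enforced by tracking $L$ and requiring $L=[k]$ at the end; no-labeled-digons is a condition on the last three indices, so carrying $(v',v)$ and the ``last index labeled'' flag suffices to forbid it at the moment a new labeled vertex would be appended; distinctness of ending vertices is enforced by the set $U\subseteq T$, which we require to equal $T$ at the end; the weight and length constraints are read off from the accumulated $w'$ and $\ell'$. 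Second, one checks that the monomial $f(\fw)=\prod_i f(W^i)$ factors along these incremental steps exactly as the DP multiplies: each appended edge contributes one $f_e$ variable, each labeled vertex contributes $f_v(v)\cdot f_c(c(v),r)$, and the product over walks is exactly the product over the left-to-right construction. Hence the DP table entry for the ``all walks complete'' state with $L=[k]$, $U=T$, $\ell'=\ell$, $w'=w$ equals $f(\fcf)$ evaluated at the random point.

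\medskip
I do not expect a serious obstacle here; this lemma is deliberately the routine engineering step, with all the genuine difficulty deferred to the cancellation argument of \Cref{subsec:randcorr}. The one place that requires a moment's care is making sure the state is \emph{exactly} rich enough: in particular that ``no labeled digon'' can be detected with only the two most recent vertices and the labeled-flag (it can, since a labeled digon at index $j$ means $v_{j-1}=v_{j+1}$ with $r_j\neq 0$, so it is created precisely when we append a new vertex equal to $v'$ right after a labeled last vertex $v$, which the state sees), and that we never need to remember which vertices were collected, only the multiset-irrelevant data of labels/weights — this is fine because $\fcf$ only constrains injectivity of labels and the total weight, not which colors appear (properness, which does constrain colors, is not part of $\fcf$; it is imposed later). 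A second minor point is capping the weight accumulator at $w$: since all weights are in $\mathbb{Z}_{\ge 1}$, once $w'>w$ the partial \wlkg can never contribute, so we may drop such states, keeping the weight dimension of the table at $w+1$. With these observations the bound $2^{p+k}n^{\Oh(1)}w$ follows by multiplying the state count by the per-transition cost.
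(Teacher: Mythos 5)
Your proposal matches the paper's proof in all essentials: the same DP state (current walk index, last two vertices, labeled flag, used label set $L$, used ending vertices, accumulated weight, and length), the same use of the last two vertices plus the labeled flag to forbid labeled digons locally at the transition, the same handling of the ordered starting tuple by initializing walk $i$ only at the $i$-th smallest element of $\ordv(S)$, and the same weight cap to keep the table at $w+1$ entries in that dimension. The only cosmetic differences are that the paper writes out the transition recurrences explicitly (its Eqs.\ (1) and (2)) while you describe them in prose, and the paper packages the "all walks complete" quantity as a separate shorthand $D(t,l,L,T',w')$; the state space, correctness argument, and $2^{p+k}n^{\Oh(1)}w$ bound are the same.
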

\begin{proof}
We associate random values over GF($q$) to all variables $f_v(v)$, $f_e(uv)$, and $f_c(x,y)$, and from now denote by $f_v(v)$, $f_e(uv)$, and $f_c(x,y)$ these associated values, and by extension for a \wlkg $\fw$ denote by $f(\fw)$ the value associated to the monomial $f(\fw)$ and for a family of \wlkgs $\mathcal{F}$ denote by $f(\mathcal{F})$ the value associated to the polynomial $f(\mathcal{F})$.
Now, the task is to compute $f(\fcf)$.

Informally, we will compute $f(\fcf)$ by dynamic programming over partial \wlkgs, growing the \wlkgs one labeled walk at a time in the order specified by $\ordv(S)$.

Denote $\ordv(S) = (s_1, s_2, \ldots, s_p)$ and for any $t \in [p]$ denote by $\texttt{pre}_t(S)$ the length-$t$ prefix of $\ordv(S)$.
For every integer $t \in [p]$, integer $l \in [\ell]$, set $L \subseteq [k]$ of labels, set $T' \subseteq T$ of ending vertices, weight $w' \in [0,w]$, vertices $x,y \in V(G)$, and integer $o \in \{0,1\}$, we define
\[D(t, l, L, T', w', x, y, o) = f(\mathcal{F}(t, l, L, T', w', x, y, o)),\]
where we define $\mathcal{F}(t, l, L, T', w', x, y, o)$ to be the family of labeled \wlkgs $\fw = (W^1, \ldots, W_t)$, where for each $i \in [t]$, $W^i = ((v^i_1, \ldots, v^i_{\ell_i}), (r^i_1, \ldots, r^i_{\ell_i}))$,
that satisfy the following properties:
\begin{enumerate}
\item Each labeled walk $W^i$ in $\fw$ has length at least $2$ and does not contain labeled digons,
\item $\fw$ has order $t$ and ordered tuple of starting vertices $\svs(\fw) = \texttt{pre}_t(S)$,
\item $\fw$ has length $l$,
\item $\fw$ is injective and the set of used labels is $L$,
\item the set of ending vertices of all but the last walk in $\fw$ is $\ev((W^1, \ldots, W^{t-1})) = T'$,
\item $\fw$ has weight $\we(\fw) = w'$,
\item the last vertex of the last walk in $\fw$ is $v^t_{\ell_t} = x$,
\item the second last vertex of the last walk in $\fw$ is $v^t_{\ell_t-1} = y$, and
\item if $o=0$, then $r_{t,\ell_t} = 0$, otherwise $r_{t,\ell_t} \neq 0$.
\end{enumerate}

In other words, $t$ specifies the number of walks, $l$ specifies the length, $L$ specifies the used labels, $T'$ specifies the used ending vertices, $w'$ specifies the weight, $x$ specifies the last vertex of the last walk, $y$ specifies the second last vertex of the last walk, and $o$ specifies whether the last vertex of the last walk is labeled.
Note that it can be without loss of generality assumed that each walk has length at least $2$ because $S$ and $T$ are disjoint.

Then, we define also a shorthand that for $t \in [p]$, $l \in [\ell]$, $L \subseteq [k]$, $T' \subseteq T$, and $w' \in [0,w]$,
\[
D(t, l, L, T', w') = \sum_{x \in T'} \sum_{y \in N(x)} \sum_{o \in \{0,1\}} D(t, l, L, T' \setminus \{x\}, w', x, y, o),
\]
which intuitively denotes the polynomial corresponding to a ``completed'' \wlkg of $t$ walks with length $l$, used labels $L$, used ending vertices $T'$, and weight $w'$.

Now it holds that
\[
f(\fcf) = D(p, \ell, [k], T, w),
\]
and therefore computing $f(\fcf)$ can be done by computing all of the values $D(t, l, L, T', w', x, y, o)$ by dynamic programming.

Next we specify this computation by dynamic programming.
All values that we do not specify here will be set to zero.
First, to initialize, we define a special value $D(0, 0, \emptyset, \emptyset, 0) = 1$ corresponding to a family of \wlkgs containing one empty \wlkg.

Next, we describe computing the states where $o=0$, i.e., the last vertex is not labeled, for all $t \in [p]$, $l \in [\ell]$, $L \subseteq [k]$, $T' \subseteq T$, $w' \in [0,w]$, $x \in V(G)$, and $y \in N(x)$, assuming that all the states with smaller $l$ have already been computed.
There are four cases, corresponding to the four lines of \Cref{eq:dp1}.
In the first case the walk $W_t$ has length at least three, its second last vertex $y$ is not labeled, and we are extending the walkage by adding one not labeled vertex $x$ to $W_t$.
Second case is the same, but the second last vertex $y$ is labeled and thus we have to ensure to not create a labeled digon.
Third case is the case that we are extending the walkage by adding one more labeled walk, consisting of two vertices $y,x$, where $y = s_t$, neither of them labeled.
Fourth case is like the third, but the first vertex $y = s_t$ of the new walk is labeled.
Recall the notation that $[y = s_t] = 1$ if $y = s_t$ holds, and $0$ otherwise.

\begin{equation}
\label{eq:dp1}
\begin{split}
&D(t, l, L, T', w', x, y, 0) = f_e(xy)\\
&\cdot \left( \sum_{z \in V(G)} D(t, l-1, L, T', w', y, z, 0)\right.\\
&+ \sum_{z \in V(G) \setminus \{x\}} D(t, l-1, L, T', w', y, z, 1)\\
&+ [y = s_t] \cdot D(t-1, l-2, L, T', w')\\
&\left. + [y = s_t] \cdot \sum_{r \in L} f_v(y) \cdot f_c(c(y), r) \cdot D(t-1, l-2, L \setminus \{r\}, T', w'-\we(y)) \right).
\end{split}
\end{equation}

Then, we describe computing the states where $o=1$, i.e., the last vertex is labeled, for all $t \in [p]$, $l \in [\ell]$, $L \subseteq [k]$, $T' \subseteq T$, $w' \in [0,w]$, $x \in V(G)$, and $y \in N(x)$, assuming that all of the states with smaller $l$ have already been computed.
There are again four cases, analogously to \Cref{eq:dp1}.

\begin{equation}
\label{eq:dp2}
\begin{split}
&D(t, l, L, T', w', x, y, 1) = \sum_{r \in L} f_v(x) \cdot f_c(c(x), r) \cdot f_e(xy)\\
&\cdot \left( \sum_{z \in V(G)} D(t, l-1, L \setminus \{r\}, T', w'-\we(x), y, z, 0) \right.\\
&+ \sum_{z \in V(G) \setminus \{x\}} D(t, l-1, L \setminus \{r\}, T', w'-\we(x), y, z, 1)\\
&+ [y = s_t] \cdot D(t-1, l-2, L \setminus \{r\}, T', w'-\we(x))\\
&\left. + [y = s_t] \cdot \sum_{r' \in L \setminus \{r\}} f_v(y) \cdot f_c(c(y), r') \cdot D(t-1, l-2, L \setminus \{r, r'\}, T', w'-\we(x)-\we(y)) \right).
\end{split}
\end{equation}

This completes the description of the dynamic programming, which shows that each of the states $D(t, l, L, T', w', x, y, o)$ can be computed in $n^{\Oh(1)}$ time given the values of the states with smaller~$l$.
As there are $p \cdot \ell \cdot 2^k \cdot 2^p \cdot (w+1) \cdot n \cdot n \cdot 2 = \Oh(p 2^{p+k} n^3 w)$ states, the algorithm works in time $2^{p+k} n^{\Oh(1)} w$.
\end{proof}

As the algorithm can be implemented by $\Oh(n)$ applications of \Cref{lem:randtime}, the algorithm has time complexity $2^{p+k} n^{\Oh(1)} w$.
Recovering the solution can be done by a factor of $\Oh(n^2)$ more applications.

\subsection{Correctness\label{subsec:randcorr}}
To prove the correctness of the algorithm, we show that

\begin{enumerate}
\item[(a)] the polynomial $f(\fcf)$ 
is non-zero if $G$ contains a 
$(k,w)$-colored $(S,T)$-\lkg of order $p$ and length $\ell$ and
\item[(b)] the polynomial $f(\fcf)$ is the identically zero polynomial if the graph does not contain a $(k,w)$-colored $(S,T)$-\lkg of order $p$ and length $\le \ell$. 
\end{enumerate}

Because $f(\fcf)$ has degree $\ell-p+2k \le 3n \le q/2$, it follows from \Cref{lem:schwartzzippel} and (a) that if $G$ contains a $(k,w)$-colored $(S,T)$-\lkg of order $p$ and length $\ell$, then evaluating $f(\fcf)$ at a random point of GF($q$) has probability at least $1/2$ to be non-zero.
From (b) it follows that if $G$ does not contain a $(k,w)$-colored $(S,T)$-\lkg of order $p$ and length $\le \ell$, then evaluating $f(\fcf)$ at a random point is guaranteed to be zero.
This establishes that the algorithm is correct with probability at least $1/2$, with one-sided error.

The part (a) is relatively easy to prove (\Cref{lem:cornonzero}).
To prove (b), we first show  that the monomials in $f(\fcf)$ corresponding to non-proper labeled \wlkgs cancel out (\Cref{lem:properwalkage}).
This argument is based on the now-standard technique of bijective labeling based cancellation introduced in~\cite{DBLP:journals/siamcomp/Bjorklund14}.
The remaining part of the proof of (b) is much more complicated and is the main technical challenge.
It is based on the technical \Cref{lem:main_phi_exists}, whose proof is postponed to \Cref{sec:proofinvo}.

We start with (a).

\begin{lemma}
\label{lem:cornonzero}
If $G$ has a $(k,w)$-colored $(S,T)$-\lkg of order $p$ and length $\ell$, then $f(\fcf)$ is non-zero.
\end{lemma}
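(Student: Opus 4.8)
The plan is to exhibit a single monomial in $f(\fcf)$ that occurs with coefficient exactly $1$ (recall we work over GF($q$) with characteristic $2$, so "odd number of times" suffices). First I would take a $(k,w)$-colored $(S,T)$-\lkg $\ps$ of order $p$ and length $\ell$ witnessing the hypothesis, together with a set $X \subseteq V(\ps)$ with $|X|=k$, all colors in $c(X)$ distinct, and $\we(X)=w$. I would orient each path of $\ps$ so that it runs from its endpoint in $S$ to its endpoint in $T$, order the $p$ paths so that their starting vertices are increasing under the fixed total order $<$ on $V(G)$, and label the $k$ vertices of $X$ bijectively with the labels $[k]$ (in, say, the order in which they appear along the ordered tuple of paths), leaving all other indices unlabeled. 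Call the resulting labeled \wlkg $\fw_0$. Since $\ps$ consists of vertex-disjoint paths, $\fw_0$ is in fact a labeled \lkg: no vertex repeats, so there are certainly no labeled digons, the ending vertices are distinct (they are exactly $T$), the labeling is bijective, the starting tuple equals $\ordv(S)$, the length is $\ell$, and $\we(\fw_0)=\we(X)=w$. Hence $\fw_0 \in \fcf$ by \Cref{def:familywalkages}, so the monomial $f(\fw_0)$ is one of the terms of $f(\fcf)$.

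Next I would argue that $\fw_0$ is the unique element of $\fcf$ contributing the monomial $f(\fw_0)$. The key observation is that from $f(\fw_0)$ one can reconstruct $\fw_0$ uniquely. The edge variables appearing in $f(\fw_0)$ (with multiplicity) determine the multiset of edges $E(\fw)$ of any \wlkg $\fw$ with $f(\fw)=f(\fw_0)$; since $\fw_0$ is a disjoint union of $p$ paths, this multiset is a union of $p$ simple paths, and together with the requirement that $\svs(\fw)=\ordv(S)$ and $\ev(\fw)=T$ this pins down the $p$ paths and their orientations (a connected graph in which $S$-endpoints and $T$-endpoints are prescribed degree-one vertices and all edges are used exactly once is a path traversed once). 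The vertex variables $f_v(v)$ and color–label variables $f_c(c(v),r)$ in $f(\fw_0)$ then determine which vertices are labeled and with which label: because the labeled vertices of $\fw_0$ have pairwise distinct colors, each factor $f_c(x,r)$ identifies the unique labeled vertex of color $x$ and its label $r$. Thus the labeling is recovered as well, so no \wlkg in $\fcf$ other than $\fw_0$ has monomial $f(\fw_0)$.

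Finally, since $f(\fw_0)$ appears in $f(\fcf)$ exactly once, its coefficient in $f(\fcf)$ is $1 \neq 0$ in GF($q$), so $f(\fcf)$ is a non-zero polynomial. I do not expect a serious obstacle here; the only mild subtlety is spelling out the reconstruction argument carefully, in particular that the multiset of edges together with the prescribed start/end data determines the \wlkg, and that distinctness of the colors of labeled vertices is exactly what makes the labeling recoverable — this is the place where the ``proper'' condition (and not merely ``semiproper'') is used, but it holds for free since $\fw_0$ comes from an honest \lkg with $X$ of distinct colors.
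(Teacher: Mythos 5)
Your proposal is correct and takes essentially the same approach as the paper: exhibit the labeled \lkg arising from the witness $(S,T)$-\lkg and set $X$, check it belongs to $\fcf$, and then argue it is the unique preimage of its monomial by reconstructing the edge structure (using disjointness of the paths plus the fixed start/end data) and the labeling (using distinctness of the colors on $X$). The only stylistic difference is that you spell out the edge-reconstruction step a bit more explicitly than the paper does; the mathematical content is the same.
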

\begin{proof}
Consider a $(k,w)$-colored $(S,T)$-\lkg $\ps$ of order $p$ and length $\ell$.
Let $X \subseteq V(\ps)$ be the set of vertices with $|X| = k$, different colors, and weight $\we(X) = w$.
We can turn $\ps$ into a proper labeled \lkg $\fw$ of order $p$, length $\ell$, weight $w$, where $\svs(\fw) = \ordv(S)$ and $\ev(\fw) = T$, by ordering the paths based on their starting vertices and assigning the labels $[k]$ arbitrarily to the vertices $X$ when $\fw$ intersects $X$.

Therefore $\fw \in \fcf$, so it remains to prove that $\fw$ is the only labeled \wlkg in $\fcf$ that corresponds to the monomial $f(\fw)$, which then implies that the monomial $f(\fw)$ occurs in the polynomial $f(\fcf)$ with coefficient $1$, implying that $f(\fcf)$ is non-zero.

Notice that from $f(\fw)$, from the edge variables $f_e$ we can recover the edges $E(\fw)$ of $\fw$, from the vertex variables $f_v$ we can recover the labeled vertices $X$, and because vertices in $X$ have different colors, from the color-label pair variables $f_c$ we can recover how the labels correspond to the labeled vertices.
Therefore as the ordering of the paths is fixed by $\ordv(S)$ and every vertex appears in $\fw$ at most once, we have that $\fw$ is the unique element of $\fcf$ that corresponds to the monomial $f(\fw)$.
\end{proof}

Then, we deal with non-proper \wlkgs in $\fcf$.
Let $\fcfs \subseteq \fcf$ denote the family of proper labeled \wlkgs in $\fcf$, i.e., the labeled \wlkgs in $\fcf$ where all labeled indices have vertices of different colors.

\begin{lemma}\label{lem:properwalkage}
It holds that $f(\fcfs) = f(\fcf)$.
\end{lemma}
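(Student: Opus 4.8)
The plan is to exhibit an $f$-invariant fixed-point-free involution on the set of non-proper \wlkgs in $\fcf$, i.e., on $\fcf \setminus \fcfs$. Since the field GF($q$) has characteristic $2$, any monomial appearing with an even coefficient vanishes, so pairing up the non-proper \wlkgs by such an involution $\psi$ (with $f(\psi(\fw)) = f(\fw)$, $\psi(\fw) \neq \fw$, $\psi(\psi(\fw)) = \fw$) immediately gives $f(\fcf \setminus \fcfs) = 0$ and hence $f(\fcf) = f(\fcfs)$.

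First I would fix a canonical ordering of the labeled indices of a \wlkg $\fw = (W^1,\dots,W^p)$, say lexicographically by the pair (walk index, position within the walk). A \wlkg $\fw \in \fcf$ fails to be proper precisely when two distinct labeled indices carry vertices of the same color; since $\fw$ is bijective, these two indices carry two distinct labels from $[k]$. Let $(a,b)$ be the lexicographically smallest pair of labeled indices such that $c(v_a) = c(v_b)$ (using the linear order on labeled indices just described), and define $\psi(\fw)$ to be the \wlkg obtained from $\fw$ by swapping the two labels sitting at these two indices. I would then check the three involution properties: (i) $\psi(\fw) \neq \fw$ because the two labels are distinct; (ii) $\psi(\psi(\fw)) = \fw$ because swapping the set of labeled indices is unchanged, their colors are unchanged, so the same pair $(a,b)$ is selected and the swap is undone; (iii) $\psi$ maps $\fcf \setminus \fcfs$ into itself, since swapping labels changes neither the underlying walks, nor the set of collected vertices, nor injectivity/bijectivity, nor the order of starting vertices, nor the set of ending vertices, nor the property of having no labeled digons (the \emph{positions} of the labels do not move, only the labels themselves), nor the weight $\we(\fw)$ (same set of labeled occurrences); and the two same-colored labeled occurrences are still present, so the result is still non-proper.

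Finally I would verify $f$-invariance: $f(\fw)$ is the product of edge variables $f_e$, vertex variables $f_v(v_i)$ over labeled indices $i$, and color–label pair variables $f_c(c(v_i), r_i)$ over labeled indices $i$. Swapping the labels at $a$ and $b$ leaves the edges and the multiset of labeled vertices untouched, so the $f_e$ and $f_v$ factors are unchanged. For the $f_c$ factors, before the swap we have the factors $f_c(c(v_a), r_a) \cdot f_c(c(v_b), r_b)$, and after the swap $f_c(c(v_a), r_b) \cdot f_c(c(v_b), r_a)$; since $c(v_a) = c(v_b)$, these two products are literally equal, and all other $f_c$ factors are untouched. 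Hence $f(\psi(\fw)) = f(\fw)$. This shows the monomials of non-proper \wlkgs cancel in pairs over GF($q$), establishing $f(\fcfs) = f(\fcf)$. I do not anticipate a serious obstacle here; the only point needing a little care is confirming that the ``first same-colored pair of labeled indices'' is genuinely invariant under $\psi$, which holds because $\psi$ alters neither which indices are labeled nor the colors of the vertices at those indices.
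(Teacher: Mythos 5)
Your proposal is correct and takes essentially the same approach as the paper: select the lexicographically smallest pair of labeled indices carrying vertices of the same color, swap the two labels there, and verify that this defines an $f$-invariant fixed-point-free involution on $\fcf \setminus \fcfs$, which over a characteristic-$2$ field forces the non-proper contributions to cancel. The only cosmetic difference is notational (the paper indexes labeled positions by pairs $(i,a)$ with walk index and in-walk position, which matches your canonical ordering).
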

\begin{proof}
We will show that there is a function $\phi : \fcf \setminus \fcfs \rightarrow \fcf \setminus \fcfs$ that is an $f$-invariant fixed-point-free involution, i.e., for all $\fw \in \fcf \setminus \fcfs$ it holds that (1) $f(\phi(\fw)) = f(\fw)$, (2) $\phi(\fw) \neq \fw$, and (3) $\phi(\phi(\fw)) = \fw$.
This implies that the set $\fcf \setminus \fcfs$ can be partitioned into pairs $\{\fw, \phi(\fw)\}$ with $f(\fw) = f(\phi(\fw))$, and therefore every monomial corresponding to a labeled \wlkg in $\fcf \setminus \fcfs$ occurs in $f(\fcf)$ an even number of times, and therefore they cancel out because $f$ is over a field of characteristic 2.

The function $\phi$ is defined as follows.
Let $\fw = (W^1, \ldots, W^p)$ be a labeled \wlkg in $\fcf \setminus \fcfs$, where $W^i = ((v^i_1, \ldots, v^i_{\ell_i}), (r^i_1, \ldots, r^i_{\ell_i}))$.
Because $\fw$ is semiproper but not proper, there exists two different labeled indices that have a vertex of the same color, i.e., pairs $(i,a)$ and $(j,b)$ with $i,j \in [p]$, $a \in [\ell_i]$, $b \in [\ell_j]$, $(i,a) \neq (j,b)$, $c(v^i_a) = c(v^j_b)$, $r^i_a \neq 0$, and $r^j_b \neq 0$.
Let $(i,a), (j,b)$ be the lexicographically smallest such pair.
We set  $\phi(\fw)$ to be the labeled walkage obtained from $\fw$ after swapping $r^i_a$ with $r^j_b$.

First, we observe that $\phi(\fw) \in \fcf$.
Indeed, it cannot make a bijective \wlkg into non-bijective, and as it does not change the sequence of vertices of $\fw$ or which indices are labeled, it cannot make a semiproper walk into non-semiproper, or change the order, the length, the weight, the tuple of starting vertices, or the set of ending vertices.
Also $\phi(\fw)$ is not proper, i.e., $\phi(\fw) \in \fcf \setminus \fcfs$, because the vertices $v^i_a$ and $v^j_b$ are still labeled and have the same color.

To see why $f(\phi(\fw)) = f(\fw)$, note that, since $\phi$ does not change the vertices, it also does not change the edge variables $f_e$ of the monomial, it does not change which vertices are labeled so it does not change the vertex variables $f_v$ of the monomial, and because the vertices $v^i_a$ and $v^j_b$ have the same color the color-label-pair variables $f_c$ of the monomial are also not changed.

Also, we have that $\phi(\fw) \neq \fw$, since the fact that $\fw$ is bijective implies that $r^i_a \neq r^j_b$.
Also, $\phi(\phi(\fw)) = \fw$ because the swapping does not change which indices are labeled, and therefore does not change the lexicographically smallest pair of labeled indices with the same colors.
\end{proof}

As a result of \Cref{lem:properwalkage},
we can work with $f(\fcfs)$ instead of $f(\fcf)$.

\medskip

The most complicated part of the correctness proof will be to show part (b), that is, if there is no $(k,w)$-colored $(S,T)$-\lkg of order $p$ and length at most $\ell$, then $f(\fcfs)$ (and, thus by \Cref{lem:properwalkage}, $f(\fcf)$) is an identically zero polynomial.
Most of this proof will be presented in \Cref{sec:proofinvo}, but we introduce here the statement the lemma that we will prove in \Cref{sec:proofinvo}.
For this, we define \emph{barren labeled \wlkgs}.

\begin{definition}[Barren labeled \wlkg]
\label{def:barren}
A labeled \wlkg $\fw$ of length $\ell$ is \emph{barren} if there exists no labeled \lkg $\fw'$ with starting vertices $\svs(\fw') = \svs(\fw)$, set of ending vertices $\ev(\fw') = \ev(\fw)$, set of collected vertices $\col(\fw') = \col(\fw)$, length $\le \ell$ and edges $E(\fw') \subseteq E(\fw)$.
\end{definition}

In other words, a labeled \wlkg $\fw$ of length $\ell$ is barren if its edges form a subgraph of $G$ where no labeled {\sl \lkg} $\fw'$ of length at most $\ell$ can have the same sets of starting vertices, ending vertices, and collected vertices as $\fw$.
Intuitively, this means that the labeled walkage $\fw$ can not be ``untangled'' to give a corresponding labeled \lkg.
In particular, observe that because the ``untangling'' preserves the set of collected vertices, i.e., $\col(\fw') = \col(\fw)$, if no $(k,w)$-colored $(S,T)$-\lkgs of order $p$ and length at most $\ell$ exists, then all labeled \wlkgs in $\fcfs$ are barren.

Next, we state the main technical lemma for establishing the correctness of our algorithm.
\Cref{sec:proofinvo} is devoted to its proof.

\begin{lemma}
\label{lem:main_phi_exists}
Let $G$ be a graph and let $\bls$ the set of all proper barren labeled \wlkgs  in $G$.
There exists a function $\phi : \bls \rightarrow \bls$ so that for all $\fw \in \bls$, the function $\phi$ satisfies that
\begin{enumerate}
\item $\phi(\phi(\fw)) = \fw$ ($\phi$ is involution),
\item $\phi(\fw) \neq \fw$ ($\phi$ is fixed-point-free),
\item $f(\phi(\fw)) = f(\fw)$ ($\phi$ preserves the monomial),
\item $\ev(\phi(\fw)) = \ev(\fw)$ ($\phi$ preserves the set of ending vertices), and
\item $\svs(\phi(\fw)) = \svs(\fw)$ ($\phi$ preserves the ordered tuple of starting vertices).
\end{enumerate}
\end{lemma}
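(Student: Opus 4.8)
\textbf{Proof plan for \Cref{lem:main_phi_exists}.}
The plan is to construct the involution $\phi$ on $\bls$ completely explicitly, by a case analysis on the ``first collision'' of the input walkage, and then to verify the five required properties. First I would record the structural starting point: if $\fw \in \bls$ is a \emph{proper barren} labeled \wlkg, then $\fw$ cannot be a labeled \lkg, since otherwise $\fw' = \fw$ would witness that $\fw$ is not barren (it has the same starting tuple, the same set of ending vertices, the same collected set, no larger length, and $E(\fw') \subseteq E(\fw)$). Hence some vertex of $G$ occurs at least twice among the walks of $\fw$ --- either twice inside one walk, or once in each of two distinct walks. I would fix once and for all a total order on $V(G)$ and on $(\text{walk-index},\text{position})$ pairs, and use it to single out a \emph{canonical collision}. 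The function $\phi$ then performs, at this canonical collision, one of three local surgeries: a \emph{label swap} (move a label between two occurrences of the same vertex, one labeled and one not), a \emph{reversal} (reverse a minimal non-palindromic closed subwalk $vUv$ inside a single walk, in the spirit of the $T$-cycle argument of~\cite{BjorklundHT12}), or a \emph{suffix swap} (exchange the tails of two walks past their first common vertex, as foreshadowed by the cancellation for intersecting walks). These are exactly the operations organized into the case groups A--D of \Cref{def:invophi}.

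Three of the five properties are then nearly immediate and I would dispatch them uniformly. Monomial preservation (property 3): none of the three surgeries changes the multiset of edges traversed, the multiset of labeled positions, or the color carried at any labeled position --- the label swap moves a label between two \emph{equal} (hence equally colored) vertices, while the reversal and the suffix swap merely rearrange vertices without changing which of them are labeled. Preservation of the starting tuple (property 5): all three surgeries are designed to fix the first vertex of every walk --- the label swap trivially, a reversal because reversing a closed subwalk $vUv$ does not move the endpoints of that subwalk, and a suffix swap because the common vertex, being interior or the image of a (necessarily distinct, by ``ordered'') starting vertex, leaves each walk's first vertex untouched. Preservation of the ending-vertex set (property 4): the label swap and the reversal fix all endpoints, and the suffix swap only permutes the ending vertices among two walks, so $\ev$ is unchanged as a set.

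The hard part --- and the reason \Cref{sec:proofinvo} is long --- is to simultaneously (i) keep the output inside $\bls$, in particular re-establish the ``no labeled digons'' invariant, and (ii) make $\phi$ a fixed-point-free involution. The obstruction is concrete: a reversal can \emph{create} a labeled digon (reversing $a\overleftarrow{bcd}\hat{a}b$ gives $adcb\hat{a}b$, whose subwalk $b\hat{a}b$ is a labeled digon), and the naive remedy of switching to a label swap in that situation breaks the involution (the example $\hat{a}bcac \mapsto \hat{a}cbac \mapsto acb\hat{a}c$ overshoots). The plan to tame this is a priority scheme with a graceful fallback: at the canonical collision, first attempt the label swap; if it would create a labeled digon (or otherwise fail to be self-consistent), fall back to a reversal, shortening the reversed segment to the first position where a labeled digon would otherwise appear; for two-walk collisions, test the suffix swap against the same digon obstruction and fall back to a label swap between the two walks. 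Each resulting case is paired with a ``reverse'' case, and one checks --- this is the content of the eighteen-case verification underlying \Cref{def:invophi} --- that applying $\phi$ to $\phi(\fw)$ lands in the paired case and reproduces $\fw$. Fixed-point-freeness (property 2) then follows because barrenness forbids exactly the degenerate inputs that would make a surgery trivial: a label swap always moves between two distinct labels of an injective walkage, and a palindromic closed subwalk or a pair of equal swapped suffixes would let us untangle $\fw$ into a labeled \lkg of length $\le \ell$ with the same collected set, contradicting $\fw \in \bls$.

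I expect the genuine difficulty to be the self-inverse bookkeeping across the boundaries between the label-swap, reversal, and suffix-swap cases. Reversals and suffix swaps reorder vertices, so the ``canonical collision'' of $\phi(\fw)$ need not sit where it sat in $\fw$; one must therefore define the canonical collision and all tie-breaking purely in terms of data that the surgeries provably preserve (vertex colors, the collected set $\col(\fw)$, the endpoints, and the edge multiset), and then argue that the case-partner relation is an involution on cases. I would structure the write-up accordingly: first pin down the canonical collision and the three surgeries, then prove properties 3--5 in one stroke, and only afterwards grind through the case pairing to obtain properties 1--2.
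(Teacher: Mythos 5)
The high-level arsenal you list is the right one --- the three surgeries (label swap, reversal, suffix swap) are exactly the operations behind \Cref{def:invophi}, and you correctly identify the central hazard (reversals and label swaps creating labeled digons, swaps not closing up to an involution). But the organizing principle you propose for picking where to apply them does not match the paper's and, as stated, would not go through.

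You propose to fix a global ``canonical collision'' --- singled out by a total order on (walk-index, position) pairs --- and then define the tie-breaking ``purely in terms of data that the surgeries provably preserve (vertex colors, the collected set $\col(\fw)$, the endpoints, and the edge multiset).'' This is precisely where the construction cannot be salvaged as you sketch it: the data that $\phi$ preserves (the monomial $f(\fw)$, $\svs(\fw)$, $\ev(\fw)$) does not determine \emph{positions} inside the walks. Two walkages that differ by a reversal have the same preserved data but their ``first collision by position'' sits in different places, so a collision pinned down this way moves under $\phi$ and the case-pairing breaks. The paper instead defines $\phi$ \emph{recursively from the front}: it inspects the very first vertex $v^1_1$ of the first walk (which \emph{is} preserved, since the ordered tuple $\svs$ is preserved, not merely the set of endpoints), and when $v^1_1$ does not support a useful surgery it peels off a prefix and recurses into the strictly shorter remaining walkage. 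This recursion is load-bearing --- it is what lets the involution proof show case-by-case that $\phi(\fw)$ lands in the same (or paired C.X/D.X) case --- and \Cref{lemma:well-definedphi} has to re-establish barrenness for each recursive argument. Your plan has no substitute for this mechanism.

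The priority scheme you give is also materially simpler than what is actually needed. ``Try label swap; if it creates a digon, fall back to a shortened reversal'' does not describe what happens in the hard cases: the paper's cases C.3, C.4, and C.X revolve around the last ($b$) and second-to-last ($a$) occurrences of $v^1_1$, on whether $W^1[2,a-1]$ and $W^1[a+1,b-1]$ are palindromes, on whether $b$ is a digon, and they fall through to a \emph{recursive call past the palindromic prefix} rather than a shortened reversal. These are precisely the situations where, as the paper puts it, naive patches cascade into new failures.

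Finally, your argument for fixed-point-freeness contains a false claim: ``a palindromic closed subwalk \ldots would let us untangle $\fw$ into a labeled \lkg of length $\le \ell$ with the same collected set, contradicting $\fw \in \bls$.'' Barren proper walkages \emph{can} have palindromic closed subwalks; this happens routinely, and the paper's $\phi$ deliberately recurses past them (cases C.1.b, C.1.c, C.4.b, C.X, D.5, D.X) precisely because a palindrome carries no labels by \Cref{lem:phi_palindrome} and hence cannot be reversed non-trivially. Fixed-point-freeness in those cases holds by induction on the length of the recursive argument, not because the situation is forbidden. So while you have correctly identified the ingredients and the obstacle, the proposal as written is a proof outline with the hard mechanism --- the recursion-from-the-front and the exact case discipline that makes it self-inverse --- still missing.
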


The main reason for defining the function $\phi$ for all proper barren labeled \wlkgs instead of just barren \wlkgs in $\fcfs$ is that $\phi$ will be defined recursively, and in the recursion we will anyway need to handle all proper barren labeled \wlkgs.

Now, the proof of (b) is an easy consequence of \Cref{lem:main_phi_exists}.

\begin{lemma}
\label{lem:corzero}
If $G$ has no $(k,w)$-colored $(S,T)$-\lkg of order $p$ and length $\le \ell$, then $f(\fcfs)$ is an identically zero polynomial.
\end{lemma}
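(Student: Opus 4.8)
The plan is to obtain \Cref{lem:corzero} from the technical \Cref{lem:main_phi_exists} by the same pairing argument over a field of characteristic~$2$ that proved \Cref{lem:properwalkage}. The first step is to observe that, under the hypothesis, every $\fw \in \fcfs$ is barren: any labeled \lkg $\fw'$ that ``untangles'' $\fw$ would, by definition of barrenness, satisfy $\col(\fw') = \col(\fw)$, and since $\fw$ is bijective and proper this set consists of exactly $k$ vertices of pairwise distinct colors and total weight $w$; as $\fw'$ is moreover a collection of $p$ vertex-disjoint $S$--$T$ paths of total length at most~$\ell$, it would witness a $(k,w)$-colored $(S,T)$-\lkg of order $p$ and length $\le \ell$, contradicting the hypothesis (this is exactly the observation recorded right after \Cref{def:barren}). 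Since every element of $\fcfs$ is also proper by definition, we get $\fcfs \subseteq \bls$, so the map $\phi : \bls \to \bls$ furnished by \Cref{lem:main_phi_exists} is defined on all of $\fcfs$.

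The only step requiring care is to verify that $\phi$ maps $\fcfs$ back into $\fcfs$, which I would extract from the five preservation properties of $\phi$. Fix $\fw \in \fcfs$; then $\phi(\fw) \in \bls$ is proper, hence semiproper, so it remains only to match the numerical data defining membership in $\fcf$. Preservation of the monomial together with the fact that $\fw$ is proper (no vertex labeled twice) makes the $f_v$-part of $f(\fw)$ a squarefree monomial encoding precisely $\col(\fw)$, so $\col(\phi(\fw)) = \col(\fw)$, whence $\we(\phi(\fw)) = \sum_{v \in \col(\fw)} \we(v) = \we(\fw) = w$; similarly, since $\fw$ is bijective and proper, the $f_c$-part of $f(\fw)$ is a squarefree product of $k$ variables whose second coordinates are exactly the labels of $[k]$, which forces $\phi(\fw)$ to use each label and, being proper, to use each at most once, i.e.\ to be bijective. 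Preservation of $\svs$ gives $\svs(\phi(\fw)) = \svs(\fw) = \ordv(S)$, so $\phi(\fw)$ is ordered, of order $p$, with the prescribed starting vertices, and preservation of $\ev$ gives $\ev(\phi(\fw)) = T$; finally, since the monomial of an order-$p$ bijective labeled \wlkg of length $\ell'$ has degree $\ell' - p + 2k$ and $\phi$ preserves this degree, the length of $\phi(\fw)$ equals~$\ell$. Hence $\phi(\fw) \in \fcf$, and being proper it lies in $\fcfs$.

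It then follows from properties~(1)--(3) of \Cref{lem:main_phi_exists} that $\phi$ restricts to a fixed-point-free $f$-invariant involution on $\fcfs$, so $\fcfs$ partitions into orbits $\{\fw, \phi(\fw)\}$ of size exactly two, each of which contributes $f(\fw) + f(\phi(\fw)) = 2f(\fw) = 0$ to $f(\fcfs) = \sum_{\fw \in \fcfs} f(\fw)$ because the coefficient field has characteristic~$2$; therefore $f(\fcfs)$ is the identically zero polynomial. The genuine difficulty is entirely deferred to \Cref{lem:main_phi_exists} --- the construction of $\phi$ on all proper barren labeled \wlkgs --- and within the present proof the only thing to be careful about is checking that the preservation guarantees of \Cref{lem:main_phi_exists} are exactly strong enough to keep $\phi$ within the family $\fcfs$.
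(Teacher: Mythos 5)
Your proof is correct and follows essentially the same route as the paper's: establish $\fcfs \subseteq \bls$ from the hypothesis, use the preservation properties of $\phi$ from \Cref{lem:main_phi_exists} to check that $\phi$ restricts to a map $\fcfs \to \fcfs$, and conclude by the standard pairing/cancellation argument over characteristic~$2$. You spell out the verification that $\phi(\fw)\in\fcfs$ in somewhat more detail than the paper does, but the underlying reasoning is the same.
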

\begin{proof}
First, because $G$ has no $(k,w)$-colored $(S,T)$-\lkg of order $p$ and length $\le \ell$, all labeled \wlkgs in $\fcfs$ are barren, i.e., $\fcfs \subseteq \bls$.

We show that if $\fw \in \fcfs$, then $\phi(\fw) \in \fcfs$.
By definition, $\phi(\fw)$ is proper.
By (3), $\phi$ preserves the set of labeled vertices and moreover because the labeled vertices have different colors it preserves also the label-vertex mapping, and therefore $\phi(\fw)$ is bijective and has weight $w$.
By (4) and (5), $\phi$ preserves the set of ending vertices and the ordered tuple of starting vertices.
By (3), $\phi$ also preserves the length $\ell$, as the order of $\fw$ is preserved by (5).
Therefore the restriction $\phi\restriction_{\fcfs}$ is a function $\phi\restriction_{\fcfs} : \fcfs \rightarrow \fcfs$.

Then, by (1-3), $\phi\restriction_{\fcfs}$ is an $f$-invariant fixed-point-free involution on $\fcfs$, implying that the set $\fcfs$ can be partitioned into pairs $\{\fw, \phi(\fw)\}$ with $f(\fw) = f(\phi(\fw))$, and therefore for every monomial $f(\fw)$, there is an even number of labeled \wlkgs $\fw \in \fcfs$ corresponding to it, and therefore because $f(\fcfs)$ is over a field of characteristic 2, it is identically zero.
\end{proof}

\subsection{Proof of \autoref{lem:main_phi_exists}\label{sec:proofinvo}}
In this subsection we prove \autoref{lem:main_phi_exists} by explicitly defining the function $\phi$ and then showing that it has all of the required properties.

In order to define $\phi$ we first introduce some notation for manipulating labeled walks and labeled \wlkgs.
Let $W = ((v_1, \ldots, v_\ell), (r_1, \ldots, r_\ell))$ be a labeled walk.
For indices $a,b$ with $1 \le a \le b \le \ell$, we denote by $W[a,b]$ the labeled subwalk between $a$ and $b$, inclusive, i.e., the labeled walk $W[a,b] = ((v_a, \ldots, v_b), (r_a, \ldots, r_b))$.
If $a>b$, then $W[a,b]$ denotes an empty labeled walk.

The involution $\phi$ will use three types of operations: reversing a subwalk, swapping a label from one occurrence of a vertex to another occurrence of it (possibly in a different walk), and swapping suffixes of two walks.

The subwalk reversal operation is defined as follows.
Let $W$ be a labeled walk of length $\ell$ and $a,b$ indices with $1 \le a \le b \le \ell$.
The walk obtained from $W$ by reversing the subwalk between $a$ and $b$, inclusive, including the labels, is denoted by $W\overleftarrow{[a,b]}$.
For example, if $W = ((v_1,v_2,v_3,v_4), (0,1,0,2))$, then $W\overleftarrow{[2,3]} = ((v_1,v_3,v_2,v_4), (0,0,1,2))$.
A labeled walk $W$ is a \emph{palindrome} if $W = W\overleftarrow{[1,\ell]}$ holds, i.e., the labeled walk is the same in reverse.
Note that $W\overleftarrow{[a,b]}=W$ holds if and only if $W[a,b]$ is a palindrome and that a subwalk $W[a,b]$ can be a palindrome only if its length is odd or if it is the empty walk.
We will use the following lemma about palindromic subwalks of labeled walks, and in particular the reason to forbid labeled digons is to make this lemma true.
Recall that any labeled walk in a proper labeled \wlkg is injective and does not contain labeled digons.
Recall also that $\col(W[a+1,b-1]) = \emptyset$ if and only if $W$ has no labels in the subwalk $W[a+1,b-1]$.

\begin{lemma}
\label{lem:phi_palindrome}
Let $W = ((v_1, \ldots, v_\ell), (r_1, \ldots, r_\ell))$ be an injective labeled walk of length $\ell$ that does not contain labeled digons, and let $a,b \in [\ell]$.
If $v_a = v_b$ and $W[a+1,b-1]$ is a palindrome, then $\col(W[a+1,b-1]) = \emptyset$.
\end{lemma}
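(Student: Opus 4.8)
The plan is to argue by contradiction: suppose $W[a+1,b-1]$ is a palindrome but $\col(W[a+1,b-1]) \neq \emptyset$, and derive that $W$ must contain a labeled digon, contradicting the hypothesis. Since $v_a = v_b$ and $W[a+1,b-1]$ reads the same forwards and backwards (including labels), I would pick a labeled index inside the palindrome, say $i$ with $a+1 \le i \le b-1$ and $r_i \neq 0$, and look at its mirror image $i' = a+b-i$ under the palindrome symmetry. The palindrome condition gives $v_i = v_{i'}$ and $r_i = r_{i'}$. If $i \neq i'$, then because $W$ is injective and $r_i = r_{i'} \neq 0$ is the same nonzero label appearing at two positions, we immediately contradict injectivity. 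Hence $i = i'$, i.e.\ $i$ is the exact center of the palindrome, so $b-1$ and $a+1$ have the same parity and $i = (a+b)/2$; in particular the center is unique, so there can be at most one labeled index in $W[a+1,b-1]$, and it sits exactly in the middle.

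Now I focus on this central labeled index $i$. The palindrome symmetry pairs index $i-1$ with index $i+1$ (since $(i-1)' = a+b-(i-1) = i+1$), giving $v_{i-1} = v_{i+1}$. Here I need to be a little careful about the boundary: if $i = a+1$ then $i-1 = a$ and $i+1 = a+2$, but we also know $v_a = v_b$; in general the reflection identity $v_j = v_{a+b-j}$ should be extended to $j = a$ and $j = b$ using $v_a = v_b$, so that $v_{i-1} = v_{i+1}$ holds even when $i-1 = a$ (then $v_{i-1} = v_a = v_b = v_{a+b-(i-1)}$, and $a+b-(i-1) = i+1$ since $i = (a+b)/2$). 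So in all cases $v_{i-1} = v_{i+1}$ with $1 \le i-1$ and $i+1 \le \ell$ (the latter because $i \le b-1 \le \ell-1$ — wait, I should check $i+1 \le \ell$: since $i \le b-1$ we get $i+1 \le b \le \ell$, good, and $i \ge a+1 \ge 2$ so $i-1 \ge 1$). Thus $1 < i < \ell$, $v_{i-1} = v_{i+1}$, and $r_i \neq 0$, which is precisely the definition of $i$ being a labeled digon in $W$ — contradiction.

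The main obstacle I anticipate is handling the boundary cases cleanly, namely when the labeled central index is adjacent to $a$ or $b$ (so that the reflection of $i-1$ or $i+1$ falls on the endpoints $a$ or $b$ rather than strictly inside the palindromic segment). This is exactly where the hypothesis $v_a = v_b$ is used — it lets me treat $a$ and $b$ as "identified" endpoints and push the reflection identity one step past the boundary of $W[a+1,b-1]$. I would state the reflection identity in the form $v_j = v_{a+b-j}$ valid for all $a \le j \le b$ (with $a \le j \le b$, using the palindrome for $a < j < b$ and $v_a = v_b$ for $j \in \{a,b\}$), and then just plug in $j = i-1$. A secondary minor point is ruling out $i = i'$ failing, i.e.\ the parity argument; if $W[a+1,b-1]$ has even length it cannot be a palindrome with a labeled index, since the unique labeled index would have no valid center, and in fact a nonempty even-length labeled walk cannot be a palindrome at all unless... actually it can be a palindrome with no labels, but then $\col = \emptyset$ and we are done anyway, so the even-length case is subsumed. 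Everything else is routine index bookkeeping.
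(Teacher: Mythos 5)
Your proof is correct and follows essentially the same approach as the paper's: use injectivity to force any labeled index to the exact center of the palindrome, then observe that the center is a digon (using $v_a = v_b$ for the boundary case where the palindrome has a single vertex). The paper splits the final step into two explicit cases (palindrome of length $>1$ vs.\ length exactly $1$), whereas you unify them by extending the reflection identity $v_j = v_{a+b-j}$ to the endpoints $j \in \{a,b\}$; this is a cosmetic difference. One small wording slip: your aside ``a nonempty even-length labeled walk cannot be a palindrome at all unless\ldots it can be a palindrome with no labels'' is muddled (what is true is that in a simple graph a nonempty even-length \emph{walk} can never be a palindrome, since its two middle vertices would coincide, which would require a self-loop), but this does not affect your argument since the injectivity step already rules out a labeled index whenever $a+b$ is odd.
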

\begin{proof}
First, because $W[a+1,b-1]$ is injective and palindrome, the only vertex of $W[a+1,b-1]$ that can be labeled is the middle vertex.
However, a label cannot occur at the middle vertex of a palindrome with more than one vertex because it would be a labeled digon.
If $W[a+1,b-1]$ has exactly one vertex, then again this vertex cannot be labeled because $v_a = v_b$ and $W$  does not contain labeled digons.
\end{proof}

The label swap operation is defined as follows.
Let $\fw = (W^1, \ldots, W^p)$ be a labeled \wlkg of order $p$, where for each $i \in [p]$ the \wlkg $W^i$ is denoted by $((v^i_1, \ldots, v^i_{\ell_i}), (r^i_1, \ldots, r^i_{\ell_i}))$.
Let $(i,a)$, $(j,b)$ be pairs with $i,j \in [p]$, $a \in [\ell_i]$, $b \in [\ell_j]$, $v^i_a = v^j_b$, and exactly one of $r^i_a$ and $r^j_b$ equal to zero (i.e. one of them unlabeled and one labeled).
The labeled \wlkg obtained from $\fw$ by swapping $r^i_a$ with $r^j_b$ is denoted by $\fw \frown^{i,j}_{a,b}$.
Note that because $r^i_a \neq r^j_b$, it holds that $\fw \frown^{i,j}_{a,b} \neq \fw$. See~\autoref{fig:swaplabels} for an illustration of the label swap operation.

\begin{figure}[ht]
\centering
\includegraphics[width=15cm]{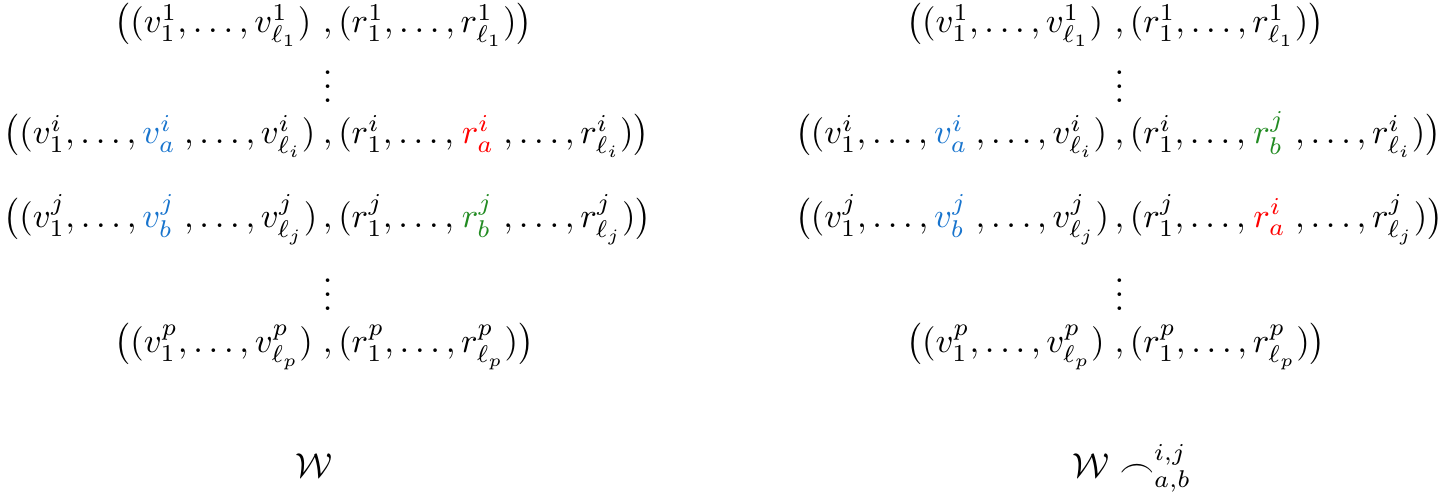}
\caption{An illustration of the label swap operation. On the left: a labeled \wlkg $\fw= (W^1, \ldots, W^p)$, and pairs
$(i,a)$, $(j,b)$ with $i,j \in [p]$, $a \in [\ell_i]$, $b \in [\ell_j]$, $v^i_a = v^j_b$, and exactly one of $r^i_a$ and $r^j_b$ equal to zero.
Note that we allow $i=j$.
On the right: the labeled \wlkg $\fw \frown^{i,j}_{a,b}$.}
\label{fig:swaplabels}
\end{figure}

\begin{figure}[ht]
\centering
\includegraphics[width=15cm]{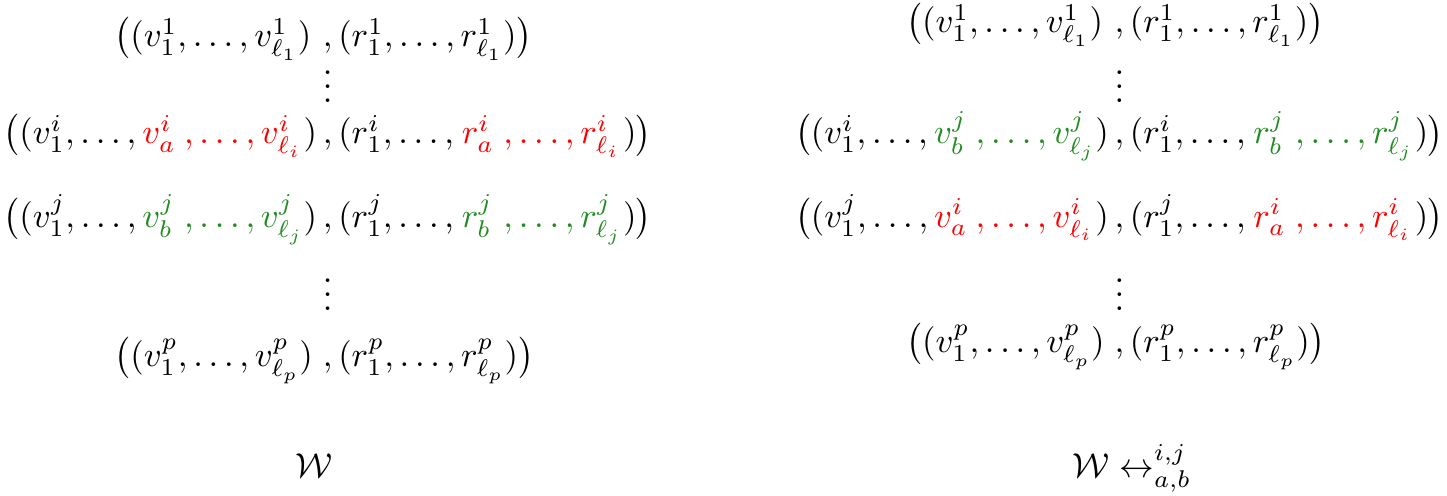}
\caption{An illustration of the suffix swap operation. On the left: a labeled \wlkg $\fw= (W^1, \ldots, W^p)$, and pairs
$(i,a)$, $(j,b)$ with $i,j \in [p]$, $a \in [\ell_i+1]$, $b \in [\ell_j+1]$, and $i \neq j$.
On the right: the labeled \wlkg $\fw \leftrightarrow^{i,j}_{a,b}$.}
\label{fig:suffixswap}
\end{figure}

The suffix swap operation is defined as follows.
Let $(i,a)$ and $(j,b)$ be pairs with $i,j \in [p]$, $a \in [\ell_i+1]$, $b \in [\ell_j+1]$, and $i \neq j$.
The labeled \wlkg obtained from $\fw$ by swapping the suffix of $W^i$ starting at index $a$ with the suffix of $W^j$ starting at index $b$ is denoted by $\fw \leftrightarrow^{i,j}_{a,b}$.
Note that here we allow that $a = \ell_i+1$ or $b = \ell_j+1$, with the interpretation that this corresponds to the empty suffix.
Clearly, if both $a = \ell_i+1$ and $b = \ell_j+1$, then this operation does not do anything, but otherwise if $\fw$ is a proper labeled \wlkg, applying this operation will in fact always result in a different \wlkg because of the different ending vertices condition.
See~\autoref{fig:suffixswap} for an illustration of the suffix swap operation.

If $W^1$ and $W^2$ are labeled walks so that the last vertex of $W^1$ is adjacent to the first vertex of $W^2$, then $W^1 \concs W^2$ denotes the concatenation of $W^1$ and $W^2$.
If $\fw = (W^1, \ldots, W^p)$ is a labeled \wlkg and $W$ is a labeled walk, then $W \concm \fw$ denotes the labeled \wlkg $(W \concs W^1, \ldots, W^p)$ and $W \sqcup \fw$ denotes the labeled \wlkg $(W, W^1, \ldots, W^p)$.

Next we define the function $\phi$ of \autoref{lem:main_phi_exists}.
We will provide some intuition about $\phi$ right after the definition, and 
\Cref{fig:definition_of_phi_case_A,fig:definition_of_phi_case_B,fig:definition_of_phi_case_C1a,fig:definition_of_phi_case_C1b,fig:definition_of_phi_case_C1c,fig:definition_of_phi_case_C2_C3,fig:definition_of_phi_case_C4,fig:definition_of_phi_case_CXDX,fig:definition_of_phi_case_D} demonstrate different cases of it.
The definition of $\phi$ will be recursive, using induction by the length of the \wlkg.

\begin{definition}[The function $\phi$]
\label{def:invophi}
Let $\fw = (W^1, \ldots, W^p)$ be a proper barren labeled \wlkg of order $p$.
For each $i \in [p]$, denote $W^i = ((v^i_1, \ldots, v^i_{\ell_i}), (r^i_1, \ldots, r^i_{\ell_i}))$.
The value $\phi(\fw)$ is defined, in some cases recursively, by selecting the first matching case from the following list:
\begin{enumerate}
\item[A.] if the vertex $v^1_1$ occurs only once in $\fw$:
\begin{enumerate}
\item[1.] if $\ell_1 \ge 2$, then $\phi(\fw) = W^1[1,1] \concm \phi(W^1[2,\ell_1], W^2, \ldots, W^p)$.
\item[2.] otherwise (i.e., $\ell_1 = 1$), $\phi(\fw) = W^1 \sqcup \phi(W^2, \ldots, W^p)$.
\end{enumerate}
\item[B.] if the vertex $v^1_1$ occurs in at least three different walks $W^i$:\\
There must be at least two different walks $W^i$ that contain $v^1_1$ but do not contain it as labeled.
Let $i,j$ be the two smallest indices so that both $W^i$ and $W^j$ contain $v^1_1$ but do not contain it as labeled.
Let $a$ be the index of the first occurrence of $v^1_1$ in $W^i$ and $b$ be the index of the first occurrence of $v^1_1$ in $W^j$.
Now, $\phi(\fw) = \fw \leftrightarrow^{i,j}_{a,b}$.
\item[C.] if the vertex $v^1_1$ occurs only in the walk $W^1$:\\
By the case (A), the vertex $v^1_1$ occurs multiple times in $W^1$.
Let $b$ be the index of the last occurrence of $v^1_1$ in $W^1$ and $a$ be the index of the second last occurrence of $v^1_1$ in $W^1$.
Note that $a=1$ if $v^1_1$ occurs only twice in $W^1$, and note also that $1 \le a \le b-2$.
\begin{enumerate}
\item[1.] if $r^1_{1} = r^1_{b} = 0$:
\begin{enumerate}
\item[(a)] if $W^1[2,b-1]$ is not a palindrome, then $\phi(\fw) = (W^1\overleftarrow{[2,b-1]}, W^2, \ldots, W^p)$.
\item[(b)] otherwise, if $b < \ell_1$, then $\phi(\fw) = W^1[1,b] \concm \phi(W^1[b+1,\ell_1], W^2, \ldots, W^p)$.
\item[(c)] otherwise (i.e., $b = \ell_1$), $\phi(\fw) = W^1 \sqcup \phi(W^2, \ldots, W^p)$.
\end{enumerate}
\item[2.] if the index $b$ is not a digon in $W^1$, then $\phi(\fw) = \fw \frown^{1,1}_{1,b}$.\\
\emph{Note: If neither case (1) nor (2) applies, then $r^1_{1}\neq 0$.}
\item[3.] if $W^1[2,a-1]$ is not a palindrome, then $\phi(\fw) = (W^1\overleftarrow{[2,a-1]}, W^2, \ldots, W^p)$.\\
\emph{Note: If $a=1$, then $W^1[2,a-1]$ is the empty walk which is a palindrome.}
\item[4.] if $v^1_{a+1} = v^1_{b-1}$:
\begin{enumerate}
\item[(a)] if $W^1[a+1,b-1]$ is not a palindrome, then $\phi(\fw) = (W^1\overleftarrow{[a+1,b-1]}, W^2, \ldots, W^p)$.
\item[(b)] otherwise, $\phi(\fw) = W^1[1,b] \concm \phi(W^1[b+1,\ell_1], W^2, \ldots, W^p)$.\\
\emph{Note: Here $W^1[b+1,\ell_1]$ cannot be an empty walk because by case (C.2) $b$ is a digon in $W^1$.}
\end{enumerate}
\item[X.] otherwise, $\phi(\fw) = W^1[1,a] \concm \phi(W^1[a+1,\ell_1], W^2, \ldots, W^p)$.\\
\emph{Note: The case C.X will form a ``common case'' with the case D.X.}
\end{enumerate}
\item[D.] if the vertex $v^1_1$ occurs in exactly two different walks:\\
Let $i$ be the index of the another walk $W^i$ in which $v^1_1$ occurs and let $b$ be the index of the first occurrence of $v^1_1$ in $W^i$.
\begin{enumerate}
\item[1.] if $r^1_{1} = r^i_b = 0$, then $\phi(\fw) = \fw \leftrightarrow^{1,i}_{1,b}$.
\item[2.] if the index $b$ is not a digon in $W^i$, then $\phi(\fw) = \fw \frown^{1,i}_{1,b}$.\\
\emph{Note: If neither case (1) nor (2) applies, then $r^1_{1}\neq 0$.}
\item[3.] if $v^1_1$ occurs at least twice in $W^i$, then let $c$ be the index of its second occurrence and $\phi(\fw) = \fw \leftrightarrow^{1,i}_{2,c+1}$.\\
\emph{Note: It can happen that one of the suffixes in this case is empty. However, both of them cannot be empty at the same time because $W^1$ and $W^i$ have different ending vertices because $\fw$ is proper.}\\
\emph{Note: In the remaining cases, $v^1_1$ occurs exactly once in $W^i$, and this occurrence is a digon at index $b$.}\\
Now, let $a$ be the index of the last occurrence of $v^1_1$ in $W^1$ (if $v^1_1$ occurs only once in $W^1$, then $a=1$).
\item[4.] if $W^1[2,a-1]$ is not a palindrome, then $\phi(\fw) = (W^1\overleftarrow{[2,a-1]}, W^2, \ldots, W^p)$.\\
\emph{Note: If $a=1$, then $W^1[2,a-1]$ is the empty walk which is a palindrome.}
\item[5.] if $a = \ell_1$, then $\phi(\fw) = W^1 \sqcup \phi(W^2, \ldots, W^p)$.
\item[6.] if $v^1_{a+1} = v^i_{b+1}$, then $\phi(\fw) = \fw \leftrightarrow^{1,i}_{a+1,b+1}$.\\
\emph{Note: By case (5) it holds that $a < \ell_1$ and by case (2) it holds that $b < \ell_i$.}
\item[X.] otherwise, $\phi(\fw) = W^1[1,a] \concm \phi(W^1[a+1,\ell_1], W^2, \ldots, W^p)$.\\
\emph{Note: The case D.X will form a ``common case'' with the case C.X.}
\end{enumerate}
\end{enumerate}
\end{definition}

\begin{figure}[ht]
\centering
\includegraphics[width=16.5cm]{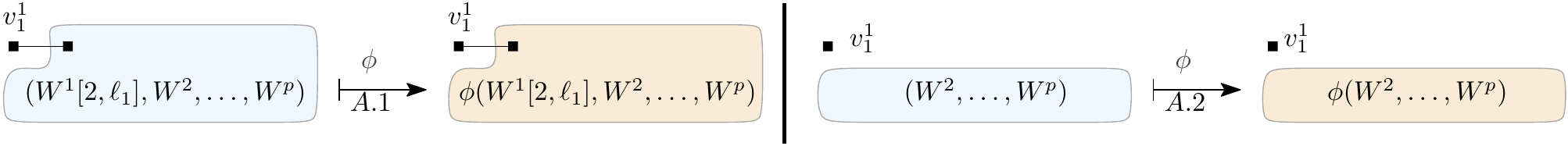}
\caption{Examples of cases A.1 and A.2 of the definition of $\phi$.
The vertex $v_1^1$ can be either labeled or unlabeled.}
\label{fig:definition_of_phi_case_A}
\end{figure}

\begin{figure}[ht]
\centering
\includegraphics[width=6.5cm]{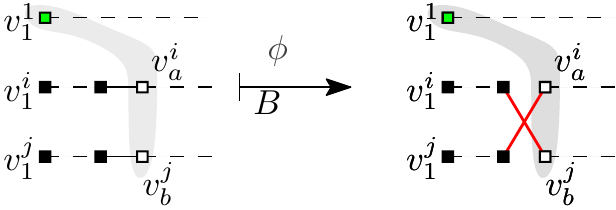}
\caption{Example of case B of the definition of $\phi$.
All vertices inside the grey bag are different occurrences of the same vertex $v^1_1$ of the graph. The white vertices $v_a^i$ and $v_b^j$ are unlabeled, the black vertices could be labeled or unlabeled, and the green vertex $v^1_1$ is labeled in this specific example.}
\label{fig:definition_of_phi_case_B}
\end{figure}

\begin{figure}[ht]
\centering
\includegraphics[width=7.5cm]{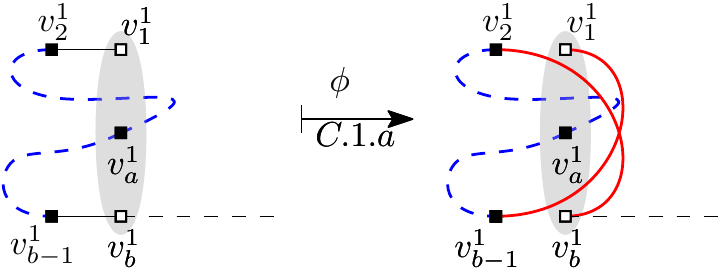}
\caption{Example of case C.1.a of the definition of $\phi$.
All vertices inside the grey bag are different occurrences of the same vertex $v^1_1$ of the graph.
The vertices $v^1_1$ and $v^1_b$ are unlabeled and the black vertices $v^1_2$, $v^1_a$, and $v^1_{b-1}$ can be either labeled or unlabeled.}
\label{fig:definition_of_phi_case_C1a}
\end{figure}

\begin{figure}[ht]
\centering
\includegraphics[width=12cm]{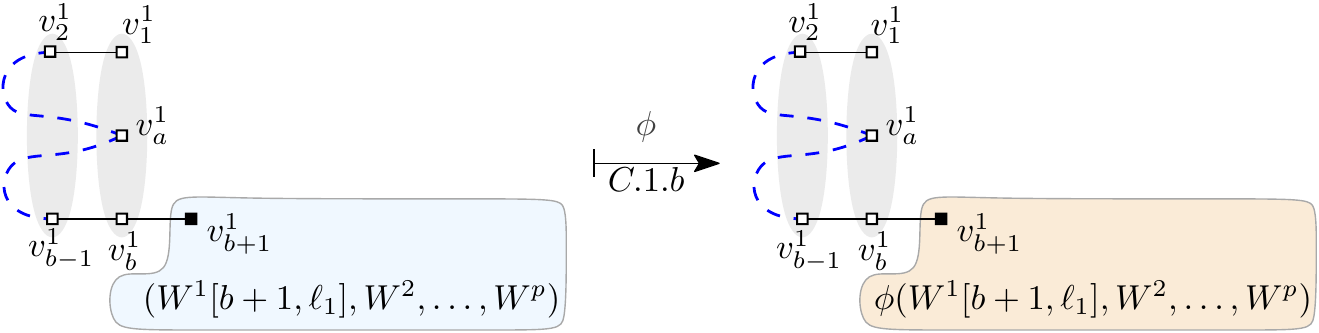}
\caption{Example of case C.1.b of the definition of $\phi$.
All vertices inside the grey bags are the same vertex of the graph.
By case C.1.a, the blue subwalk $W^1[2,b-1]$ is palindrome, and therefore by \Cref{lem:phi_palindrome}, the vertices in it are unlabeled.}
\label{fig:definition_of_phi_case_C1b}
\end{figure}

\begin{figure}[ht]
\centering
\includegraphics[width=9.5cm]{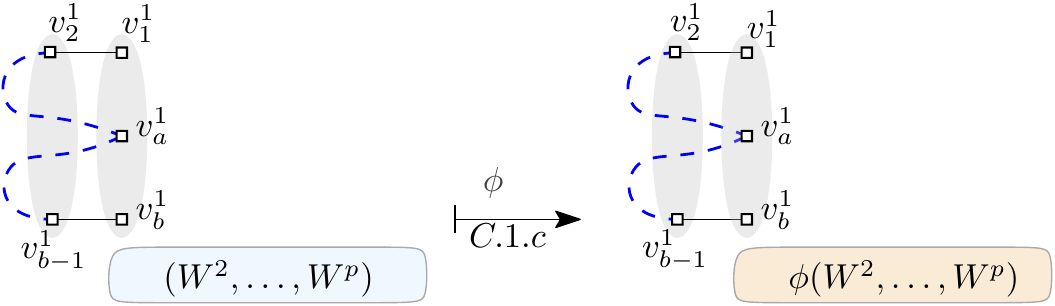}
\caption{Example of case C.1.c of the definition of $\phi$.
All vertices inside the grey bags correspond to the same vertex of the graph.
By case C.1.a, the blue subwalk $W^1[2,b-1]$ is palindrome, and therefore by \Cref{lem:phi_palindrome}, the vertices in it are unlabeled.}
\label{fig:definition_of_phi_case_C1c}
\end{figure}

\begin{figure}[ht]
\centering
\includegraphics[width=12.5cm]{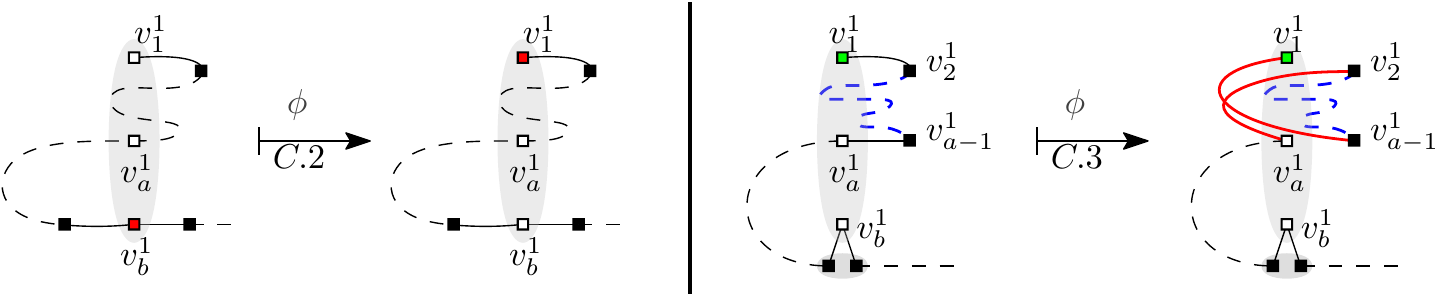}
\caption{Example of cases C.2 and C.3 of the definition of $\phi$.
Vertices inside the same grey bags correspond to the same vertex of the graph. On the left part of the figure (case C.2), in the initial configuration the vertex $v_1^1$ is unlabeled  and the vertex $v_b^1$ is labeled (with color red) and the application of $\phi$ in this case exchanges this label from $v_b^1$ to $v_1^1$.
On the right part of the figure, $v_b^1$ is a digon and therefore it is unlabeled and by cases C.1 and C.2, $v_1^1$ has to be labeled (depicted in green).}
\label{fig:definition_of_phi_case_C2_C3}
\end{figure}

\begin{figure}[ht]
\centering
\includegraphics[width=16cm]{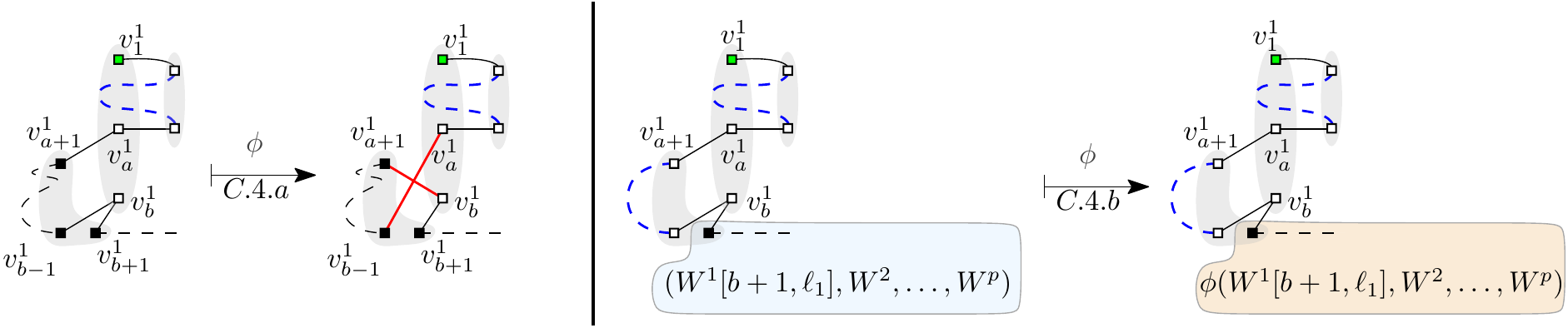}
\caption{Examples of cases C.4.a and C.4.b of the definition of $\phi$.
By case C.2, $b$ is a digon on $W^1$ and by case C.3, $W^1[2,a-1]$ is a palindrome.
For both case C.4.a and case C.4.b, we have that $v_{a+1}^1 = v_{b-1}^1$ ($v_{a+1}^1$, $v_{b-1}^1$, and $v_{b+1}^1$ are in the same grey bag). 
If $W^1[a+1,b-1]$ is not a palindrome, then we are in case C.4.a (on the left), while if $W^1[a+1,b-1]$ is a palindrome, we are in case C.4.b. (on the right).}
\label{fig:definition_of_phi_case_C4}
\end{figure}

\begin{figure}[ht]
\centering
\includegraphics[width=11.5cm]{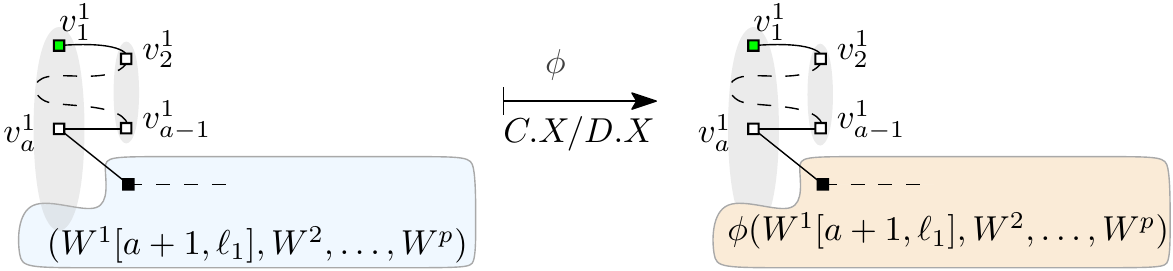}
\caption{Examples of cases C.X and D.X of the definition of $\phi$.}
\label{fig:definition_of_phi_case_CXDX}
\end{figure}

\begin{figure}[ht]
\centering
\includegraphics[width=16.5cm]{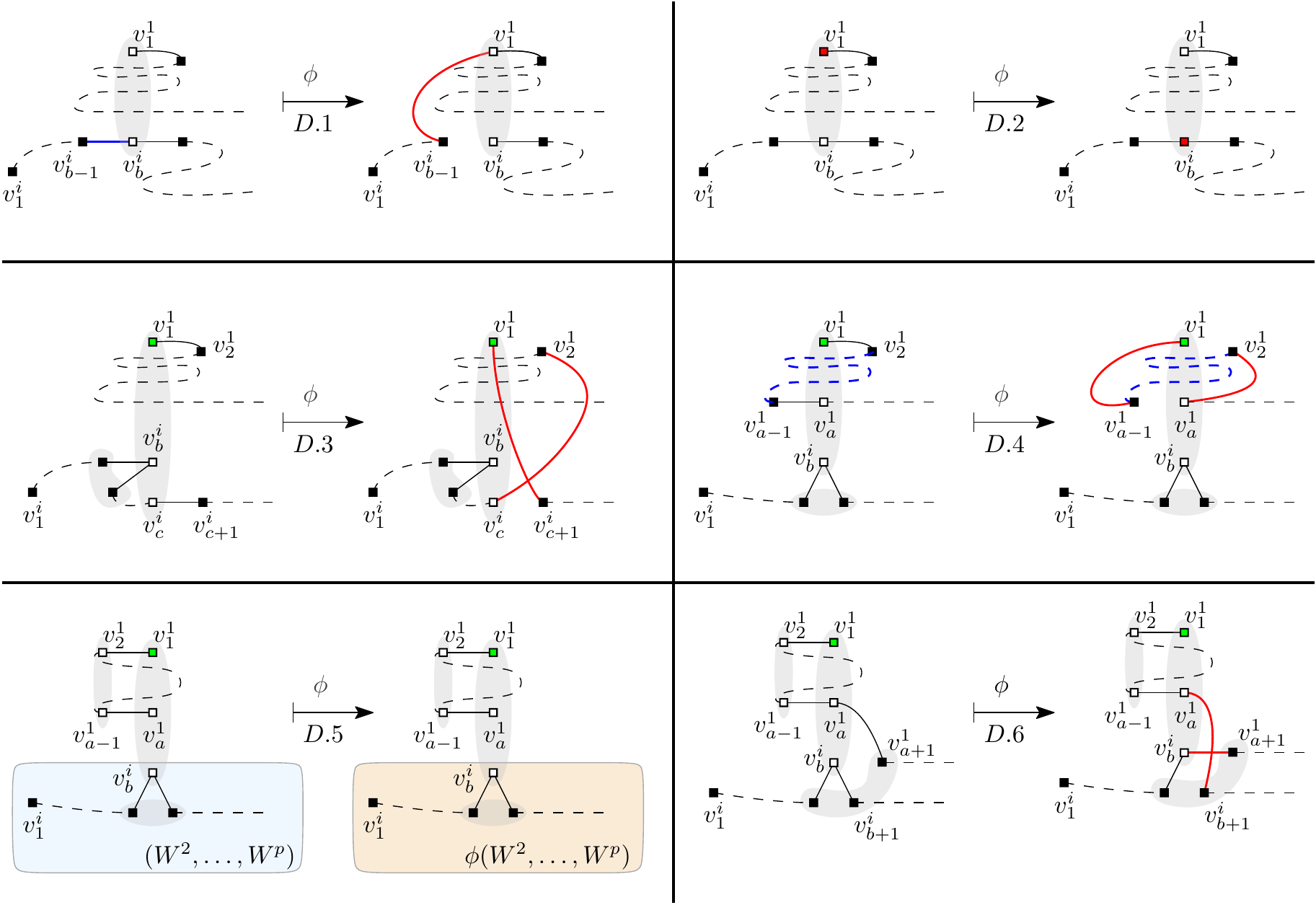}
\caption{Example of cases D.1-D.6 of the definition of $\phi$.
In each grey bag, all vertices inside the corresponding bag are the same vertex of the graph. Labeled vertices are depicted in red and green. White vertices correspond to unlabeled vertices and black vertices can be either labeled or unlabeled.}
\label{fig:definition_of_phi_case_D}
\end{figure}

\medskip
\noindent\emph{Intuition for $\phi$.}
Before laboriously proving that $\phi$ indeed is a function from proper barren labeled \wlkgs to proper barren labeled \wlkgs satisfying the required properties, let us give some rough outline of ideas behind it.
First, the general idea is that if the \wlkg $\fw$ goes to a certain case, then the \wlkg $\phi(\fw)$ goes again to the same case, which then maps it back to $\fw$.
The only exception is that the cases C.X and D.X could map to each other.

Then, let us consider the cases relevant for a single walk, i.e., the case A.1 and the cases under C.
Here, the intuition of case A.1 is to just move forward in the walk: we don't care much about what $\phi$ does to the rest of the walk because it must preserve the vertex right after $v^1_1$, and attaching $v^1_1$ to the front will not create a digon because $v^1_1$ occurs only at one index.
Then, case C.1.a is the standard loop reversal case, which is safe because neither index $1$ nor $b$ is labeled.
The case C.1.b (and C.1.c) corresponds to ignoring a palindromic subwalk, which can be safely done by \Cref{lem:phi_palindrome}.
Then, case C.2 is the standard label swap case, which is safe because the index $b$ is a not digon (note that the index $1$ is never a digon).
The cases C.1--C.2 are in some sense the ``easy cases'', while the cases C.3--C.X require more analysis of the remaining situation and quite unintuitive design.
First, if neither C.1 nor C.2 applies, we know that the index $1$ is labeled and the index $b$ is digon.
The purpose of case C.3 is to, in some sense reduce to a situation where we pretend that the vertex $v^1_1$ occurs only at indices $1$, $a$, and $b$, as the walk between $1$ and $a$ is an irrelevant palindromic loop.
Then, case C.4 handles a corner condition which would prevent case C.X from working.
The case C.X ignores the palindromic loop between $1$ and $a$, leaving the only occurrence of the vertex $v^1_1$ in the rest of the walk to be at the digon $b$, which in some sense makes it ``harmless'' in that the recursive calls will never need to analyse the vertex $v^1_b$ again as the first vertex.

The intuition for the case of multiple walks is as follows.
First, the case A.2 is just an analogue of A.1 when the first walk has length $1$.
Then, if the vertex $v^1_1$ occurs multiple times, we consider three different cases: $v^1_1$ occurs in at least three walks, $v^1_1$ occurs in one walk, and $v^1_1$ occurs in two walks.
Here, the three walks case B is quite easy, as we can just consider two of the walks where $v^1_1$ is not labeled, circumventing all issues with labeled digons.
When $v^1_1$ occurs in only one walk we go to the one walk case C.
Then, when $v^1_1$ occurs in two walks $W^1$ and $W^i$, the intuition of cases under D is that we concatenate $W^1$ with reversed $W^i$, with some special marker in between, and then apply the single walk cases under C for this concatenation.
Here, in the case D.X this can change whether $v^1_1$ occurs in two walks or a single walk, and therefore it is necessary to have the common case of C.X and D.X, moreover taking care in the proof that moving back from C.X to D.X will be handled correctly.

\medskip
\noindent\emph{Correctness proof for $\phi$.}
We will then proceed to first show that $\phi$ is well-defined, then that $\phi$ maps proper barren labeled \wlkgs to proper barren labeled \wlkgs, and then that $\phi$ satisfies all of the properties stated in \autoref{lem:main_phi_exists}, with $\phi(\phi(\fw)) = \fw$ being the most complicated of them to prove.
The proof is long because we have to analyze most of the 18 cases one by one. However, most of the arguments in these proofs are relatively easy once the definition of $\phi$ is set.
The main challenge in the proof was to come up with the right definition of $\phi$. 

\medskip
\noindent\emph{Well-definedness of  $\phi$.}
In \autoref{def:invophi}, in several cases, namely A.1, A.2, C.1.b, C.1.c, C.4.b, C.X, D.5, and D.X, the function  $\phi$  is defined recursively.
A priori it is not even clear why the syntactic value $\phi(\fw)$ is even well-defined in these cases.
It requires proof that in these cases the recursive argument is in the domain of $\phi$, in particular that it is also a proper barren labeled \wlkg.

Next we show that the syntactic value $\phi(\fw)$ for proper barren labeled \wlkgs $\fw$ is well-defined.
We remark that \Cref{lemma:well-definedphi} does not yet show that $\phi(\fw)$ is a proper barren labeled \wlkg; it will require more efforts to prove (see~\Cref{lem:rand_phi_proper,lem:presdigon,lem:nolabeleddigons}). 

\begin{lemma}\label{lemma:well-definedphi}
In case A.1 of \Cref{def:invophi} it holds that $(W^1[2,\ell_1], W^2, \ldots, W^p)$ is a proper barren labeled \wlkg, in cases A.2, C.1.c, and D.5 it holds that $(W^2, \ldots, W^p)$ is a proper barren labeled \wlkg, in cases C.1.b, and C.4.b it holds that $(W^1[b+1,\ell_1], W^2, \ldots, W^p)$ is a proper barren labeled \wlkg, and in cases C.X and D.X it holds that $(W^1[a+1, \ell_1], W^2, \ldots, W^p)$ is a proper barren labeled \wlkg.
\end{lemma}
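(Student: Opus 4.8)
The plan is to verify, case by case, the two requirements \emph{proper} and \emph{barren} for each recursive argument listed. Write $\fw=(W^1,\dots,W^p)$ as in \Cref{def:invophi}, and let $\fw''$ denote the recursive argument of the case under consideration. In every such case $\fw''$ arises from $\fw$ either by deleting a prefix of $W^1$ (cases A.1, C.1.b, C.4.b, C.X, D.X) or by deleting the whole walk $W^1$ (cases A.2, C.1.c, D.5); in particular the total length strictly decreases, which is also exactly what makes the recursive definition of $\phi$ well-founded. In the walk-deletion cases one also notes that $p\ge 2$ whenever $\fw\in\bls$ (a single-walk $\fw$ of the shape arising in A.2 or C.1.c is itself a labeled \lkg and hence not barren, and case D needs $v^1_1$ to occur in two walks), so that $(W^2,\dots,W^p)$ is a genuine labeled \wlkg.

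For \emph{properness} I would argue once and for all that any labeled \wlkg obtained from a proper one by such a deletion is again proper. Injectivity is immediate since the multiset of used labels can only shrink; the colours at labeled indices stay pairwise distinct for the same reason; and distinctness of ending vertices is immediate because deleting a prefix of $W^1$ leaves its ending vertex unchanged while deleting $W^1$ removes one vertex from a set that was already a set of distinct vertices. The only point needing an argument is the absence of labeled digons: after the index shift, a digon of $W^1[a+1,\ell_1]$ or of $W^1[2,\ell_1]$ is also a digon of $W^1$ carrying the same label, and the first position of a walk is never a digon; hence truncation cannot create a labeled digon.

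The substance is \emph{barrenness}, which I would establish by contraposition: suppose a labeled \lkg $\fw'$ witnesses that $\fw''$ is not barren (matching start tuple, ending set and collected set, length at most that of $\fw''$, and $E(\fw')\subseteq E(\fw'')$); from $\fw'$ I construct a labeled \lkg witnessing that $\fw$ is not barren, contradicting $\fw\in\bls$. The construction is uniform: reinsert the vertex $v^1_1$ — labeled exactly when $r^1_1\ne 0$ — either by prepending the single-vertex path $(v^1_1)$ to the first path of $\fw'$ (in the prefix-deletion cases) or by adding $(v^1_1)$ as a new first path (in the walk-deletion cases). Validity rests on three checks. First, the connecting edge needed in the prefix-deletion cases ($v^1_1 v^1_2$ in A.1, and $v^1_1 v^1_{b+1}$ or $v^1_1 v^1_{a+1}$ elsewhere) lies in $E(W^1)\subseteq E(\fw)$, because $v^1_1$ coincides with the last vertex $v^1_b$ or $v^1_a$ of the deleted prefix. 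Second, the collected set, length, start tuple and ending set of the new \lkg agree with those of $\fw$; the only nontrivial item is the collected set, and here I would combine \Cref{lem:phi_palindrome} (every palindromic subwalk produced by a higher-priority case carries no labels) with properness of $\fw$ (no vertex is labeled twice) to conclude that the deleted part of $W^1$ has collected set $\{v^1_1\}$ when $r^1_1\ne 0$ and empty otherwise, which matches precisely the effect of the reinserted (possibly labeled) vertex. Third, the reinsertion must not collide with $\fw'$, i.e.\ we need $v^1_1\notin V(\fw')$.

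This last check is where the cases genuinely differ, and it is the step I expect to be the main obstacle. In A.1, A.2, C.1.b, C.1.c, C.4.b the vertex $v^1_1$ simply does not occur in $\fw''$ at all — it occurs only once in $\fw$ (the A-cases), or only in $W^1$ with last occurrence at the cut point $b$ (the C-cases) — so $V(\fw')\subseteq V(\fw'')$ already excludes it. In the harder cases C.X, D.X, D.5 the vertex $v^1_1$ does survive into $\fw''$, but exactly once and precisely at a digon (the one forced by the failure of case C.2, respectively D.2). Hence $v^1_1$ has only a single incident edge in $E(\fw'')\supseteq E(\fw')$, so it cannot appear as an internal vertex of any path of $\fw'$; and a short inspection of the starting and ending vertices of $\fw''$ shows $v^1_1$ is not a path-endpoint of $\fw''$ either. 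Therefore $v^1_1\notin V(\fw')$, and the reinsertion goes through. Assembling these verifications, the bulk of the remaining work is precisely this case split for C.X, D.X, D.5, together with the careful use of the failure conditions of the higher-priority cases ($r^1_1\ne 0$, $W^1[2,a-1]$ a palindrome, $b$ a digon) to pin down which vertices of the deleted prefix are labeled.
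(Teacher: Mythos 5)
Your approach matches the paper's almost step for step: truncate or drop $W^1$, observe properness is clearly preserved, then argue barrenness by contraposition, reinserting $v^1_1$ into a hypothetical witness and splitting on whether $v^1_1$ survives in the recursive argument (it doesn't in A/C.1/C.4; it survives exactly once as a digon in C.X, D.5, D.X). Your key observation that $V(\fw')\subseteq V(\fw'')$ follows from the edge-containment and start-vertex properties is also exactly what the paper uses, and your handling of the collected set via \Cref{lem:phi_palindrome} together with the failure conditions of the higher-priority cases is the right bookkeeping.

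However, one step does not go through as written. You justify $p\ge 2$ in the walk-deletion cases by claiming that ``a single-walk $\fw$ of the shape arising in A.2 or C.1.c is itself a labeled \lkg and hence not barren.'' This is true for A.2 (where $W^1$ has length $1$) but false for C.1.c: there $W^1$ has length $\ell_1=b\ge 2$ and visits $v^1_1$ at least twice (at indices $1$ and $b$), so $W^1$ is not a path and $\fw=(W^1)$ is not a labeled \lkg. The correct argument, which the paper gives, is that the \emph{single-vertex} path $(W^1[1,1])$ serves as a barrenness witness: since $r^1_1=r^1_b=0$ and $W^1[2,b-1]$ is a palindrome, \Cref{lem:phi_palindrome} gives $\col(W^1)=\emptyset$, and $v^1_1=v^1_{\ell_1}$ so the start/end sets match, length $1\le\ell_1$, $E=\emptyset\subseteq E(W^1)$. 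The same one-vertex-witness argument is what the paper also uses to rule out $p=1$ in D.5 (though as you note, D.5 being a subcase of D already forces $p\ge 2$). Relatedly, you never verify that the truncated first walk is nonempty in the prefix-deletion cases (e.g.\ $b<\ell_1$ in C.4.b via the digon at $b$, $a<\ell_1$ in C.X/D.X); the paper does spell these out, and they are needed for the recursive argument to be a well-formed labeled \wlkg. These are small, patchable omissions, but the C.1.c claim as stated is wrong.
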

\begin{proof}
In all cases, the labeled \wlkg used as the recursive argument is obtained from $\fw$ by removing either the walk $W^1$ or a prefix of $W^1$.
First we need to argue that the recursive argument is a labeled \wlkg.
For this, the only thing to argue is that (1) the recursive argument contains
at least one walk (i.e., $p\ge 2$ in cases A.2, C.1.c, and D.5) and that (2) all walks in the recursive argument are non-empty (i.e. $\ell_1 \ge 2$ in case A.1, $b < \ell_1$ in cases C.1.b and C.4.b, and $a<\ell_1$ in cases C.X and D.X).
The other properties of labeled \wlkgs are clearly satisfied when removing either $W^1$ or a prefix of $W^1$.

The above conditions are satisfied directly by definition in cases A.1 and C.1.b.
In case C.4.b, $b < \ell_1$ holds by the fact that (due to case C.2) the index $b$ is a digon in $W^1$.
In case C.X, recall that $a$ is the index of the second last occurrence of $v^1_1$ in $W^1$, so $a < \ell_1$.
In case D.X, we have that $a < \ell_1$ by case D.5.
For the remaining cases A.2, C.1.c, and D.5, observe the following.
If $p=1$ would hold, then $\fw=(W^1)$.
Then, since in all these three cases $v^1_{\ell_1} = v^1_1$ and $W^1[2,\ell_1]$ cannot contain labels (in A.2 trivially, in C.1.c by $r^1_1 = r^1_b = 0$ and \autoref{lem:phi_palindrome}, and in D.5 by cases D.1, D.2, and D.4 combined with \autoref{lem:phi_palindrome}),
it should hold that $(W^1[1,1])$ is a labeled \lkg consisting only of one walk with one vertex that would contradict the fact that $\fw$ is barren by \Cref{def:barren}.

It is clear by definition of a proper labeled \wlkg that removing a walk or a prefix of a walk maintains that the \wlkg is proper.
To complete the proof, it remains to show case by case that the labeled \wlkgs used as recursive arguments are barren.

In all of the cases the proofs will follow the same template: For the sake of contradiction we suppose that the labeled \wlkg $\fw'$ used as a recursive argument is not barren, then consider the labeled \lkg $\fw''$ that witnesses that $\fw'$ is not barren, and then use $\fw''$ to construct a labeled \lkg that shows that $\fw$ is not barren, obtaining a contradiction.
We spell out these steps in detail for the case A.1, and in less detail for subsequent cases.

\smallskip\noindent\textsl{Case A.1.}
For the sake of contradiction, suppose that $\fw' = (W^1[2,\ell_1], W^2, \ldots, W^p)$ is not barren.
Then by the definition of barren, there exists a labeled \lkg $\fw''$ with $\svs(\fw'') = \svs(\fw')$, $\ev(\fw'') = \ev(\fw')$, $\col(\fw'') = \col(\fw')$, length $\le \ell-1$, and edges $E(\fw'') \subseteq E(\fw')$.
By the assumptions of case A.1, the labeled \wlkg $\fw'$ does not contain $v^1_1$, so the labeled \lkg $\fw''$ cannot contain $v^1_1$ because the edge property $E(\fw'') \subseteq E(\fw')$ ensures that $v^1_1$ cannot occur in a walk of length more than one, and the start vertex property $\svs(\fw'') = \svs(\fw')$ ensures that $v^1_1$ cannot occur in a walk of length one.
By the start vertex property, it holds that the first vertex of the first walk in $\fw''$ is $v^1_{2}$.
Therefore, $W^1[1,1] \concm \fw''$ is a labeled \lkg.
Because $v^1_1 v^1_2 \in E(\fw)$ and $E(\fw'') \subseteq E(\fw') \subseteq E(\fw)$, we have that $E(W^1[1,1] \concm \fw'') \subseteq E(\fw)$.
Also, observe that because of $\col(\fw'') = \col(\fw')$, it holds that $\col(W^1[1,1] \concm \fw'') = \col(\fw)$.
Similarly, we observe that $\svs(W^1[1,1] \concm \fw'') = \svs(\fw)$, $\ev(W^1[1,1] \concm \fw'') = \ev(\fw)$, and the length of $\fw''$ is at most $\ell$.
Therefore, $W^1[1,1] \concm \fw''$ is a labeled \lkg that according to \Cref{def:barren} contradicts the fact that $\fw$ is barren.

\smallskip\noindent\textsl{Case A.2.}
Again, suppose that $\fw' = (W^2, \ldots, W^p)$ is not barren, and consider the witness $\fw''$.
Because $v^1_1$ does not occur in $\fw''$, it holds that $W^1 \sqcup \fw''$ is a labeled \lkg that contradicts that $\fw$ is barren.

\smallskip\noindent\textsl{Case C.1.b.}
Suppose that $\fw' = (W^1[b+1, \ell_1], W^2, \ldots, W^p)$ is not barren and consider the witness $\fw''$.
As, by definition of $b$ in case C, $v^1_1$ occurs in $\fw$ only in the subwalk $W^1[1,b]$, it cannot occur in $\fw''$.
Therefore $W^1[1,1] \concm \fw''$ is a labeled \lkg.
It is easy to observe that $\svs(W^1[1,1] \concm \fw'') = \svs(\fw)$, $\ev(W^1[1,1] \concm \fw'') = \ev(\fw)$, and $E(W^1[1,1] \concm \fw'') \subseteq E(\fw)$.
Also, \autoref{lem:phi_palindrome} implies that $\col(W^1[1,b]) = \emptyset$ and therefore $\col(W^1[1,1] \concm \fw'') = \col(\fw)$, therefore contradicting that $\fw$ is barren.

\smallskip\noindent\textsl{Case C.1.c.}
This case is similar as the previous, in particular, \autoref{lem:phi_palindrome} implies that $\col(W^1) = \emptyset$.
Therefore, if we assume that $\fw' = (W^2, \ldots, W^p)$ is not barren and we take the labeled \lkg $\fw''$ that witnesses that $\fw'$ is not barren, we can construct a labeled \lkg $W^1[1,1] \sqcup \fw''$ that contradicts the fact that $\fw$ is barren.

\smallskip\noindent\textsl{Case C.4.b.}
Assume that $\fw' = (W^1[b+1, \ell_1], W^2, \ldots, W^p)$ is not barren and take the labeled \lkg $\fw''$ that witnesses that $\fw'$ is not barren.
As $v^1_1$ occurs in $\fw$ only in the subwalk $W^1[1,b]$, it cannot occur in $\fw''$.
Therefore $W^1[1,1] \concm \fw''$ is a labeled \lkg.
Note that in this case $W^1[2,a-1]$ is a palindrome, the index $a$ of the walk $W^1$ is not labeled because the index $1$ is labeled and $\fw$ is proper, and $W^1[a+1,b-1]$ is a palindrome, and the index $b$ of $W^1$ is not labeled.
Therefore, by \autoref{lem:phi_palindrome}, $\col(W^1[1, b]) = \{v^1_1\}$, and therefore $\col(W^1[1,1] \concm \fw'') = \col(\fw)$, and therefore we contradict the fact that $\fw$ is barren.

\smallskip\noindent\textsl{Case C.X.}
In this case, the argument is less apparent because  $v^1_1$ indeed occurs in $W^1[a+1,\ell_1]$.
Again, we start by assuming that $\fw' = (W^1[a+1, \ell_1], W^2, \ldots, W^p)$ is not barren, and take the labeled \lkg $\fw''$ that witnesses that $\fw'$ is not barren.
Now, note that $v^1_1$ occurs in $\fw'$ only as a single digon in $W^1[a+1, \ell_1]$.
Therefore, $v^1_1$ cannot occur as a starting or ending vertex in $\fw''$.
Also, there is only one edge in $E(\fw')$ incident to $v^1_1$, which then prevents $v^1_1$ occuring at any position in $\fw''$, because any position containing $v^1_1$ would have to a digon, but $\fw''$ is labeled \lkg and thus does not contain digons.
Therefore, we construct a labeled \lkg $W^1[1,1] \concm \fw''$ and use the fact that $W^1[2,a-1]$ is palindrome with \autoref{lem:phi_palindrome} to conclude that $\col(W^1[1,1] \concm \fw'') = \col(\fw)$, and to finally observe that $W^1[1,1] \concm \fw''$ satisfies also all the other needed properties to contradict the fact that $\fw$ is barren.

\smallskip\noindent\textsl{Case D.5.}
Note that here we have that $r^1_{1} \neq 0$, $v^1_1 = v^1_{\ell_1}$, and $W^1[2,\ell_1-1]$ is a palindrome.
The arguments are similar to case C.X:
The vertex $v^1_1$ does occur in the \wlkg $(W^2, \ldots, W^p)$, but it occurs in it only a single time, which is a digon in the walk $W^i$.
So suppose that $\fw' = (W^2, \ldots, W^p)$ is not barren, and consider the witness $\fw''$.
By similar arguments as in case C.X, $v^1_1$ cannot occur in $\fw''$.
Therefore, we construct a labeled \lkg $W^1[1,1] \sqcup \fw''$ and use the fact that $W^1[2,a-1]$ is palindrome with \autoref{lem:phi_palindrome} to conclude that $\col(W^1[1,1] \sqcup \fw'') = \col(\fw)$, and to finally observe that $W^1[1,1] \sqcup \fw''$ satisfies also all the other needed properties to contradict the fact that $\fw$ is barren.

\smallskip\noindent\textsl{Case D.X.}
Note that here again, we have that $r^1_{1} \neq 0$ and $W^1[2,a-1]$ is a palindrome, where $a$ is the last occurrence of $v^1_1$ in $W^1$.
The arguments are similar to case C.X:
The vertex $v^1_1$ does occur in the \wlkg $(W^1[a+1,\ell_1], W^2, \ldots, W^p)$, but it occurs in it only a single time, which is a digon in the walk $W^i$.
So again suppose that $\fw' = (W^1[a+1, \ell_1], W^2, \ldots, W^p)$ is not barren, and consider the witness $\fw''$.
Again by arguments of C.X we have that $v^1_1$ cannot occur in $\fw''$.
Therefore, we again construct a labeled \lkg $W^1[1,1] \concm \fw''$ and use the fact that $W^1[2,a-1]$ is palindrome with \autoref{lem:phi_palindrome} to conclude that $\col(W^1[1,1] \concm \fw'') = \col(\fw)$, and to finally observe that $W^1[1,1] \concm \fw''$ satisfies also all the other needed properties to contradict that $\fw$ is barren.
\end{proof}

The next three lemmas establish that $\phi(\fw)$ is a proper barren labeled \wlkg.
In addition, \Cref{lem:rand_phi_proper} shows that $\phi$ satisfies the properties $f(\phi(\fw)) = f(\fw)$, $\ev(\phi(\fw)) = \ev(\fw)$, and $\svs(\phi(\fw)) = \svs(\fw)$.

\begin{lemma}
\label{lem:rand_phi_proper}
Let $\fw$ be a proper barren labeled \wlkg.
It holds that $\phi(\fw)$ is a labeled \wlkg, $\svs(\phi(\fw)) = \svs(\fw)$, $\ev(\phi(\fw)) = \ev(\fw)$, and $f(\phi(\fw)) = f(\fw)$.
\end{lemma}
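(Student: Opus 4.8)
The plan is to prove the four conclusions simultaneously, by strong induction on the length $\ell$ of $\fw$. By \Cref{lemma:well-definedphi}, in every case of \Cref{def:invophi} that is recursive (A.1, A.2, C.1.b, C.1.c, C.4.b, C.X, D.5, D.X) the argument of the recursive call is again a proper barren labeled \wlkg, of strictly smaller length, so the induction hypothesis is available for it; the remaining cases (B, C.1.a, C.2, C.3, C.4.a, D.1, D.2, D.3, D.4, D.6) are base‑like and will be checked directly. It is convenient to phrase $f$‑invariance through the \emph{multiset of atoms} of a labeled \wlkg, by which I mean the multiset that contains $f_e(uv)$ once for each edge‑position of each walk and contains the pair $(f_v(v), f_c(c(v),r))$ once for each labeled position (at vertex $v$ with label $r$); since $f(\fw)$ is exactly the product over this multiset, $f(\phi(\fw))=f(\fw)$ is equivalent to invariance of the multiset of atoms.

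First I would record, once, the behaviour of the three atomic operations used by $\phi$, each of which I claim preserves (i) being a labeled \wlkg of order $p$ with all walks nonempty, (ii) $\svs$, (iii) $\ev$, and (iv) the multiset of atoms. For a subwalk reversal $W^1\overleftarrow{[\cdot,\cdot]}$: in each case where it is invoked the reversed segment is one of $W^1[2,b-1]$, $W^1[2,a-1]$, $W^1[a+1,b-1]$, none of which contains index $1$ or index $\ell_1$, so the starting and ending vertex of every walk is untouched, and reversing a segment leaves its edge set and its labeled positions unchanged, giving (iv). For a label swap $\fw\frown^{i,j}_{a,b}$: by the case definition $v^i_a=v^j_b$, and since $\fw$ is proper this vertex is not labeled twice while, having passed case C.1 resp.\ D.1, the two labels are not both zero, so exactly one of $r^i_a,r^j_b$ is zero and the operation is legitimate; it moves a single label between two occurrences of the \emph{same} vertex, hence (iv), and changes no vertex, hence (i)--(iii). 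For a suffix swap $\fw\leftrightarrow^{i,j}_{a,b}$: in cases B, D.1, D.3, D.6 the cut indices lie on occurrences of the common vertex $v^1_1$, so the two newly created junction edges $v^i_{a-1}v^j_b$ and $v^j_{b-1}v^i_a$ equal the two old junction edges (using $v^i_a=v^j_b$); thus the multiset of atoms is unchanged (labeled positions only migrate between walks), $\svs$ and $\ev$ are unchanged since the cuts carry the same vertex, and no walk becomes empty because at most one of the two suffixes is empty — both cannot be, as $\fw$ is proper and the two walks have distinct ending vertices. This settles all the non‑recursive cases at once.

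For the recursive cases I would exploit the common shape of the definition: either $\phi(\fw)=W^1[1,x]\concm\phi(\fw')$ with $\fw'=(W^1[x+1,\ell_1],W^2,\ldots,W^p)$ and $x\in\{1,a,b\}$ (cases A.1, C.1.b, C.4.b, C.X, D.X), or $\phi(\fw)=W^1\sqcup\phi(\fw')$ with $\fw'=(W^2,\ldots,W^p)$ (cases A.2, C.1.c, D.5), where in all of these $v^1_x=v^1_1$. The induction hypothesis applied to $\fw'$ (legitimate by \Cref{lemma:well-definedphi}) yields that $\phi(\fw')$ is a labeled \wlkg of the same order as $\fw'$, with $\svs(\phi(\fw'))=\svs(\fw')$, $\ev(\phi(\fw'))=\ev(\fw')$, and the same multiset of atoms as $\fw'$. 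In the $\concm$ shape this makes $W^1[1,x]\concm\phi(\fw')$ a labeled \wlkg of order $p$: the last vertex $v^1_x=v^1_1$ of the prepended prefix is adjacent to the first vertex of the first walk of $\phi(\fw')$, which by the $\svs$‑part equals the first vertex $v^1_{x+1}$ of $\fw'$'s first walk, and $v^1_xv^1_{x+1}\in E(G)$; prepending a fixed prefix alters no ending vertex; and the multiset of atoms of $\phi(\fw)$ equals (atoms of $W^1[1,x]$) together with the junction edge $v^1_xv^1_{x+1}$ together with (atoms of $\phi(\fw')$), which by the hypothesis equals the same expression with $\phi(\fw')$ replaced by $\fw'$, and this is exactly the atoms of $\fw$ by the decomposition of $W^1$ into $W^1[1,x]$ and $W^1[x+1,\ell_1]$. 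The $\sqcup$ shape is the same argument without the junction term, with the hypothesis applied to $(W^2,\ldots,W^p)$. Finally $\svs(\phi(\fw))=\svs(\fw)$ holds because only the first walk is modified and its first vertex stays $v^1_1$, while the hypothesis preserves the remaining starting vertices.

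The one genuine point of care, and the \textbf{main obstacle}, is not any deep argument but the bookkeeping that, in each of the $18$ cases, the operation invoked by \Cref{def:invophi} is actually applicable: that a subwalk reversal acts on a well‑formed segment strictly inside a walk (immediate from the inequalities $1\le a\le b-2$ relating the last two occurrences of $v^1_1$), that a label swap sees exactly one labeled endpoint (using that $\fw$ is proper, so the vertex $v^i_a=v^j_b$ is not labeled twice, together with the fact that case C.1 resp.\ D.1 has been passed), and that a suffix swap cuts at a common vertex and leaves no walk empty (using properness again for distinctness of ending vertices). None of these checks is hard individually; I would state the $\concm$/$\sqcup$ template and the three atomic facts once and then dispatch all of the cases by reference to them.
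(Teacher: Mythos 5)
Your proof follows the same induction-on-length strategy as the paper's, and the idea of isolating the three atomic operations (reversal, label swap, suffix swap) and then handling the recursive cases through a common $\concm$/$\sqcup$ template is a clean way to organize the case analysis. However, the justification of $f$-invariance for the atomic operations contains two concrete errors as written.

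\emph{Subwalk reversal.} You claim that reversing a segment ``leaves its edge set \ldots\ unchanged.'' That is only true for the edges strictly inside the segment. Reversing $W^1[a,b]$ replaces the boundary edges $v^1_{a-1}v^1_a$ and $v^1_b v^1_{b+1}$ by $v^1_{a-1}v^1_b$ and $v^1_a v^1_{b+1}$, and the multiset of edge atoms is preserved only because in every case where reversal is invoked the two flanking vertices coincide: $v^1_1 = v^1_b$ in C.1.a, $v^1_1 = v^1_a$ in C.3 and D.4, $v^1_a = v^1_b$ in C.4.a. Observing that the segment ``does not contain index $1$ or $\ell_1$'' addresses preservation of $\svs$ and $\ev$ but not the edge multiset. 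The paper's terse proof points to exactly these vertex equalities (``Case C.1.a works because $v^1_1 = v^1_b$,'' etc.); your write-up omits them.

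\emph{Suffix swap, case D.3.} You assert that in all of B, D.1, D.3, D.6 ``the cut indices lie on occurrences of the common vertex $v^1_1$,'' i.e.\ $v^i_a = v^j_b$. This is false for D.3: the operation is $\fw \leftrightarrow^{1,i}_{2,c+1}$, and $v^1_2$ need not equal $v^i_{c+1}$. What holds in D.3 is the equality of the vertices \emph{just before} the cuts, $v^1_1 = v^i_c$ (and the junction-edge argument goes through because $v^i_{a-1} = v^j_{b-1}$, not $v^i_a = v^j_b$). Moreover, D.3 is precisely the one case where a suffix may be empty, and in that case preserving $\ev$ does not follow from ``ending vertices are just swapped'' but requires identifying the new ending vertex of the truncated walk with an old ending vertex via $v^1_1 = v^i_c$; the paper works through this separately (the $\ell_1 = 1$ and $\ell_i = c$ subcases). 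Your remark ``$\ev$ are unchanged since the cuts carry the same vertex'' both cites the wrong identity and glosses over the degenerate case.

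Both issues are bookkeeping slips rather than a wrong approach, and the conclusion is unaffected once the correct identities are used; but as stated, the steps would fail if executed literally.
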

\begin{proof}
We prove the lemma by induction on the length of the \wlkg $\fw$.
Here, all of the cases should be easy to verify, so the arguments we provide will be terse.

Cases A.1 works directly by induction, in particular, we can use the induction assumptions that 
\begin{itemize}
\item $\phi(W^1[2,\ell_1], W^2, \ldots, W^p)$ is a labeled \wlkg,
\item $\svs(\phi(W^1[2,\ell_1], W^2, \ldots, W^p)) = \svs((W^1[2,\ell], W^2, \ldots, W^p))$,
\item $\ev(\phi(W^1[2,\ell_1], W^2, \ldots, W^p)) = \ev((W^1[2,\ell], W^2, \ldots, W^p))$, and 
\item $f(\phi(W^1[2,\ell_1], W^2, \ldots, W^p)) = f((W^1[2,\ell], W^2, \ldots, W^p))$,
\end{itemize}
to prove the same properties for $\fw$.
In particular, we use the start vertex property to ensure that the first vertex of the first walk of $\phi(W^1[2,\ell_1], W^2, \ldots, W^p)$ is $v^1_2$, and therefore the first edge of the first walk of $W^1[1,1] \concm \phi(W^1[2,\ell_1], W^2, \ldots, W^p)$ is $v^1_1 v^1_2$.

Case A.2 works similarly to A.1.
Case B works by the property that $v^i_a = v^j_b$, in particular observing that if both of the suffixes are non-empty, the suffix swap operation preserves the set of ending vertices $\ev(\fw)$.
Case C.1.a works because $v^1_1 = v^1_b$, and cases C.1.b and C.1.c work by similar induction as A.1.
Case C.2 works because $v^1_1 = v^1_b$, in particular, even though the index of the label in the walk changes, the vertex variable or the label-color pair variable do not change because the vertex does not change.
Case C.3 works because $v^1_1 = v^1_a$.
Case C.4.a works because $v^1_a = v^1_b$ and case C.4.b works by similar induction as A.1.
Case C.X works again by induction.
Case D.1 works because $v^1_1 = v^i_b$.
Case D.2 works because $v^1_1 = v^i_b$, by the same argument as case C.2.

In case D.3, all other conditions work directly by $v^1_1 = v^i_c$, but we should pay attention to the ending vertices condition $\ev(\phi(\fw)) = \ev(\fw)$, because it can happen that one of the suffixes is empty.
As observed already in the definition, observe that at most one of the suffixes can be empty because $v^1_1 = v^i_c$ and $W^1$ and $W^i$ have different ending vertices because $\fw$ is proper.
First, if $\ell_1 = 1$, then the ending vertex of $W^i$ becomes $v^i_c = v^1_1 = v^1_{\ell_1}$, and the ending vertex of $W^1$ becomes $v^i_{\ell_i}$, so the condition holds.
Second, if $\ell_i = c$, then the ending vertex of $W^1$ becomes $v^1_1 = v^i_c = v^i_{\ell_i}$, and the ending vertex of $W^i$ becomes $v^1_{\ell_1}$, so the condition holds.

Case D.4 works because $v^1_1 = v^1_a$.
Case D.5 works by induction.
Case D.6 works because $v^1_{a+1} = v^i_{b+1}$ and the suffixes are guaranteed to be non-empty.
Case D.X works by induction.
\end{proof}

Note that because $f(\fw)$ and $\svs(\fw)$ determine $R(\fw)$, $E(\fw)$, and the length of $\fw$ uniquely, \Cref{lem:rand_phi_proper} implies that $\phi(\fw)$ is a barren \wlkg (because $\fw$ is barren).
It remains to prove that $\phi(\fw)$ is proper, and to prove that $\phi(\fw)$ is proper the only remaining thing to prove is that $\phi(\fw)$ does not contain labeled digons (\Cref{lem:nolabeleddigons}).
In particular, the property $\ev(\phi(\fw)) = \ev(\fw)$ guarantees that the ending vertices of $\phi(\fw)$ are distinct, and $f(\phi(\fw)) = f(\fw)$, which implies $\col(\phi(\fw)) = \col(\fw)$ guarantees that the labeled vertices of $\phi(\fw)$ have different colors (because $\fw$ is proper).

We will make use of the following lemma that follows directly from \autoref{lem:rand_phi_proper}.

\begin{lemma}
\label{lem:presdigon}
If a vertex occurs exactly once in $\fw$ and this occurrence is a digon, then this vertex also occurs exactly once in $\phi(\fw)$ and this occurrence is also a digon with the same adjacent vertices. 
\end{lemma}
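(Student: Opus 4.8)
The plan is to derive the statement directly from \Cref{lem:rand_phi_proper}, which already guarantees that $\phi(\fw)$ is a labeled \wlkg with $\svs(\phi(\fw)) = \svs(\fw)$, $\ev(\phi(\fw)) = \ev(\fw)$, and $f(\phi(\fw)) = f(\fw)$. The point to emphasise is that the last equality is an equality of \emph{monomials} in the formal variables $f_e,f_v,f_c$, hence it records the exponent of every such variable verbatim; in particular it preserves, for every edge, the exact number of times that edge is traversed. So the whole argument is a short counting argument transferring edge-traversal multiplicities at $v$ from $\fw$ to $\phi(\fw)$.

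First I would set up the bookkeeping for $\fw$. Let the unique occurrence of $v$ in $\fw$ be at an internal index of some walk and form a digon with adjacent vertex $x$, i.e.\ the corresponding subwalk is $x,v,x$ (note $x \neq v$ since $G$ is simple). Since $v$ occurs exactly once in $\fw$, every traversal of an edge incident to $v$ in $\fw$ is one of the two traversals of the edge $xv$ around that occurrence; therefore in $f(\fw)$ the variable $f_e(xv)$ has exponent exactly $2$, while $f_e(yv)$ has exponent $0$ for every $y \neq x$. Moreover, because this unique occurrence is internal, $v$ is neither a starting nor an ending vertex of any walk of $\fw$, i.e.\ $v \notin \svs(\fw)$ and $v \notin \ev(\fw)$ (viewing the starting tuple and the ending set as sets of vertices).

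Next I would transfer this to $\phi(\fw)$. By $\svs(\phi(\fw)) = \svs(\fw)$ and $\ev(\phi(\fw)) = \ev(\fw)$ from \Cref{lem:rand_phi_proper}, we get $v \notin \svs(\phi(\fw))$ and $v \notin \ev(\phi(\fw))$, so every occurrence of $v$ in $\phi(\fw)$ lies at an internal index of a walk (in particular not in a length-$1$ walk), and hence accounts for exactly two traversals of edges incident to $v$. From $f(\phi(\fw)) = f(\fw)$, the only edge incident to $v$ appearing in $\phi(\fw)$ is $xv$, with exponent exactly $2$; thus $v$ participates in exactly two incident edge-traversals in total, and dividing by two shows that $v$ occurs exactly once in $\phi(\fw)$. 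Finally, that single internal occurrence, say at index $j$ of a walk, has both of its incident traversals equal to $xv$, so its predecessor and its successor are both $x$; hence it is a digon with the same adjacent vertex $x$ as in $\fw$.

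The only point I would be careful to state explicitly — and the closest thing to an obstacle — is that the characteristic-$2$ field plays no role here: $f(\phi(\fw)) = f(\fw)$ is an identity of single monomials, so exponents (edge multiplicities and numbers of labeled occurrences) are preserved exactly; characteristic $2$ only matters when summing many monomials. Everything else is the elementary fact that an internal occurrence of a vertex contributes two incident edge-traversals while an endpoint occurrence contributes one, which we already used to pin down the count to one.
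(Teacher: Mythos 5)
Your proof is correct and follows the same route as the paper's, deriving everything from \Cref{lem:rand_phi_proper} via preservation of $\svs$, $\ev$, and the monomial $f$. If anything, your write-up is slightly more explicit than the paper's: the paper only states that $f(\phi(\fw))=f(\fw)$ implies $E(\phi(\fw))=E(\fw)$ and hence that $v$ has ``the exactly same adjacent vertices,'' leaving the exponent-counting step that actually pins down that $v$ occurs exactly once (two incident edge-traversals total, two per internal occurrence) to the reader.
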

\begin{proof}
Suppose that a vertex $v$ occurs exactly once in $\fw$ and this occurrence is a digon.
Therefore, it cannot be a starting or ending vertex in $\fw$.
By \autoref{lem:rand_phi_proper},
it holds that
$\svs(\phi(\fw)) = \svs(\fw)$ and $\ev(\phi(\fw)) = \ev(\fw)$ and therefore $v$ cannot be a starting or ending vertex neither in $\phi(\fw)$.
Then, since by \autoref{lem:rand_phi_proper}, we have that $f(\phi(\fw)) = f(\fw)$, implying $E(\phi(\fw)) = E(\fw)$, the vertex $v$ must have the exactly same adjacent vertices in $\phi(\fw)$ as in $\fw$.
\end{proof}

Then we prove that $\phi(\fw)$ has no labeled digons.

\begin{lemma}\label{lem:nolabeleddigons}
Let $\fw$ be a proper barren labeled \wlkg.
The labeled \wlkg $\phi(\fw)$ has no labeled digons.
\end{lemma}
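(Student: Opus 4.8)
The plan is to argue by induction on the length of $\fw$, going case by case through \Cref{def:invophi}. The unifying principle I would use is that each of the three operations employed by $\phi$ — subwalk reversal, label swap $\frown$, and suffix swap $\leftrightarrow$ — together with the prefix-attachment steps $\concm$ and the walk-detachment steps $\sqcup$, can create a \emph{new} digon only at a small, explicitly describable set of positions adjacent to where $\fw$ was modified; and that whenever such a position would in addition be \emph{labeled} in $\phi(\fw)$, one can trace the forced vertex equalities back to a position that was already a labeled digon of $\fw$, contradicting that $\fw$ is proper. In every recursive case (A.1, A.2, C.1.b, C.1.c, C.4.b, C.X, D.5, D.X) the recursive argument is, by \Cref{lemma:well-definedphi}, a proper barren labeled \wlkg of strictly smaller length, so the induction hypothesis gives that $\phi$ applied to it has no labeled digon, and only the junction newly created by $\concm$ (or the freshly detached walk in the $\sqcup$ cases) needs to be inspected. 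Throughout I would also invoke \Cref{lem:rand_phi_proper}, which already gives $\svs(\phi(\fw)) = \svs(\fw)$, $\ev(\phi(\fw)) = \ev(\fw)$, and $f(\phi(\fw)) = f(\fw)$, hence pins down the vertex set and edge set of $\phi(\fw)$.

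For the label-swap cases C.2 and D.2 I would first note that exactly one of the two swapped indices is labeled: the ``both unlabeled'' subcase is caught earlier by C.1 (resp.\ D.1), while ``both labeled'' is impossible because the two indices carry the same vertex $v^1_1$, which would then be labeled twice, contradicting properness. Since the swap changes no vertex and no edge, it leaves all digon positions fixed; it only moves a single label onto index $1$ of $W^1$ (a starting vertex, never a digon) or onto the index $b$, which the case hypothesis explicitly guarantees is not a digon. For the reversal cases C.1.a, C.3, C.4.a, and D.4, reversing a subwalk $W^1[c,d]$ can make a digon only among positions $c-1,c,d,d+1$; here the endpoints are chosen so that $v^1_{c-1}$ and $v^1_{d+1}$ are occurrences of $v^1_1$ that are unlabeled under the case hypothesis (for C.1.a because $r^1_1=r^1_b=0$; for C.3, C.4.a, D.4 because $r^1_1\neq 0$ forces, by properness, every other occurrence of $v^1_1$ to be unlabeled), and if the reversal turns position $c$ or $d$ into a digon then the induced equality identifies an occurrence of $v^1_1$ which — using that $b$ and $a$ are the last and second‑last occurrences of $v^1_1$ in $W^1$ — shows that position was already a digon of $W^1$ and hence unlabeled. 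The suffix-swap cases B, D.1, D.3, D.6 go the same way: $\leftrightarrow^{i,j}_{a,b}$ alters only the neighbourhoods at the two junctions, where $v^i_a=v^j_b$ is an occurrence of $v^1_1$ — unlabeled by hypothesis in B and D.1, and located by the earlier conditions ($b$ a digon, $v^1_{a+1}=v^i_{b+1}$) in D.3/D.6 — so any new digon there is again forced to be unlabeled.

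For the recursive cases I would check only the concatenation junction (the $\concm$ cases) or observe that the detached walk carries no labeled digon (cases A.2, C.1.c, D.5, where $W^1$ is a walk of the proper \wlkg $\fw$, possibly with $W^1[2,\ell_1-1]$ a palindrome and hence, by \Cref{lem:phi_palindrome}, having no labeled positions at all). At a junction where a prefix ending in $v^1_1=v^1_b$ (or in $v^1_a=v^1_1$, an endpoint of a palindromic loop) is attached to $\phi(\cdot)$, the only positions not already controlled by the induction hypothesis are the joined vertex and the one right after it: the joined position is unlabeled because $r^1_b=0$ (in C.1.b) or because the attached prefix has empty collected set by \Cref{lem:phi_palindrome}; and the position right after it cannot be a labeled digon because $v^1_1$ either does not occur in the recursive argument at all (A.1, C.1.b, C.1.c, C.4.b) or occurs there exactly once, as a digon whose adjacent vertices are pinned down by \Cref{lem:presdigon}, so by the failure of case C.4 it cannot sit immediately after the junction. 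I expect the main obstacle to be purely the bookkeeping in these reversal and suffix-swap cases: one must, case by case, enumerate precisely which positions become digons after the modification, track which occurrences of $v^1_1$ survive it, and then exhibit the already-present (hence unlabeled) digon that rules out the labeled-digon scenario; once the right occurrences are identified the individual arguments are short.
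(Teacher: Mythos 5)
Your proposal follows essentially the same structure as the paper's proof: induction on the length of $\fw$, case-by-case through \Cref{def:invophi}, localizing potential new labeled digons to a handful of positions near the modification, invoking \Cref{lem:presdigon} for cases C.X and D.X, and tracing each candidate labeled digon back to a labeled digon of $\fw$ (which cannot exist because $\fw$ is proper). One small slip to watch when filling in details: in case D.3 the suffix swap is $\leftrightarrow^{1,i}_{2,c+1}$, and the swap indices $v^1_2$ and $v^i_{c+1}$ need not coincide (contrary to your blanket statement ``$v^i_a = v^j_b$ is an occurrence of $v^1_1$''); the argument there instead uses that the vertices just before the swap positions, namely $v^1_1$ and $v^i_c$, coincide, so that a labeled digon at index $2$ of $W^1$ (resp.\ $c+1$ of $W^i$) in $\phi(\fw)$ would have been a labeled digon at index $c+1$ of $W^i$ (resp.\ $2$ of $W^1$) in $\fw$, together with $r^i_c = 0$ for the remaining index. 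Likewise, in C.3 and D.4 the endpoint $v^1_{c-1} = v^1_1$ is actually labeled (not unlabeled as your parenthetical suggests), but since index $1$ is never a digon this causes no harm.
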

\begin{proof}
We prove the lemma by induction on the length of the \wlkg $\fw$.

\smallskip\noindent\textsl{Case A.1.}
In this case, as $\phi(W^1[2,\ell_1], W^2, \ldots, W^p)$ has no labeled digons by induction, the only potential place for a labeled digon could be index 2 of $W^1$.
However, because $v^1_1$ occurs only once in $\fw$ and therefore does not occur in $\phi(W^1[2,\ell_1], W^2, \ldots, W^p)$, we have that the index 2 of $W^1$ cannot become a digon.

\smallskip\noindent\textsl{Case A.2.}
Trivially by induction.

\smallskip\noindent\textsl{Case B.}
Here, by the definition of $v^i_a$ and $v^j_b$ in case B, both  $v^i_a$ and $v^j_b$ are unlabeled and $v^i_a = v^j_b$ holds, so if $\fw \leftrightarrow^{i,j}_{a,b}$ would contain a labeled digon then also $\fw$ would.

\smallskip\noindent\textsl{Case C.1.a.}
Here, the indices $1$ and $b$ of $W^1$ are not labeled so they cannot become labeled digons.
For indices in $[2,b-1]$, note that if $i \in [2,b-1]$ would be a labeled digon in $W^1\overleftarrow{[2,b-1]}$, then $b+1-i$ would have been a labeled digon in $W^1$.

\smallskip\noindent\textsl{Case C.1.b.}
Potential places for labeled digons here are incides $b$ and $b+1$ at $W^1$.
However, $b$ is not labeled so no labeled digon can be at $b$, and because $v^1_b$ does not occur in $(W^1[b+1,\ell_1], W^2, \ldots, W^p)$, it cannot occur in $\phi(W^1[b+1,\ell_1], W^2, \ldots, W^p)$ and therefore $b+1$ cannot become labeled digon.

\smallskip\noindent\textsl{Case C.1.c.}
Trivially by induction.

\smallskip\noindent\textsl{Case C.2.}
The index 1 of $W^1$ is not digon by definition of digon, and the index $b$ is not digon by definition of this case, so no labeled digons are created.

\smallskip\noindent\textsl{Case C.3.}
Here, the index 1 cannot be a digon by definition, and the index $a$ of $W^1$ has $r^1_a = 0$ by case C.2, so they cannot become labeled digons.
For indices in $[2,a-1]$, the same argument as in case C.1.a applies.

\smallskip\noindent\textsl{Case C.4.a.}
Neither index $a$ nor $b$ is labeled so they cannot become labeled digons, and for indices in $[a+1,b-1]$ the same argument as in case C.1.a applies.

\smallskip\noindent\textsl{Case C.4.b.}
Same argument as in C.1.b, and using that by case C.2, it holds that $r^1_{b} = 0$.

\smallskip\noindent\textsl{Case C.X.}
Here, the index $a$ of $W^1$ cannot become a labeled digon because it is not labeled.
For the index $a+1$ the argument is more complicated:
First note that the vertex $v^1_a = v^1_b (= v^1_1)$ occurs only once in $(W^1[a+1,\ell_1], W^2, \ldots, W^p)$ (as $v^1_b$).
Also, by case C.2, $b$ is a digon in $W^1$.
Therefore, by \Cref{lem:presdigon}, the vertex $v^1_b = v^1_a$ occurs in $\phi(W^1[a+1,\ell_1], W^2, \ldots, W^p)$ only once, and this occurrence is a digon adjacent with vertex $v^1_{b-1}=v^1_{b+1}$.
Because by \Cref{lem:rand_phi_proper} $\phi$ preserves the starting vertices, and by case C.4 it holds that $v^1_{a+1} \neq v^1_{b-1}$, it holds that the occurrence of $v^1_b$ in $\phi(W^1[a+1,\ell_1], W^2, \ldots, W^p)$ cannot be as the second vertex of the first walk.
Therefore, the index $a+1$ of the first walk cannot be a labeled digon in $W^1[1,a] \concm \phi(W^1[a+1,\ell_1], W^2, \ldots, W^p)$.

\smallskip\noindent\textsl{Case D.1.}
As $r^1_{1} = r^i_b = 0$ and $v^1_1 = v^i_b$, it holds that if $\fw \leftrightarrow^{1,i}_{1,b}$ would contain a labeled digon then also $\fw$ would.

\smallskip\noindent\textsl{Case D.2.}
The index 1 of $W^1$ cannot become a digon by definition of digon.
The index $b$ of $W^i$ cannot become a digon by definition of this case.

\smallskip\noindent\textsl{Case D.3.}
Because $v^1_1 = v^i_a$, in this case if $\fw \leftrightarrow^{1,i}_{2,c+1}$ would have a labeled digon at index 2 of $W^1$ or at index $c+1$ of $W^i$, then this same labeled digon would have existed also in $\fw$.
Then, no labeled digon can be created to index 1 of $W^1$ by definition of digon or to index $c$ of $W^i$ because $r^1_1 \neq 0$ and therefore $r^i_c = 0$ by case D.2.

\smallskip\noindent\textsl{Case D.4.}
The index $1$ of $W^1$ cannot become a digon by definition, and the index $a$ cannot become a labeled digon because it is not labeled because the index $1$ is labeled.
For indices in $[2,a-1]$, the same argument as in case C.1.a applies.

\smallskip\noindent\textsl{Case D.5.}
Trivially by induction.

\smallskip\noindent\textsl{Case D.6.}
In this case, by cases D.1 and D.2 we have that $r^1_{1}\neq 0$ and $ r^i_b = 0$.
The former implies that $r^1_a = 0$.
Therefore, the index $a$ of $W^1$ nor the index $b$ of $W^i$ cannot become labeled digons because they cannot become labeled.
Then, if the index $a+1$ of $W^1$ would be a labeled digon in $\fw \leftrightarrow^{1,i}_{a+1,b+1}$, then the index $b+1$ of $W^i$ would have been a labeled digon in $\fw$ because $v^1_a = v^i_b$ (and symmetrically for $b+1$ of $W^i$).

\smallskip\noindent\textsl{Case D.X.}
Here, a similar argument as in C.X works: By \Cref{lem:presdigon} we know that $v^i_b$ occurs only as a digon surrounded by $v^i_{b-1} = v^i_{b+1}$ in $\phi(W^1[a+1,\ell_1], W^2, \ldots, W^p)$.
Therefore, because $\phi$ maintains the starting vertices and $v^1_{a+1} \neq v^i_{b+1}$, it holds that the index $a+1$ of $W^1$ cannot be a digon in $W^1[1,a] \concm \phi(W^1[a+1, \ell_1], W^2, \ldots, W^p)$.
The index $a$ of $W^1$ cannot become a labeled digon because it is not labeled.
\end{proof}

\medskip
\noindent\emph{The function $\phi$ is an involution.}
Now we have shown that $\phi$ is a function $\phi : \bls \rightarrow \bls$, where $\bls$ is the set of all barren proper labeled \wlkgs, and that $f(\phi(\fw)) = f(\fw)$, $\ev(\phi(\fw)) = \ev(\fw)$, and $\svs(\phi(\fw)) = \svs(\fw)$ hold.
Next we show that $\phi$ is an involution on $\bls$, i.e., $\phi(\phi(\fw)) = \fw$ holds.

\begin{lemma}
For any proper barren labeled \wlkg $\fw$ it holds that $\phi(\phi(\fw)) = \fw$.
\end{lemma}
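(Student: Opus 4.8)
The plan is to prove $\phi(\phi(\fw)) = \fw$ by induction on the length of $\fw$, following the case split of \Cref{def:invophi}. The guiding principle, as hinted in the intuition before the definition, is that whenever $\fw$ matches one case, $\phi(\fw)$ should match the same case (the only exception being that C.X and D.X form a single common case that may interchange), after which a second application of $\phi$ inverts the first. I would begin by recording three elementary involutivity facts: reversing a subwalk twice is the identity, $(W\overleftarrow{[a,b]})\overleftarrow{[a,b]}=W$; swapping a label twice is the identity, $(\fw\frown^{i,j}_{a,b})\frown^{i,j}_{a,b}=\fw$; and swapping two suffixes twice is the identity, $(\fw\leftrightarrow^{i,j}_{a,b})\leftrightarrow^{i,j}_{a,b}=\fw$. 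Combined with \Cref{lem:rand_phi_proper} and \Cref{lem:presdigon} (which say that $\phi$ preserves $\svs$, $\ev$, the edge set, the collected set, and the location and neighbours of a vertex occurring exactly once as a digon), these do most of the bookkeeping.

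For the non-recursive cases B, C.1.a, C.2, C.3, C.4.a, D.1, D.2, D.3, D.4, D.6, the argument in each case is: verify that $\phi(\fw)$ satisfies the hypotheses of that same case and does not fall through to an earlier case, and verify that the distinguished indices and walks selected for $\phi(\fw)$ (the second-last / last occurrence $a,b$ of $v^1_1$, the ``other walk'' index $i$, the index $c$, etc.) coincide with those selected for $\fw$; then the corresponding operation being an involution yields $\phi(\phi(\fw))=\fw$. The care needed is mostly in checking that $\phi$ never changes \emph{which} indices carry the vertex $v^1_1$, only which of them are labeled, so the case-selection data are stable; and, in the suffix-swap cases, that the two swapped suffixes each begin at an occurrence of $v^1_1$ (or at its successor), so that re-swapping restores the first occurrences in both walks and restores which walk contains $v^1_1$ labeled.

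For the recursive cases A.1, A.2, C.1.b, C.1.c, C.4.b, C.X, D.5, D.X, I would show that $\phi(\fw)$ starts with the same ``inert prefix'' that was peeled off — the single vertex $W^1[1,1]$, or an entire first walk $W^1$ — so that $\phi(\fw)$ re-enters the corresponding case and the recursive part is handled by the induction hypothesis; here one must confirm the prefix really is inert, e.g.\ that $v^1_1$ does not reappear in the recursive argument, or (in C.X, D.X) reappears only as a single digon, invoked through \Cref{lem:presdigon}. I expect the main obstacle to be exactly the C.X/D.X common case. Applying $\phi$ once there can change whether $v^1_1$ lives in one walk or in two, because the interior recursive call may relocate the lone digon of $v^1_1$ between walks via an internal B- or D-step; so one must argue that $\phi(\fw)$ then lands in whichever of C.X or D.X is now correct, that the index $a$ it re-selects (the last occurrence of $v^1_1$ in the first walk of $\phi(\fw)$) is precisely the boundary between the re-attached palindromic loop $W^1[1,a]$ and the recursive part, and that the palindrome and non-digon side conditions of C.3--C.X and D.4--D.X remain compatible — only then does the induction hypothesis applied to $(W^1[a+1,\ell_1],W^2,\ldots,W^p)$ close the loop. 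The remaining sub-cases are lengthy but routine once this interaction is pinned down.
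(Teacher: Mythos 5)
Your proposal is correct and takes essentially the same approach as the paper's proof: induction on the length of $\fw$, a case-by-case argument showing that $\phi(\fw)$ re-enters the same case of \Cref{def:invophi} (with the C.X/D.X common case potentially interchanging), using \Cref{lem:rand_phi_proper} and \Cref{lem:presdigon} to show the case-selection data are stable, and closing each case either by the self-inverse property of reversal/label-swap/suffix-swap or by the induction hypothesis. You have also correctly identified the principal subtlety — that the recursive call in C.X/D.X may relocate the lone digon of $v^1_1$ between walks — which is exactly the reason the paper treats those two cases together.
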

\begin{proof}
We use induction on the length of \wlkg $\fw$.
The structure of the proof is to show that in all cases except C.X and D.X, the walk $\phi(\fw)$ goes to the same case of \autoref{def:invophi} as $\fw$.
Then, the cases C.X and D.X are treated together.

\smallskip\noindent\textsl{Case A.1.}
In both $\fw$ and $\phi(\fw)$ it holds that the $v^1_1$ occurs only once in the \wlkg and $\phi$ does not change the first vertex, so $\phi(\phi(\fw)) = \fw$ holds by induction.

\smallskip\noindent\textsl{Case A.2.}
In this case the first walk of $\fw$ is the same as the first walk of $\phi(\fw)$, so $\phi(\phi(\fw)) = \fw$ holds by induction.

\smallskip\noindent\textsl{Case B.}
In this case, the function $\phi$ preserves the set of walks in which $v^1_1$ occurs, and moreover preserves the walk in which $v^1_1$ occurs as labeled (if it occurs as labeled in any walk).
Therefore the \wlkg $\phi(\fw)$ also goes to case B, and in case B the indices $i,j$ selected for $\phi(\fw)$ are the same as selected for $\fw$.
The suffix swap operation also does not change the indices of the first occurrences of $v^1_1$ in $W^i$ and $W^j$, so the indices $a$ and $b$ selected are the same.
Then, the lemma follows by observing that $\fw \leftrightarrow^{i,j}_{a,b} \leftrightarrow^{i,j}_{a,b} = \fw$.

At this point, let us observe that $\fw$ goes to cases C.1-4 if and only if $\phi(\fw)$ goes to cases C.1-4.
This is because all of these cases maintain that $v^1_1$ occurs only in the walk $W^1$, but multiple times in the walk $W^1$.
In particular, in the recursive cases this is maintained by the fact that $v^1_1$ does not occur in the recursive argument.

We also observe that $\fw$ goes to cases D.1-6 if and only if $\phi(\fw)$ goes to cases D.1-6.
This is because all of these cases maintain that $v^1_1$ occurs in exactly two different walks.
In the cases D.1-4 and D.6 this is easy to observe since these are not recursive, and in the case D.5 this follows from the fact that the walk $W^1$ is not changed and that in this case $v^1_1$ occurs exactly once in $(W^2, \ldots, W^p)$.

\smallskip\noindent\textsl{Case C.1.}
Note that going to the cases C.1.a, C.1.b, and C.1.c depends only on the last occurrence $b$ of $v^1_1$, and the labels $r^1_{1}$ and $r^1_{b}$.
None of these cases change these, so $\phi(\phi(\fw)) = \fw$ holds in case C.1.a because $W^1\overleftarrow{\overleftarrow{[2,b-1]}} = W^1$ and reversing a subwalk does not change whether it is a palindrome, and by induction in cases C.1.b and C.1.c.

\smallskip\noindent\textsl{Case C.2.}
The order of the vertices in the walks and the fact that exactly one of $r^1_{1}$ and $r^1_{b}$ is labeled is maintained, so if $\fw$ goes to case C.2 then also $\phi(\fw)$ goes to C.2.
Then, $\phi(\phi(\fw)) = \fw$ because $\fw \frown^{1,1}_{1,b} \frown^{1,1}_{1,b} = \fw$.

\smallskip\noindent\textsl{Case C.3.}
It holds that $a < b$, so therefore reversing $W^1[2,a-1]$ does not change the fact that $b$ is a digon in $W^1$.
It also does not change the index $a$ of the second last occurrence of $v^1_1$, nor the index $b$ of the last occurrence of $v^1_1$, nor the fact that $r^1_{1} \neq 0$, nor the fact that $W^1[2,a-1]$ is not a palindrome.

\smallskip\noindent\textsl{Case C.4.a.}
Because $v^1_{a+1} = v^1_{b-1}$, reversing $W^1[a+1,b-1]$ does not change the fact that $b$ is a digon in $W^1$.
Reversing $W^1[a+1,b-1]$ also does not change the index $a$ of the second last occurrence of $v^1_1$, nor the index $b$ of the last occurrence of $v^1_1$, nor the fact that $r^1_{1} \neq 0$, nor the fact that $W^1[2,a-1]$ is a palindrome.

\smallskip\noindent\textsl{Case C.4.b.}
Going to the case C.4.b depends only on the subwalk $W^1[1,b]$ and on the vertex with index $b+1$ in $W^1$ (whether the index $b$ is a digon).
Clearly, $\phi$ does not change the subwalk $W^1[1,b]$ in this case.
The vertex with index $b+1$ is not changed because by \autoref{lem:rand_phi_proper} the starting vertices are preserved by $\phi(W^1[b+1,\ell_1], W^2, \ldots, W^p)$, so the lemma holds by induction.

Before moving to the cases C.X and D.X, we handle the cases D.1-D.5.

\smallskip\noindent\textsl{Case D.1.}
In this case, the operation $\fw \leftrightarrow^{1,i}_{1,b}$ does not change the two walks in which $v^1_1$ occurs, nor it changes the fact that the first occurrence of $v^1_1$ in $W^i$ is at index $b$, nor that $r^1_1 = r^i_b = 0$.
The lemma follows from the fact that $\fw \leftrightarrow^{1,i}_{1,b} \leftrightarrow^{1,i}_{1,b} = \fw$.

\smallskip\noindent\textsl{Case D.2.}
This case does not change the vertices of the walks, so it is maintained that $v^1_1$ occurs only in $W^1$ and $W^i$.
It also does not change the fact that exactly one of $r^1_{1}$ and $r^i_b$ is labeled or the fact the index $b$ is not a digon in $W^i$, so the lemma follows from the fact that $\fw \frown^{1,i}_{1,b} \frown^{1,i}_{1,b} = \fw$.

\smallskip\noindent\textsl{Case D.3.}
This case does not change the index $b$ of the first occurrence of $v^1_1$ in $W^i$ or the index $c$ of the second occurrence of $v^1_1$ in $W^i$, and neither does it change the fact that $r^1_{1} \neq 0$.
Because $c>b+1$, it also does not change that the index $b$ is a digon in $W^i$.
The lemma follows from the fact that $\fw \leftrightarrow^{1,i}_{2,c+1} \leftrightarrow^{1,i}_{2,c+1} = \fw$.

\smallskip\noindent\textsl{Case D.4.}
This case does not change the index $b$ of the first occurrence of $v^1_1$ in $W^i$, nor that $r^1_{1} \neq 0$, nor that $b$ is a digon in $W^i$, nor that $v^1_1$ occurs only once in $W^i$.
It also does not change the index $a$ of the last occurrence of $v^1_1$ in $W^i$, or the fact that $W^1[2,a-1]$ is a palindrome so the lemma holds.

\smallskip\noindent\textsl{Case D.5.}
The vertex $v^1_1$ occurs exactly once in $(W^2, \ldots, W^p)$, so it is also maintained that $v^1_1$ occurs in exactly two walks.
The walk $W^1$ is not changed, so it is maintained that $r^1_{1} \neq 0$ and therefore $\phi(\fw)$ does not go to case D.1.
By \autoref{lem:presdigon} we have that $\phi(\fw)$ does not go to case D.2, and again because $v^1_1$ occurs only once in $(W^2, \ldots, W^p)$ we have that $\phi(\fw)$ does not go to case D.3.
As $W^1$ is not changed we have that $\phi(\fw)$ does not go to case D.4.
Then, as case D.5 does not change the walk $W^1$, it is maintained that $a = \ell_1$, so $\phi(\fw)$ goes to case D.5.

\smallskip\noindent\textsl{Case D.6.}
This case does not change that $r^1_{1} \neq 0$, so $\phi(\fw)$ does not go to case D.1.
It also does not change the index $b$ of the first occurrence of $v^1_1$ in $W^i$, and because $v^1_{a+1} = v^i_{b+1}$, it does not change that $b$ is a digon in $W^i$, and therefore $\phi(\fw)$ does not go to case D.2.
Because $a$ is the last occurrence of $v^1_1$ in $W^1$, it also does not change that $v^1_1$ occurs in $W^i$ only once, so $\phi(\fw)$ does not go to case D.3.
It also does not change the index $a$ of the last occurrence of $v^1_1$ in $W^1$ or the subwalk $W^1[2,a-1]$, so $\phi(\fw)$ does not go to cases D.4. or D.5.
Therefore, $\phi(\fw)$ goes to case D.6 with the same values of $a$, $b$, and $i$, so $\phi(\phi(\fw)) = \fw$ holds because $\fw \leftrightarrow^{1,i}_{a+1,b+1} \leftrightarrow^{1,i}_{a+1,b+1} = \fw$.

\smallskip\noindent\textsl{Case C.X.}
We aim to prove that if $\fw$ goes to case C.X, then $\phi(\fw)$ also goes to case C.X or to the case D.X, with the same value of the index $a$, and therefore $\phi(\phi(\fw)) = \fw$ will hold by induction, as in both cases $\phi$ is defined as $\phi(\fw) = W^1[1,a] \concm \phi(W^1[a+1,\ell_1], W^2, \ldots, W^p)$.
First, it is maintained that $v^1_1$ occurs more than once, but in at most two walks, because $v^1_1$ occurs only once in $(W^1[a+1,\ell_1], W^2, \ldots, W^p)$.
Therefore, $\phi(\fw)$ does not go to case A or B.

Suppose that $v^1_1$ occurs in $\phi(\fw)$ only in the walk $W^1$, i.e., goes to case C.
We will show that $\phi(\fw)$ goes to case C.X.
It is maintained that $r^1_{1} \neq 0$, so $\phi(\fw)$ does not go to case C.1.
Then, by \autoref{lem:presdigon} it is maintained that the last occurrence of $v^1_1$ must be a digon so $\phi(\fw)$ does not go to case C.2.
Also, this case does not change the index $a$ of the second last occurrence of $v^1_1$ nor the walk $W^1[2,a-1]$, so it is maintained that $W^1[2,a-1]$ is a palindrome and therefore $\phi(\fw)$ does not go to case C.3.
Then, to argue that $\phi(\fw)$ does not go to case C.4, observe that because $\phi$ maintains the starting vertex, the vertex at the position $a+1$ is maintained.
The vertices around the digon at the last occurrence of $v^1_1$ are maintained by \autoref{lem:presdigon}, so therefore if $\fw$ does not go to case C.4 then also $\phi(\fw)$ does not go to case C.4.
Therefore $\phi(\fw)$ goes to case C.X, and as the walk $W^1[1,a]$ is maintained, it goes to this case with the same value of $a$, so the lemma holds by induction.

Then, suppose that $v^1_1$ occurs in $\phi(\fw)$ in two walks $W^1$ and $W^i$, i.e., goes to case D.
We will show that $\phi(\fw)$ goes to case D.X.
It is maintained that $r^1_{1} \neq 0$, so $\phi(\fw)$ does not go to case D.1.
Then, by \autoref{lem:presdigon} it must be that $v^1_1$ occurs in $W^i$ only once and as a digon, and therefore $\phi(\fw)$ does not go to case D.2 or D.3.
Now, it will hold that the index $a$ of the last occurrence of $v^1_1$ in the walk $W^1$ of $\phi(\fw)$ is the same as the index $a$ of the last occurrence of $v^1_1$ in the walk $W^1$ of $\fw$.
Therefore, the subwalk $W^1[1, a]$ will be the same in $\fw$ and $\phi(\fw)$, and therefore $\phi(\fw)$ will not go to case D.4 because $\fw$ did not go to case C.3.
Then, because $\phi$ cannot turn a non-empty walk into an empty walk, it is maintained that the length of $W^1$ is more than $a$, so $\phi(\fw)$ cannot go to case D.5.
Then, for case D.6 we again note that $\phi$ maintains the vertex at position $a+1$, and that the digon around the occurrence of $v^1_1$ outside of $W^1[1,a]$ is maintained by \autoref{lem:presdigon}.
Therefore, $\phi(\fw)$ goes to case D.X with the same value of $a$, so the lemma holds by induction.

\smallskip\noindent\textsl{Case D.X.}
We will show that if $\fw$ goes to case D.X, then $\phi(\fw)$ also goes to case D.X or to the case C.X, with the same value of the index $a$, and therefore $\phi(\phi(\fw)) = \fw$ will hold by induction, as in both cases $\phi$ is defined as $\phi(\fw) = W^1[1,a] \concm \phi(W^1[a+1,\ell_1], W^2, \ldots, W^p)$.
First, it is maintained that $v^1_1$ occurs more than once, so therefore $\phi(\fw)$ does not go to case A.
Then, as $v^1_1$ occurs only once in $(W^1[a+1,\ell_1], W^2, \ldots, W^p)$, it can occur in at most two walks in $\phi(\fw)$, so $\phi(\fw)$ cannot go to case B.

Suppose that $v^1_1$ occurs in $\phi(\fw)$ only in the walk $W^1$, i.e., goes to case C.
We will show that $\phi(\fw)$ goes to case C.X.
It is maintained that $r^1_{1} \neq 0$, so $\phi(\fw)$ does not go to case C.1.
Then, because $W^1[1,a]$ contains all other occurrences of $v^1_1$ in $\fw$ except the occurrence in $W^i$, it must be that now the last occurrence of $v^1_1$ in $W^1$ of $\phi(\fw)$ corresponds to the occurrence of $v^1_1$ in $W^i$ of $\fw$, in particular, by \autoref{lem:presdigon} the last occurrence of $v^1_1$ in $W^1$ of $\phi(\fw)$ must be a digon, and therefore $\phi(\fw)$ does not go to case C.2.
By the same reasoning, it also must be that the the index $a$ of the last occurrence of $v^1_1$ in $W^1$ of $\fw$ is the same as the index $a$ of the second last occurrence of $v^1_1$ in $W^1$ of $\phi(\fw)$, and therefore the subwalk $W^1[1,a]$ of $\phi(\fw)$ in case C is the same as the subwalk $W^1[1,a]$ of $\fw$ in case D.
Then, it follows that because $\fw$ did not go to case D.4, $\phi(\fw)$ does not go to case C.3.
Then, as the case D.X does not change the vertex at the index $a+1$ of $W^1$, it holds that the vertex at the index $a+1$ of $W^1$ is the same in $\fw$ and $\phi(\fw)$.
Also, by \autoref{lem:presdigon} the vertices around the digon of the last occurrence of $v^1_1$ are the same in $\fw$ and $\phi(\fw)$, so $\phi(\fw)$ does not go to case C.4 because $\fw$ did not go to case D.5.
Therefore, $\phi(\fw)$ must go to case C.X, and we already reasoned that the index $a$ is the same as for $\fw$ in the case D.X, so the lemma holds by induction.

Suppose that $v^1_1$ occurs in $\phi(\fw)$ in two different walks, i.e., goes to case D.
We will show that $\phi(\fw)$ goes to case D.X.
First, it is maintained that $r^1_{1} \neq 0$, so $\phi(\fw)$ does not go to case D.1.
Then, we note that the walk $W^i$ that contains the other occurrence of $v^1_1$ may be different for $\phi(\fw)$ than $\fw$.
However, it is maintained that $v^1_1$ occurs only once outside of $W^1$, and by \autoref{lem:presdigon} that the other occurrence is a digon and the vertices around this digon are maintained.
Therefore, $\phi(\fw)$ does not go to case D.2, nor to the case D.3.
Now, the index $a$ of the last occurrence of $v^1_1$ in $W^1$ will be the same for $\phi(\fw)$ and $\fw$ because $\phi(\fw)$ does not change the subwalk $W^1[1,a]$.
Therefore, it is maintained that $W^1[2,a-1]$ is a palindrome, and therefore $\phi(\fw)$ does not go to case D.4.
Then, because $\phi$ cannot turn a non-empty walk into an empty walk, it is maintained that the length of $W^1$ is more than $a$, so $\phi(\fw)$ cannot go to case D.5.
Then, by the start vertex property of $\phi$, the vertex at index $a+1$ of $W^1$ is also maintained, and by \autoref{lem:presdigon} the vertices around the digon of the other occurrence of $v^1_1$ are maintained, so $\phi(\fw)$ does not go to case D.6.
Therefore, $\phi(\fw)$ goes to the case D.X, with the same value of $a$, and therefore the lemma holds by induction.
\end{proof}

Finally, we show that $\phi$ is fixed-point-free.

\begin{lemma}
For any proper barren labeled \wlkg $\fw$ it holds that $\phi(\fw) \neq \fw$.
\end{lemma}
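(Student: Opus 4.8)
The plan is to prove the statement by induction on the length of $\fw$, walking through the cases of \Cref{def:invophi}, which naturally split into four groups. In the first group — the subwalk reversals, i.e.\ cases C.1.a, C.3, C.4.a and D.4 — we have $\phi(\fw) = (W^1\overleftarrow{[x,y]}, W^2, \ldots, W^p)$ for an interval $[x,y]$ which the defining hypothesis of the case explicitly assumes is \emph{not} a palindrome; since $W^1\overleftarrow{[x,y]}=W^1$ would force $W^1[x,y]$ to be a palindrome, we get $\phi(\fw)\neq\fw$ at once. In the second group — the label swaps, i.e.\ cases C.2 and D.2 — we have $\phi(\fw)=\fw\frown^{i,j}_{a,b}$ with $v^i_a=v^j_b$; since $\fw$ is proper this common vertex cannot be labeled at both indices, while the case hypothesis (we are not in case C.1, resp.\ D.1) forbids it being unlabeled at both, so exactly one of $r^i_a,r^j_b$ is nonzero, and the remark accompanying the definition of the label-swap operation then gives $\fw\frown^{i,j}_{a,b}\neq\fw$.

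The third group, the suffix swaps $\phi(\fw)=\fw\leftrightarrow^{i,j}_{a,b}$ with $i\neq j$ occurring in cases B, D.1, D.3 and D.6, is the one I expect to require the most care, and it is where properness is really used. I would first isolate the general fact that, for a proper labeled \wlkg, as long as not both of the swapped suffixes are empty, the suffix swap changes the \wlkg: if both suffixes are nonempty then after the swap the last vertex of $W^i$ becomes $v^j_{\ell_j}\neq v^i_{\ell_i}$, using that a proper \wlkg has pairwise distinct ending vertices; and if exactly one suffix is empty, then the other walk strictly gains a nonempty suffix. It then remains only to verify, in each of the four cases, that not both suffixes are empty: in case B the suffixes start at genuine occurrence positions $a\le\ell_i$, $b\le\ell_j$; in D.1 the first suffix is all of $W^1$; in D.6 the suffixes start at $a+1\le\ell_1$ and $b+1\le\ell_i$; and in D.3 the definition of $\phi$ itself already records that at most one of the two suffixes is empty, precisely because $W^1$ and $W^i$ have distinct ending vertices.

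Finally, the recursive group — cases A.1, A.2, C.1.b, C.1.c, C.4.b, C.X, D.5 and D.X — is handled by the induction hypothesis. In each of these, $\phi(\fw)$ is obtained from $\phi(\fw')$ for the recursive argument $\fw'$ by prepending to the first walk either a fixed prefix $W^1[1,j]$ (via $\concm$) or the whole walk $W^1$ (via $\sqcup$), and $\fw$ is obtained from $\fw'$ by exactly the same prepending. By \Cref{lemma:well-definedphi} the walkage $\fw'$ is again proper and barren, and it is strictly shorter than $\fw$, so the induction hypothesis yields $\phi(\fw')\neq\fw'$; since prepending a fixed walk or a fixed prefix to the first walk is an injective operation on \wlkgs (one recovers $\fw'$ by deleting the prepended part), we conclude $\phi(\fw)\neq\fw$. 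As every recursive call strictly decreases the (positive integer) length, the recursion bottoms out in a non-recursive case, so the induction is well-founded, which finishes the proof. The main obstacle, and the only step demanding genuine attention, is the suffix-swap group above: one must remember to invoke the distinctness of ending vertices of a proper \wlkg both to rule out the degenerate swap of two empty suffixes and to see that the swap genuinely alters an ending vertex.
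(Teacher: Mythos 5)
Your proof is correct and follows essentially the same approach as the paper's: induction on the length of $\fw$, grouping the cases of \Cref{def:invophi} into subwalk reversals (which reverse a non-palindromic subwalk), label swaps (which change a label's position), suffix swaps (which change an ending vertex by the distinctness of ending vertices in a proper \wlkg), and recursive cases (handled by induction). Your writeup is more detailed than the paper's terse version—in particular spelling out why exactly one of the two swapped indices is labeled in the label-swap cases, and noting injectivity of the prepending operation in the recursive cases—but the underlying argument is the same.
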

\begin{proof}
We prove this by induction on the length $\ell$ of the \wlkg.
In the recursive cases A.1, A.2, C.1.b, C.1.c, C.4.b, C.X, D.5, and D.X this holds directly by induction.
In cases B, D.1, D.3, and D.6, $\phi$ changes the suffixes of two walks, and at least one of the suffixes is non-empty.
Because $\fw$ is proper, the ending vertices of all walks in $\fw$ are different, so the ending vertex of at least one of the walks involved in the suffix swap is changed (in fact the ending vertices of both of the walks change, but it is not necessary for this proof).
In cases C.1.a, C.3, C.4.a, and D.4, $\phi$ reverses a non-palindromic subwalk so $\phi(\fw) \neq \fw$.
In cases C.2 and D.2 $\phi$ changes a label from one position to another, so $\phi(\fw) \neq \fw$.
\end{proof}

This completes the proof of \autoref{lem:main_phi_exists}.

 
\section{From colored graphs to frameworks\label{sec:extframeworks}}
In this section, we extend our results from weighted colored graphs to weighted frameworks, in particular, we prove \Cref{thm:weightedframeworkthm} (recall that \Cref{thm:weightedframeworkthm} implies \Cref{thm:main-frameworks}), and then discuss even further extensions to frameworks where the matroid is not necessarily represented over a finite field.

\subsection{Frameworks}
We recall definitions related to frameworks.

\medskip\noindent\emph{Matroids.} We refer to the textbook of Oxley~\cite{Oxley11} for the introduction to Matroid Theory.

\begin{definition}\label{def:matroid}
A pair $M=(V,\mathcal{I})$, where $V$ is a \emph{ground set} and $\mathcal{I}$ is a family of subsets of $V$, called \emph{independent sets of $M$}, is a \emph{matroid} if it satisfies the following conditions, called \emph{independence axioms}:
\begin{itemize}
\item[~{\em (I1)}]  $\emptyset\in \mathcal{I}$, 
\item[~{\em (I2)}]  if $X \subseteq Y $ and $Y\in \mathcal{I}$ then $X\in\mathcal{I}$, 
\item[~{\em (I3)}] if $X,Y  \in \mathcal{I}$  and $ |X| < |Y| $, then there is $v\in  Y \setminus X $  such that $X\cup\{v\} \in \mathcal{I}$.
\end{itemize}
\end{definition}
An inclusion maximal set of $\mathcal{I}$ is called a \emph{base}.
We use $V(M)$ and $\mathcal{I}(M)$ to denote the ground set and the family of independent sets of $M$, respectively.  

Let $M=(V,\mathcal{I})$ be a matroid. We use $2^V$ to denote the set of all subsets of $V$.
A function $r\colon 2^V\rightarrow \mathbb{Z}_{\geq 0}$  such that for every $X\subseteq V$,
\begin{equation*}
r(X)=\max\{|Y|\colon Y\subseteq X\text{ and }Y\in \mathcal{I}\}
\end{equation*}
is called the \emph{rank function} of $M$. The \emph{rank of $M$}, denoted $r(M)$, is $r(V)$; equivalently, the rank of $M$ is the size of any base of $M$.  
A matroid
$M'=(V,\mathcal{I}') $ is a \emph{$k$-truncation} of $M=(V,\mathcal{I})$ if for every $X\subseteq V$, $X\in \mathcal{I}'$ if and only if $X\in \mathcal{I}$ and $|X|\leq k$.

We work with several particular types of matroids.  A \emph{uniform} matroid is defined by the ground set  $V$ and its rank $r$; 
 every subset $S$ of $V$  of size at most $r$ is independent.  \emph{Partition} matroids are the matroids that can be written as disjoint sums of uniform matroids. \emph{Transversal} matroids arise from graphs. For a bipartite graph $G = (U \cup B, E)$ with all edges between $U$ and $B$, we can define a matroid $M=(V,\mathcal{I})$ such that a set $S \subseteq V$ is independent if there exists a matching in $G$ such that every vertex in $S$ is an endpoint of a matching edge.

\medskip\noindent\emph{Matroid representations.}
Let  $M=(V,\mathcal{I})$ be a matroid and let $\mathbb{F}$ be a field. An $r\times n$-matrix $A$ is a \emph{representation of $M$ over $\mathbb{F}$} if there is a bijective correspondence $f$ between $V$ and the set of columns of $A$ such that for every $X\subseteq V$, $X\in \mathcal{I}$ if and only if the set of columns $f(X)$ consists of linearly independent vectors of $\mathbb{F}^r$. Equivalently, $A$ is a representation of $M$ if $M$ is isomorphic to the \emph{column} matroid of $A$, that is, the matroid whose ground set is the set of columns of the matrix and the independence of a set of columns is defined as the linear independence.   
If $M$ has a  
such a representation, then $M$ is \emph{representable} over $\mathbb{F}$ and it is also said
$M$ is a \emph{linear} (or \emph{$\mathbb{F}$-linear}) matroid. 
We can assume that the number of rows $r=r(M)$ for a matrix representing $M$~\cite{Marx09}. 

Whenever we consider a linear matroid, it is assumed that its representation is given and the size of $M$ is $\|M\|=\|A\|$, that is, the bit-length of the representation matrix.
Notice that given a representation of a matroid, deciding whether a set is independent demands a polynomial number of field operations.
In particular, if the considered field is a finite or is the field of rationals, we can verify independence in time that is a polynomial in $\|M\|$.

\medskip\noindent\emph{Frameworks.}
A framework is a pair $(G, M)$, where $M = (V, \mathcal{I})$ is a matroid whose ground set is the set of vertices of $G$, i.e., $V(M) = V(G)$.
A weighted framework is a triple $(G, M, \we)$, where $(G,M)$ is a framework and $\we : V(G) \rightarrow \mathbb{Z}_{\ge 1}$ is a weight function.
An $(S,T)$-\lkg $\ps$ in a weighted framework $(G,M,\we)$ is $(k,w)$-ranked if $V(\ps)$ contains a set $X \subseteq V(\ps)$ with $X \in \mathcal{I}$, size $|X| = k$, and weight $\we(X) = w$.
When discussing algorithms for (weighted) frameworks, we explicitly specify how $M$ is represented.

\subsection{From colored graphs to frameworks\label{subsec:rand_rep}}
In this section we  prove \Cref{thm:weightedframeworkthm}.
We reduce the more general cases of matroids to the case of \Cref{thm:weightedmain}.

We start by giving our algorithm for the special case when the rank of $M$ is bounded by $k$, in particular when $M$ is represented as a $k \times n$ matrix.

\begin{lemma}
\label{lem:matroidfullrank}
There is a randomized algorithm, that given a weighted framework $(G,M, \we)$, where $G$ is an $n$-vertex graph and $M$ is represented as a $k \times n$ matrix over a finite field of order $q$, sets of vertices $S,T\subseteq V(G)$, and integers $p,k,w$, in time $2^{p+\Oh(k^2 \log q)} n^{\Oh(1)} w$ either finds a $(k,w)$-ranked $(S,T)$-\lkg of order $p$ and of minimum total length, or determines that $(G,M,\we)$ has no $(k,w)$-ranked $(S,T)$-\lkgs of order $p$.
\end{lemma}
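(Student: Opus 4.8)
The plan is to reduce to \Cref{thm:weightedmain} by converting the rank constraint of the matroid $M$ into a coloring constraint, paying an extra $\Oh(k^2 \log q)$ in the exponent for enumerating over a suitable structure. The key idea is that we cannot afford to color the vertices by their columns directly (there can be up to $q^k$ distinct columns), but we can restrict attention to a $k$-dimensional object. First I would observe that an $(S,T)$-\lkg $\ps$ is $(k,w)$-ranked if and only if $V(\ps)$ contains a set $X$ of $k$ vertices whose columns in the representation matrix $A$ form a basis of $\mathbb{F}_q^k$ and with $\we(X)=w$. The strategy is to guess, in advance, the ``shape'' of such a basis: we enumerate ordered bases $B = (b_1,\ldots,b_k)$ of $\mathbb{F}_q^k$ up to a change of coordinates. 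Concretely, fix the target that the $i$-th labeled vertex has column lying in $\spanS(b_1,\ldots,b_i) \setminus \spanS(b_1,\ldots,b_{i-1})$; once we demand that the $k$ labeled columns are ``in general position'' in this staircase sense, they automatically form a basis. By applying an invertible linear transformation we may assume $b_1,\ldots,b_k$ is the standard basis, so the only data to enumerate is, for each $i \in [k]$, which $i$-dimensional coordinate subspace the $i$-th column should avoid being inside of — but more carefully, we must enumerate the \emph{pattern} of linear dependencies, which is captured by enumerating, for each new labeled vertex, its coordinates modulo the span of the previous ones.

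More precisely, here is the reduction I would carry out. We build an instance of \Cref{thm:weightedmain} as follows. Iterate over all $q^{\Oh(k^2)}$ choices of a sequence of ``certificates'': for the $j$-th label (for $j \in [k]$) we guess a nonzero vector $u_j \in \mathbb{F}_q^j$ describing how the $j$-th basis vector decomposes relative to a fixed reference frame. Having fixed such a guess, a vertex $v$ is \emph{eligible for label $j$} if, after performing Gaussian elimination on the matrix whose columns are the already-committed basis vectors together with the column of $v$, the column of $v$ reduces consistently with $u_j$; this is a condition checkable in $\mathrm{poly}(k,\log q)$ time once the previous $j-1$ basis vectors are fixed. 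The subtlety is that the dynamic programming of \Cref{thm:weightedmain} does not ``remember'' which concrete previous columns were chosen — so instead I would define the coloring directly: for each guessed certificate sequence, color a vertex $v$ with color equal to the pair (guessed index $j$, residue of $v$'s column), where the residue is computed against the \emph{guessed} basis vectors rather than the actually chosen ones. Requiring the $k$ labeled vertices to have $k$ distinct colors among the colors indexed $1,\ldots,k$ then forces the labeled columns to realize exactly the guessed staircase of spans, hence to be linearly independent. Running \Cref{thm:weightedmain} for each of the $q^{\Oh(k^2)}$ guesses, with $n' = \Oh(n)$, parameters $p$ and $k$, and weight target $w$, gives total time $q^{\Oh(k^2)} \cdot 2^{p+k} n^{\Oh(1)} w = 2^{p + \Oh(k^2 \log q)} n^{\Oh(1)} w$. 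We return the minimum-length \lkg found over all guesses, which is correct: any $(k,w)$-ranked \lkg has its witnessing basis matched by exactly one guessed certificate sequence, and conversely any \lkg found for some guess has $k$ independent labeled columns of total weight $w$.

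The main obstacle I anticipate is making the enumeration of ``basis shapes'' both correct and succinct. The naive enumeration of all $k$-tuples of columns is $n^k$, which is too large; the naive enumeration of all ordered bases of $\mathbb{F}_q^k$ is $q^{\Oh(k^2)}$, which is affordable, but one must verify that matching a vertex's column to a fixed coordinate pattern really does compose correctly across the $k$ labels — i.e.\ that local eligibility tests, combined only through the coloring (distinct-color) constraint, globally certify linear independence. The cleanest way to handle this, which I would adopt, is to reduce $M$ to a canonical form: by applying a uniformly random invertible linear map (or by the standard trick of reducing to the truncation and putting the representation in a generic position, cf.\ the remarks after \Cref{thm:main-frameworks} and the use of matroid truncation in the paper), one can assume that for \emph{some} basis the reduction pattern is the fixed ``identity staircase'', collapsing the $q^{\Oh(k^2)}$ guesses down. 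But even without that optimization the crude bound suffices for the stated running time, so I would present the direct enumeration argument and only remark on the optimization. The weight constraint carries through for free since \Cref{thm:weightedmain} already handles exact target weights.
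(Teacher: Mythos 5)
Your high-level plan coincides with the paper's: enumerate $q^{\Oh(k^2)}$ candidate ``basis patterns'' in advance, encode the guess as a coloring of $V(G)$, and invoke \Cref{thm:weightedmain} once per guess. However, the proposal has two issues, one of presentation and one that is a genuine gap.

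On the presentation side, the paper's enumeration is simpler than your ``staircase/residue'' machinery: since the representation matrix has at most $q^k$ distinct column vectors, one simply guesses the \emph{set of $k$ column vectors} that the witness $X$ should have (at most $q^{k^2}$ guesses), verifies that they are linearly independent, and then colors a vertex with color $j\in[k]$ if its column is the $j$-th guessed vector. All of the machinery about ordered bases up to change of coordinates, reduction against previously committed vectors, Gaussian elimination, and coloring by residue pairs is unnecessary and, as you partially observe yourself, hard to pin down because the dynamic program of \Cref{thm:weightedmain} can only see a static coloring. Once you realize you may simply guess the $k$ target column \emph{values} rather than the $k$ target column \emph{indices} or the basis ``shape'', the bound $q^{\Oh(k^2)}$ is immediate.

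The genuine gap is that you never specify how to prevent a false positive coming from the vertices whose columns are \emph{not} among the $k$ guessed ones. You write that you would ``require the $k$ labeled vertices to have $k$ distinct colors among the colors indexed $1,\ldots,k$,'' but \Cref{thm:weightedmain} does not offer that option: it only requires $k$ distinct colors, and nothing forces those colors to be exactly $1,\ldots,k$. Concretely, if you color all non-guessed-column vertices with some default color $k+1$ and run \Cref{thm:weightedmain} with target $(k,w)$, the algorithm could return an $(S,T)$-\lkg whose witness $X$ contains one vertex of color $k+1$ (e.g., a vertex whose column is the zero vector), and then $X$ need not be independent in $M$. Saying ``the weight constraint carries through for free'' does not close this gap. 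The paper closes it with a specific gadget: it sets the weight of every vertex whose column is not in the guessed basis to $1$, inserts $|S|$ auxiliary starting vertices with all-zero column (hence color $k+1$ and weight $1$), and searches for a $(k+1,w+1)$-colored \lkg. With only $k+1$ colors in play, any $(k+1)$-subset with distinct colors must use each of $1,\ldots,k+1$ exactly once; the unique color-$(k+1)$ vertex contributes exactly weight $1$; and removing it leaves a set of $k$ vertices of colors $1,\ldots,k$ and weight $w$, whose columns are exactly the guessed independent set. The auxiliary starting vertices guarantee that a color-$(k+1)$ vertex is always present in a valid \lkg, so the forward direction also holds. Your proof would need to add a device of this kind (for instance this $(k+1,w+1)$ gadget, or an alternative such as giving every non-guessed-column vertex weight exceeding $w$) to make the reduction sound.
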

\begin{proof}
The matrix has at most $q^k$ distinct column vectors so we can guess the $k$ column vectors forming the independent set $X$ of size $k$ that we are looking for with at most $q^{k^2}$ guesses.
By inserting $|S|$ new vertices with neighborhoods equal to $S$, all-zero column vector, and weight $1$, we can assume that the vectors of the starting vertices $S$ will never correspond to the guessed basis.
Then, we $k+1$-color the graph, assigning the color $k+1$ to the vertices of the set $S$ and other vertices whose column vectors are not in the guessed basis, and the colors $[k]$ to the other vertices according to which of the $k$ guessed column vectors they correspond to.
We also assign the weight of all vertices whose column vector is not in the guessed basis to be $1$.

Then, $(G,M,\we)$ has a $(k,w)$-ranked $(S,T)$-\lkg of order $p$ if and only if it has a $(k+1,w+1)$-colored $(S,T)$-\lkg of order $p$.
In particular, the extra color $k+1$ contributes weight one and one color more, and the selected set $X \subseteq V(\ps)$ without the extra color must correspond to an independent set of $M$.
Therefore, we get an algorithm with time complexity $q^{k^2} 2^{p+k} n^{\Oh(1)} w = 2^{p + \Oh(k^2 \log q)} n^{\Oh(1)} w$.
\end{proof}

By extending \Cref{lem:matroidfullrank} to matrices with a large number of rows by using randomized lossy truncation, we prove  \Cref{thm:weightedframeworkthm}, 
 which we restate here.
 
 \thmweightedframeworks*
\begin{proof}
Let $A$ be a $r \times n$ matrix representing $M$.
Our goal is to obtain a ``lossy'' representation of the $k$-truncation of $M$ as a $k \times n$ matrix over a field of order $\Oh(q+k^2)$.
In particular, a representation so that any independent set of $M$ of size $k$ is independent in the representation with probability $\ge 1/2$, and any dependent set of $M$ is dependent in the representation.
Then, we obtain the algorithm by applying the algorithm of \autoref{lem:matroidfullrank}.
Note that $2^{p + \Oh(k^2 \log (q+k^2))} n^{\Oh(1)} w = 2^{p + \Oh(k^2 \log (q+k))} n^{\Oh(1)} w$.

We use two techniques from~\cite{Marx09}, increasing the order of the field and truncation.
First, we make sure that the order of the field is at least $2k$ by choosing the least integer $i$ such that $q^i \ge 2k$, and going to the field of order $q^i$, as detailed in Proposition~3.2 of~\cite{Marx09}.
Now, we can assume that $A$ is over a field of order at least $2k$ and at most $\Oh(q+k^2)$.

Then, we truncate the matroid by multiplying the matrix $A$ by a random $k \times r$ matrix $R$, in particular we claim that the $k \times n$ matrix $B = RA$ is now the desired representation of the $k$-truncation of $M$.
The analysis here is the same as in Proposition~3.7 of~\cite{Marx09}, but with a smaller field.
In particular, let us consider a subset $U$ of the ground set of $M$, and let $A_0$ be the $r \times |U|$ submatrix of $A$ corresponding to $S$.
Now, in $B$, the $k \times |U|$ submatrix corresponding to $S$ will be the matrix $B_0 = R A_0$.
The rank of $B_0$ is at most the rank of $A_0$, so if $U$ is dependent in $M$ it will be dependent in the representation by $B$.
Then, assume that $U$ is an independent set of $M$ and $|U| = k$.
Now, $\det R A_0$ can be considered as a degree-$k$ non-zero polynomial whose variables are the $kr$ random entries of $R$.
Therefore, by \autoref{lem:schwartzzippel}, the probability that $\det R A_0 = 0$ is at most $k/2k$.
\end{proof}

With minor adjustments, \Cref{thm:weightedframeworkthm} can be adapted for frameworks with matroids
that are in general not representable over a finite field of small order.
For example, uniform matroids, and more generally transversal matroids, are representable over a finite field, but the field of representation must be large enough. 
We first show how \Cref{thm:maintheorem} can be applied in the case of transversal matroids.

\begin{theorem}\label{lemma:transv}
There is a randomized algorithm that given a weighted framework $(G,M,\we)$, where $G$ is an $n$-vertex graph and $M$ is a transversal matroid represented by the corresponding bipartite graph, sets of vertices $S,T\subseteq V(G)$, and integers $p,k,w$, in time $2^{p+\Oh(k^2 \log k)} n^{\Oh(1)} w$
either finds a $(k,w)$-ranked $(S,T)$-\lkg of order $p$ and of minimum total length, 
or determines that $(G,M,\we)$ has no $(k,w)$-ranked $(S,T)$-\lkgs of order $p$.
\end{theorem}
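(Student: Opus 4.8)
The plan is to reduce the transversal matroid case to \Cref{thm:weightedframeworkthm} by constructing an explicit matrix representation of $M$ over a suitable finite field. A transversal matroid $M$ of rank $r$ on ground set $V(G)$, given by a bipartite graph $H = (V(G) \cup B, E_H)$, is representable over any sufficiently large field: the classical construction assigns to each edge $v b \in E_H$ a formal indeterminate $x_{vb}$, forms the $|B| \times |V(G)|$ matrix whose $(b,v)$-entry is $x_{vb}$ if $vb \in E_H$ and $0$ otherwise, and observes by the Cauchy–Binet expansion that a subset $S \subseteq V(G)$ is independent in $M$ if and only if the corresponding columns are linearly independent as vectors of rational functions. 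Specializing the indeterminates to random values in a finite field $\mathbb{F}$ of order $q' = \Oh(k)$ preserves linear independence of any fixed independent set of size at most $k$ with probability at least $1/2$ by \Cref{lem:schwartzzippel} (each relevant minor is a non-zero polynomial of degree at most $k$), while never making a dependent set independent.

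Concretely, first I would note that we only care about independent sets of size at most $k$, so we may as well work with the $k$-truncation of $M$ from the start; this lets us bound all field sizes in terms of $k$ rather than $r$ or $n$. I would then pick the least prime power $q'$ with $q' \ge 4k$ (so $q' = \Oh(k)$), sample the edge variables $x_{vb}$ uniformly from $\mathbb{F}_{q'}$, and let $A$ be the resulting matrix over $\mathbb{F}_{q'}$. A union bound argument---or, more carefully, a single application of \Cref{lem:schwartzzippel} to the product of the minors we need, or just the observation that we only need \emph{one} particular size-$k$ independent set (the one in the hypothetical solution) to stay independent---shows that with probability at least $1/2$, the matrix $A$ represents a matroid $M'$ on $V(G)$ such that every set independent in $M'$ of size $\le k$ is independent in $M$, and the particular $(k,w)$-witness set of the optimal \lkg (if one exists) is independent in $M'$. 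Hence $(G,M,\we)$ has a $(k,w)$-ranked $(S,T)$-\lkg of order $p$ if and only if, with probability $\ge 1/2$, $(G,M',\we)$ does, and the minimum total lengths coincide.

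Having produced the representation $A$ as a matrix over a finite field of order $q' = \Oh(k)$, I would feed $(G, M', \we)$ together with $S, T, p, k, w$ into the algorithm of \Cref{thm:weightedframeworkthm}, which runs in time $2^{p + \Oh(k^2 \log(q' + k))} n^{\Oh(1)} w = 2^{p + \Oh(k^2 \log k)} n^{\Oh(1)} w$ as claimed; if $q'$ is a prime power that is not prime, arithmetic in $\mathbb{F}_{q'}$ still costs only $\mathrm{polylog}(k)$, absorbed into the $n^{\Oh(1)}$ factor. Combining the failure probability of the representation step with that of \Cref{thm:weightedframeworkthm} and boosting by a constant number of repetitions gives the desired one-sided-error randomized algorithm. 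The only mild subtlety---and the step I would be most careful about---is the probabilistic argument that $A$ simultaneously represents the $k$-truncation faithfully enough; the cleanest route is to observe that we never need \emph{all} minors to behave, only the single minor corresponding to the witness independent set of the (unknown but fixed) optimal solution, so a one-shot application of \Cref{lem:schwartzzippel} with $|S| = \mathbb{F}_{q'}$ and degree $\le k < q'/2$ suffices, exactly mirroring the truncation analysis already used in the proof of \Cref{thm:weightedframeworkthm}.
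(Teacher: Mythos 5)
Your proposal matches the paper's proof essentially exactly: both construct the standard (Edmonds-style) random linear representation of the transversal matroid by assigning random field elements to the edges of the bipartite graph, argue via Schwartz--Zippel over a field of order $\Oh(k)$ that dependent sets stay dependent and the single witness independent set of size $k$ stays independent with probability at least $1/2$, and then invoke \Cref{thm:weightedframeworkthm}. The only cosmetic differences are your choice of $q' \ge 4k$ versus the paper's least prime $\ge 2k$ and the explicit mention of Cauchy--Binet, neither of which changes the argument.
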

\begin{proof}
We will construct a representation of the transversal matroid as a linear matroid over a finite field of order $\Oh(k)$, so that any independent set of $M$ of size $k$ is independent in the representation with probability $\ge 1/2$, and any dependent set of $M$ is dependent in the representation.
This yields the algorithm by then using \Cref{thm:weightedframeworkthm}.

Our construction is the same as the construction of~\cite{Marx09}, except by using a smaller field.
We choose the least prime $p$ with $p \ge 2k$ and work in the field of order $p$.
Let the bipartition of the vertices of the bipartite graph be $(A,B)$.
We construct an $|B| \times |A|$ matrix, so that an entry of the matrix is a random element of the field if it corresponds to an edge, and zero otherwise.
Now, the determinant of a submatrix is guaranteed to be zero if there is no corresponding matching, so any dependent set of $M$ is dependent in the representation.
Otherwise, the determinant of a $k \times k$ submatrix can be seen as a non-zero degree-$k$ polynomial that was evaluated at a random point.
Therefore, as $p\ge2k$, by \autoref{lem:schwartzzippel}, the probability that it is non-zero is at least $1/2$.
\end{proof}

It is also possible to apply  \Cref{thm:weightedframeworkthm}  in the situation when $M$ is represented by 
an integer matrix over rationals with entries bounded by $n^{\Oh(k)}$.

\begin{theorem}\label{lemma:rationals}
There is a randomized algorithm that given a weighted framework $(G,M,\we)$, where $G$ is an $n$-vertex graph and $M$ is represented as an integer matrix over rationals with entries bounded by $n^{\Oh(k)}$, sets of vertices $S,T\subseteq V(G)$, and integers $p,k,w$, in time $2^{p+\Oh(k^2 \log k)} n^{\Oh(1)} w$ 
either finds a $(k,w)$-ranked $(S,T)$-\lkg of order $p$ and of minimum total length, 
or determines that $(G,M,\we)$ has no $(k,w)$-ranked $(S,T)$-\lkgs of order $p$.
\end{theorem}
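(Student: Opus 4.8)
The plan is to follow exactly the template used for \Cref{thm:weightedframeworkthm} and \Cref{lemma:transv}: from the given rational representation of $M$ I would produce a ``lossy'' representation over a finite field of small order --- one under which every size-$k$ independent set of $M$ stays independent with probability at least $1/2$ while every dependent set stays dependent --- and then invoke \Cref{thm:weightedframeworkthm} as a black box. Since $M$ is already a linear matroid, the only new ingredient relative to \Cref{lemma:transv} is the passage from $\mathbb{Q}$ to a finite field, for which I would use reduction modulo a random prime, in the spirit of the rationals-to-finite-field reduction of Marx~\cite{Marx09}, the one subtlety being to keep that prime small.

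Concretely, I would first clear denominators so that $M$ is given by an integer matrix $A$ with entries bounded by $n^{\Oh(k)}$ in absolute value. For any size-$k$ independent set $X$, independence is certified by some nonzero $k\times k$ minor $D_X$ of the columns of $X$, and since a $k\times k$ minor of such a matrix is at most $k!\cdot n^{\Oh(k^2)} = 2^{\Oh(k^2\log n + k\log k)}$ in absolute value, $D_X$ has only $\Oh(k^2\log n + k\log k)$ distinct prime divisors. I would then pick a prime $q$ uniformly at random from the first $\Theta(k^2\log n + k\log k)$ primes --- so that $q = n^{\Oh(1)}$ and, crucially, $\log q = \Oh(\log k + \log\log n)$ --- and reduce $A$ modulo $q$. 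Dependence is automatically preserved (rank can only drop under reduction), and for the fixed independent set witnessing an optimal solution we have $q\nmid D_X$ with probability at least $1/2$; together with the fact that \Cref{thm:weightedframeworkthm} performs its own lossy truncation internally (so the possibly large row count of $A$ is irrelevant), this yields, after standard amplification, the desired one-sided-error algorithm, with correctness as in the proof of \Cref{thm:weightedframeworkthm}.

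The only place that requires genuine care --- and the step I expect to be the main obstacle --- is checking that the claimed running time actually comes out. Plugging $\log q = \Oh(\log k + \log\log n)$ into the bound $2^{p+\Oh(k^2\log(q+k))}n^{\Oh(1)}w$ of \Cref{thm:weightedframeworkthm} leaves a stray factor $(\log n)^{\Oh(k^2)}$, and one has to argue this is absorbed into $2^{\Oh(k^2\log k)}n^{\Oh(1)}$. I would do this by a case split on $m := \log n$: since $m/\log m$ is increasing and surpasses $k^2$ once $m = \Omega(k^2\log k)$, for $m$ above that threshold $(\log n)^{k^2} = 2^{k^2\log m} \le 2^{m} = n$, while for $m$ below it $(\log n)^{k^2} \le (\Oh(k^2\log k))^{k^2} = 2^{\Oh(k^2\log k)}$. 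With that observation the parameter-dependent factor collapses and the total running time becomes $2^{p+\Oh(k^2\log k)}n^{\Oh(1)}w$, completing the argument.
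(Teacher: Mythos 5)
Your proposal is correct and follows essentially the same strategy as the paper: clear denominators, reduce the integer representing matrix modulo a random prime drawn from the first $\Theta(\log(k!\,n^{\Oh(k^2)}))$ primes (so dependence is preserved and any fixed size-$k$ independent set survives with probability $\ge 1/2$), invoke the finite-field theorem (\Cref{thm:weightedframeworkthm}) as a black box, and absorb the residual polylogarithmic-in-$n$ factor raised to the $\Oh(k^2)$ via a case split on $\log n$ versus a polynomial in $k$. The only cosmetic difference is that you track $\log q = \Oh(\log k + \log\log n)$ explicitly before splitting, whereas the paper bounds the prime by $\Oh(k^3\log n\log\log n)$ and splits on $\log n\log\log n$ versus $k^5$; both yield $2^{p+\Oh(k^2\log k)}n^{\Oh(1)}w$.
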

\begin{proof}
Let $c$ be a constant so that the entries of the matrix are bounded by $n^{ck}$.
We pick a random prime $p$ among the first $2 \log_2 (k! n^{ck^2})$ primes, go to the finite field of order $p$ by taking every entry modulo $p$, and then apply the algorithm of \autoref{thm:maintheorem}.

We first analyze the time complexity and then the correctness.
By the prime number theorem, the prime $p$ is bounded by 
\[p = \Oh(\log k! n^{ck^2} \cdot \log \log k! n^{ck^2}) = \Oh(k^3 \log n \log \log n).\]
We can find such random prime in $n^{\Oh(1)}$ time by elementary methods.
Then, the time complexity by using \autoref{thm:maintheorem} will be $2^{\Oh(p + k^2 \log (k + k^3 \log n \log \log n)} n^{\Oh(1)}$.
Denote $t(n) = \log n \log \log n$ and consider two cases.
First, if $t(n) \le k^5$, then the time complexity is $2^{\Oh(p + k^2 \log (k^8)} n^{\Oh(1)} = 2^{\Oh(p + k^2 \log k)} n^{\Oh(1)}$.
Second, if $t(n) > k^5$, then the time complexity is $2^{\Oh(p+k^2 \log k^3 t(n))} n^{\Oh(1)} = 2^{\Oh(p+t(n)^{1/2} \log t(n))} n^{\Oh(1)} = 2^{\Oh(p)}\cdot2^{\Oh(\log^{1/2} n \log^{\Oh(1)} \log n)} =2^{\Oh(p)} n^{\Oh(1)}$.

Then, for the correctness we show that any dependent set of $M$ is dependent in the representation and any independent set of $M$ of size $k$ is independent in the representation with probability $\ge 1/2$.
Let $A$ be a square submatrix of the original representation and $A_p$ the corresponding submatrix in the presentation modulo $p$.
Now, $\det A_p = \det A \mod p$.
Therefore, all dependent sets stay dependent.
Then, assume that $A$ is a $k \times k$ submatrix corresponding to an independent set, i.e., $\det A \neq 0$.
Now, the independent set can change into dependent only if $\det A$ is divisible by $p$.
The value $\det A$ is bounded by $k! n^{ck^2}$, so there are at most $\log_2 (k! n^{ck^2})$ primes dividing it.
We chose $p$ randomly among the first $2 \log_2 (k! n^{ck^2})$ primes, so with probability $\ge 1/2$ the prime $p$ does not divide $\det A$.
\end{proof}
 

\newcommand{\newpref}{\widehat{P}}
\newcommand{\newsuf}{\widehat{Q}}

\section{Deterministic algorithm for longest $(S,T)$-\lkg}\label{sec:detKiril}

This section is dedicated to the proof of \Cref{thm:detmaintheorem}, which we restate next for convenience.

\detmaintheorem*

We start with showing the main combinatorial lemma behind the theorem. The lemma is illustrated in \Cref{fig:tokens}.

\begin{lemma}
    Let $G$ be a digraph and let $C_1$, \ldots, $C_q$ be disjoint sets in $V(G)$. For $s, t \in V(G)$ let $P_1$, \ldots, $P_q$ be internally-disjoint $(s, t)$-paths. For each $i \in [q]$, let $v_i \in V(P_i)$ be such that the suffix of $P_i$ starting from $v_i$ lies inside $C_i$, except for $t$. For $i \in [q]$, let $Q_i$ be a path from $v_i$ to $t$ with all internal vertices in $C_i$. Then there exist internally-disjoint $(s,t)$-paths $P_1'$, \ldots, $P_q'$ such that $P_i'$ is either (i) $P_i$ or (ii) a composition of a prefix of $P_i$ not containing any vertices of $P_i$ beyond $v_i$, and a suffix of $Q_j$ for some $j \in [q]$, and there is at least one path of type (ii) among $P_1'$, \ldots, $P_q'$.
    \label{lemma:tokens}
\end{lemma}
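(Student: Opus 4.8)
The plan is to formalize the ``token sliding game'' sketched in the overview and run it to termination. First I would set up the bookkeeping: an index set $A \subseteq [q]$ of \emph{active} paths $Q_j$ (initially $A = [q]$), and for each $i \in [q]$ a token, which is a vertex $t_i$ lying on $P_i$ together with the information ``$t_i$ is the first vertex of $P_i$ (reading from $s$ toward $t$) that lies on some $Q_j$ with $j \in A$''. If $P_i$ meets no active $Q_j$, we instead declare $t_i = t$ and mark $P_i$ as \emph{finished}. I would then define the two moves. \textbf{Push}: if two unfinished tokens $t_i, t_{i'}$ lie on the same $Q_j$, take the one farther from $t$ along $Q_j$, say $t_i$, and advance it along $P_i$ (past its current position) to the first vertex of $P_i$ on an active $Q_{j'}$; if none exists, set $t_i = t$ and mark $P_i$ finished. \textbf{Clear}: if some unfinished token $t_i$ has reached $v_i$ (equivalently, $t_i$ lies on $Q_i$ at $v_i$), then set $t_i = t$, mark $P_i$ finished, remove $i$ from $A$, and for every other token $t_{i'}$ currently on $Q_i$, advance $t_{i'}$ along $P_{i'}$ to the next active $Q_{j}$ (or to $t$ if none exists), just as in \textbf{Push}.

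Next I would argue termination: every move strictly advances at least one token along its own path $P_i$ toward $t$ (or finishes it), and $A$ is non-increasing, so the process halts. Let the final configuration be $\{t_i\}_{i \in [q]}$ with active set $A^\ast$. The key structural claims at termination are: (1) on each active $Q_j$ ($j \in A^\ast$) there is exactly one token, and every token on an active $Q_j$ is at distance $<k$ from $t$ — this holds because the $k$-suffix of $P_i$ lies in $C_i$, so once a token enters $C_j$ it is within the length-$k$ suffix region of $Q_j$, and \textbf{Clear} fires before it could reach $v_j$ twice; (2) there is at least one active token, i.e.\ at least one $i$ with $P_i$ unfinished, because if every $P_i$ were finished we could take the collection $\{P_i\}$ itself, contradicting minimality of the original solution (this is where the extra ``prefixes have total length $\ge k$'' coloring is used — I would phrase the contradiction against $P_1,\dots,P_q$ being the \emph{shortest} $(s,t)$-\lkg of order $q$ with total length $\ge k$). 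Then for each finished $P_i$ I would output $P_i' = P_i$ (type (i)), and for each unfinished $P_i$, letting $j = j(i)$ be the active index with $t_i \in Q_j$, I would output $P_i' = (\text{prefix of } P_i \text{ up to } t_i) \concs (\text{suffix of } Q_j \text{ from } t_i)$ (type (ii)).

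The main obstacle — and the part I would spend the most care on — is \textbf{disjointness of the $P_i'$}, together with showing at least one is type (ii). For type (ii)--type (ii) clashes: two unfinished tokens never share an active $Q_j$ (claim (1)), so their $Q_j$-suffixes are disjoint, and their $P_i$-prefixes are disjoint because the $P_i$ are internally disjoint and the prefixes end at the tokens; the remaining danger is a $P_i$-prefix hitting some $Q_j$-suffix used by a different $P_{i'}'$ — but this cannot happen, because the $P_i$-prefix stops at the \emph{first} vertex of $P_i$ lying on any active $Q_j$, so it contains no vertex of any active $Q_j$ except its last vertex $t_i$, which lies only on $Q_{j(i)}$ (disjointness of the $C_j$'s). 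For type (i)--type (ii): a finished $P_i$ meets no active $Q_j$ at all, and every type-(ii) path's new portion lies inside active $Q_j$'s, so the only possible overlap is through the type-(ii) path's $P_{i'}$-prefix, handled as before via internal disjointness of the $P$'s. Finally, ``at least one type (ii)'' is exactly claim (2). I expect the bulk of the write-up to be a careful invariant — maintained through both \textbf{Push} and \textbf{Clear} — stating precisely that each $t_i$ is the first vertex of $P_i$ on $\bigcup_{j \in A} Q_j$, that tokens on the same active $Q_j$ are never more than one, and that $A$ only shrinks when a token reaches its own $v_j$; verifying this invariant survives a \textbf{Clear} (which simultaneously removes an index from $A$ and relocates several tokens) is the fiddly step.
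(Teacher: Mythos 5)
Your high-level plan — the token sliding game with \textbf{Push} and \textbf{Clear}, \textbf{Clear} prioritized, read off the new paths from the terminal token positions — is the paper's. But two steps you identify as the crux do not hold as you have stated them, and both require a genuinely different argument from the one you sketch.

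The invariant ``$t_i$ is the first vertex of $P_i$ on $\bigcup_{j\in A}Q_j$'' is false, because \textbf{Push} destroys it: when a token $t_i$ sitting on an active $Q_j$ is pushed forward (a second token on $Q_j$ being closer to $t$), the position $t_i$ vacates now lies strictly inside the new prefix of $P_i$ and is still on the still-active $Q_j$. So the final prefix of $P_i$ \emph{can} contain interior vertices of active $Q_{j'}$'s, and your ``this cannot happen'' step in the prefix-versus-suffix disjointness case does not go through. What the paper uses instead is a different monotone quantity: for each fixed $j$, the closest-to-$t$ position along $Q_j$ occupied by any token never moves away from $t$ under \textbf{Push}, and \textbf{Clear} never relocates a token that sits on a $Q_j$ it does not deactivate. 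If a prefix of some $P_i'$ met the $Q_j$-suffix of another $P_{i'}'$, then at some earlier step $t_i$ sat on $Q_j$ strictly closer to $t$ than the terminal token of $Q_j$, contradicting this monotonicity. Your invariant bounds where the prefix went; the paper's bounds where the suffix can start.

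Your proof that at termination some token is not at $t$ is also flawed. You invoke minimality of $P_1,\ldots,P_q$ and the total-length-$\ge k$ coloring, but the lemma, as stated, has no such hypotheses — $P_1,\ldots,P_q$ are arbitrary internally-disjoint $(s,t)$-paths with the stated suffix structure. And even granting minimality, ``we could take the collection $\{P_i\}$ itself, contradicting minimality'' derives no contradiction: $\{P_i\}$ is the input. The paper proves this claim inside the game: it first disposes separately of the case that some $Q_i$ has no internal vertex in any $P_j^\rightarrow$ (then $P_i^\rightarrow\circ Q_i$ works immediately, without the game). In the remaining case, if all tokens reach $t$, the last move must be a \textbf{Clear} on some $Q_j$; the assumed internal intersection of $Q_j$ with some $P_{i'}^\rightarrow$, together with the fact that \textbf{Push} never empties $Q_j\setminus\{v_j,t\}$ of tokens, forces a second token $t_{i''}$ to be sitting on $Q_j\setminus\{v_j,t\}$ just before that \textbf{Clear}; and the \textbf{Clear} then moves $t_{i''}$ no farther along $P_{i''}$ than $v_{i''}$, not to $t$. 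You do not split off this easy case, and the last-move argument needs it.
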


\begin{proof}
    For each $i \in [q]$, denote the subpath of $P_i$ from $s$ to $v_i$ by $P_i^\rightarrow$, and from $v_i$ to $t$ by $P_i^\leftarrow$.
    First, assume there exists $i \in [q]$ such that $Q_i$ does not share a common internal vertex with any $P_j^\rightarrow$, $j \in [q]$. In this case, the solution is immediate: for each $j \ne i$, set $P_j' = P_j$, and set $P_i' = P_i^\rightarrow \circ Q_i$. The path $P_i^\rightarrow$ does not intersect any other $P_j'$ internally since $P_1$, \ldots, $P_q$ are internally-disjoint, and by the assumption $Q_i$ internally intersects neither $P_i^\rightarrow$ nor any other $P_j'$. So for the remaining part of the proof we assume that for each $Q_i$ there exists $j \in [q]$ such that $Q_i$ and $P_j^\rightarrow$ share a common internal vertex.

    We now show the statement by analyzing a certain token sliding game. Intuitively, we put a token on each of the paths $P_1$, \ldots, $P_q$, originally on the place of the first intersection between $P_i$ and some $Q_j$ (see \Cref{fig:tokens_b}). Then we slide the tokens further along the paths according to certain rules, until the tokens reach a state where no rules can be applied (\Cref{fig:tokens_c,fig:tokens_d}). Our goal is to show that in this case we obtain the  desired paths $P_1'$, \ldots, $P_q'$.

    More formally, we define a \emph{state} $S$ as a tuple $(t_1, \ldots, t_q)$, where $t_i \in V(P_i)$ for each $i \in [q]$.
    The original state $S^1 = (t_1^1, \ldots, t_q^1)$ is defined as follows: for each $i \in [q]$, $t_i^1$ is the closest to $s$ vertex along $P_i$ that belongs to $Q_j$ for some $j \in [q]$. The game then proceeds iteratively, constructing the state $S^{h + 1}$ from $S^h$ for each $h$ starting from $h = 1$ by applying one of the following rules. For $j \in [q]$, we shall refer to a path $Q_j$ as \emph{active} if $t_j^h \ne t$.

    \begin{description}
        \item[Clear] Let $t^h_i = v_i$ for some $i \in [q]$. Then set $t^{h + 1}_i = t$, and for each $j \in [q]$ such that $t^h_j \in V(Q_i) \setminus \{t\}$, set $t^h_j$ to be the next vertex along the path $P_j$ that belongs to an active $Q_{j'}$ for some $j' \in [q]$, here $Q_i$ is not considered active. For all remaining $j \in [q]$, set $t^{h + 1}_j = t^h_j$.
        \item[Push] Let $i$ and $i'$, $i \ne i' \in [q]$, be such that both $t^h_i$ and $t^h_{i'}$ belong to $Q_j - \{t\}$ for some $j \in [q]$; additionally, let $t^h_i$ be the farthest of two from $t$ along $Q_j$. Set $t_i^h$ to be the next vertex along the path $P_i$ that belongs to an active $Q_{j'}$ for some $j' \in [q]$. For all $i'' \in [q]$, $i'' \ne i$, set $t^{h + 1}_{i''} = t^h_{i''}$.
    \end{description}

    As long as there is a possibility, a \textbf{Clear} rule is applied; \textbf{Push} is only applied if no \textbf{Clear} is available. If there are several options for applying the same rule, ties are breaking arbitrarily. We observe that every application of each rule moves at least one of the state vertices further along its respective path, and these vertices are never moved back. Thus, after a finite number of steps we reach a state where neither of the rules is applicable. Denote this state by $S^T_i$, our goal is to show the following.

    \begin{claim}\label{claim:token_left}
        Let $S^T_i = (t^T_1, \ldots, t^T_q)$. There is $i \in [q]$ such that $t^T_i \ne t$.
    \end{claim}

    Before showing the proof of \Cref{claim:token_left} we make the following simple observation.

    \begin{claim}\label{claim:end_tokens}
        For a state $S^h$, $h \in [T]$, for each $j \in [q]$, $t_j^h \in V(P_j)$, and $t_j^h$ is either $t$ or belongs to an active $Q_i$ for some $i \in [q]$. Moreover, all of $t_j^h$ that are not $t$, are distinct.
    \end{claim}
    \begin{proof}
        By construction, the first part of the statement holds for the starting state $S^1$. Both rules either move vertices to an active $Q_i$ further along its path, or directly to $t$. The second part follows immediately from the fact that the paths $P_j$ are internally-disjoint.
    \end{proof}

    We first explain how \Cref{claim:token_left} implies the claim in the lemma. By \Cref{claim:token_left}, there is at least one $i \in [q]$ such that $t^T_i \ne t$; let $I \subset [q]$ be the set of all indices with this property. By \Cref{claim:end_tokens}, each vertex in $\{t^T_i\}_{i \in I}$ lies on an active $Q_j$ for some $j \in [q]$, and these vertices are all distinct. Moreover, since the rule \textbf{Push} is not applicable, no two of these vertices share the same $Q_j$. Let $\pi: I \to [q]$ be the injection that maps $i \in I$ to the index $j$ such that $t^T_i \in V(Q_j)$. We construct the desired family of paths as follows: for $i \in [q] \setminus I$, let $P_i'$ be $P_i$, and for $j \in I$, let $P_j'$ be a concatenation of the subpath of $P_i$ from $s$ to $t^T_j$ (denoted $\newpref_i$), and the subpath of $Q_{\pi(i)}$ from $t^T_j$ to $t$ (denoted $\newsuf_i$). Observe also that for each $i \in I$, $t^T_i$ is not $v_i$ since the rule \textbf{Clear} is not applicable. Moreover, since $I$ is non-empty, there is at least one path of type (ii) in the constructed family. It only remains to show that the paths $\{P'_i\}_{i \in [q]}$ are internally-disjoint.

    For $i \in [q] \setminus I$, denote $\newpref_i = P^\rightarrow_i$ and $\newsuf_i = P^\leftarrow_i$. For $i \ne i' \in [q]$, $s$ is the only intersection between $\newpref_i$ and $\newpref_{i'}$ since $\newpref_i$ and $\newpref_{i'}$ are proper prefixes of $P_i$ and $P_{i'}$ respectively, and these paths are internally-disjoint by the assumption of the lemma. Observe that for each $i \in [q]$, the vertices of $\newsuf_i$ except $t$ lie in the set $C_j$, for some $j \in [q]$, and this correspondence between $\{\newsuf_i\}_{i \in [q]}$ and $\{C_j\}_{j \in [q]}$ is a bijection defined by $\pi$ on $I$ and by the identity permutation on $[q] \setminus I$. Since the sets $\{C_j\}_{j \in [q]}$ are disjoint, for any $i \ne i' \in [q]$, we get that the paths $\newsuf_i$ and $\newsuf_{i'}$ share the only common vertex $t$.

    It remains to verify that for each $i \neq i' \in [q]$, $\newpref_i$ shares no common vertices with $\newsuf_{i'}$. For $i' \in [q] \setminus I$, $\newsuf_{i'} = P^\leftarrow_{i'}$, which is a suffix of $P_{i'}$, and this path cannot intersect $\newpref_i$ which is a prefix of $P_i$; thus in the following we assume $i' \in I$. Assume the contrary, then there is a vertex $u$ on $\newpref_i$ that belongs to $Q_{\pi(i')}$, and is located on $Q_{\pi(i')}$ closer to $t$ than $t^T_{i'}$, which is the starting vertex of $\newsuf_{i'}$. Observe that the rule \textbf{Clear} has never been applied to $Q_{\pi(i')}$, otherwise $Q_{\pi(i')}$ would not be active in $S^T$. Thus, there is a state $S^h$ where $t^h_i = u$, since any application of the rules to $t^{h'}_i$ for any $h' \in [T]$ either leaves this vertex in place or moves it to the next vertex of $P_i$ belonging to an active $Q_j$ for some $j \in [q]$. Now observe that no application of the rule \textbf{Push} makes the closest vertex to $t$ on $Q_{\pi(i')} - \{t\}$ among $\{t^h_j\}_{j \in [q]}$ farther, by definition of \textbf{Push}. However, we get that $t^h_i$ is closer to $t$ on $Q_{\pi(i')}$ than $t^T_i$, which is the only vertex of $\{t^T_j\}_{j \in [q]}$ on $Q_{\pi(i')} - \{t\}$. This is a contradiction to the assumption that $\newpref_i$ and $\newsuf_{i'}$ intersect.

    \begin{proof}[Proof of \Cref{claim:token_left}]
        Assume the contrary, that $t^T_i = t$ for each $i \in [q]$. Since the paths $\{P^\leftarrow_i\}_{i \in [q]}$ are non-empty, $T > 1$.
        Thus, the state $S^{T - 1}$ is defined and $S^T$ is obtained by applying a rule to $S^{T - 1}$. First, observe that this rule could not have been \textbf{Push}, as it assumes there exist distinct $t^{T - 1}_i$ and $t^{T - 1}_{i'}$ on $Q_j - \{t\}$ for some $i, i', j \in [q]$, and only moves $t^{T - 1}_i$ away while keeping $t^T_{i'}$ on $Q_j - \{t\}$. Therefore, \textbf{Clear} has been applied to $S^{T - 1}$, replacing $t^{T - 1}_i = v_j$ by $t^T_i = t$, for some $i, j \in [q]$. Now, we claim that there is another $i' \in [q]$, $i' \ne i$ such that $t^{T - 1}_{i'} \in V(Q_j) \setminus \{t\}$. Indeed, by the starting assumption of the proof, there exists $i'$ such that $P^\rightarrow_{i'}$ and $Q_j$ share an internal vertex $u$. Since $Q_j$ is active until the last step, and since the application of any rule moves $t^h_{i'}$ to the next vertex of $P_{i'}$ intersecting some active $Q_{j'}$, there exists a step $h \in [T - 1]$ where $t^h_{i'} = u$. Before the step $T - 1$, only the rule \textbf{Push} could have been applied to $Q_j$, and an application of this rule never makes the intersection $\{t^h_{i''}\}_{i'' \in [q]} \cap (Q_j \setminus \{v_j, t\})$ empty. Therefore there exists $i'' \in [q]$ such that $t^{T - 1}_{i''} \in Q_j \setminus \{v_j, t\}$. This contradicts the assumption that $t^T_{i''} = t$ since the application of \textbf{Clear} to $Q_j$ moves $t^{T - 1}_{i''}$ to the next vertex on $P_{i''}$ on an active $Q_{j'}$; this will not take $t^{T}_{i''}$ farther than $v_{i''}$ along $P_{i''}$.
    \end{proof}

\end{proof}

Before we move to the proof of the main theorem, we note that the basic idea of random separation is to exploit random colorings of the vertex set. We, on the other hand, are first and foremost looking for a deterministic algorithm; the standard approach would be to enumerate a sufficiently ``expressive'' set of colorings, instead of trying a pre-set number of random colorings.
 Unfortunately, the existing results on derandomization of random separation algorithms cannot be applied directly, as normally random separation is considered for constant number of sets; most often two.
 Thus in the next lemma we directly construct a suitable family of functions by using the standard tool of perfect hash families, given by the classical result of Naor, Schulman, and Srinivasan~\cite{NaorSS95} (we refer to~\cite[Chapter 5]{cygan2015parameterized} for the detailed introduction to the concept). For integers $n$ and $k$, an $(n, k)$-\emph{perfect hash family} $\mathcal{F}$ is a family of functions from $[n]$ to $[k]$ such that for each set $S \subset [n]$ of size $k$ there exists $f \in \mathcal{F}$ that acts on $S$ injectively. We are now ready to state our derandomization lemma.

 \begin{lemma}
     For an $n$-element set $U$ and $q$ integers $k_1$, \ldots, $k_q$, $\sum_{i = 1}^q k_i= \ell$ there exists a family of functions $\mathcal{F}$ of size $q^{\Oh(\ell)} \log n$ mapping $U$ to $\{1, \ldots, q\}$ with the following property. For any disjoint sets $A_1$, \ldots, $A_q \subset U$ with $|A_i| = k_i$ for $i \in [q]$, there exists a function $f \in \mathcal{F}$ such that $f(x) = i$ if $x \in A_i$.
     Moreover, $\mathcal{F}$ can be computed in time $q^{\Oh(\ell)} n \log n$.
     \label{lemma:derandomize}
 \end{lemma}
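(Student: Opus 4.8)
The plan is to obtain $\mathcal{F}$ by composing a perfect hash family with an exhaustive enumeration of relabelings. Identify $U$ with $[n]$ via an arbitrary fixed bijection and set $A = A_1 \cup \cdots \cup A_q$, so that $|A| = \ell$. We may assume $1 \le \ell \le n$, since for $\ell = 0$ any single function works and for $\ell > n$ no disjoint sets $A_1,\ldots,A_q$ of total size $\ell$ exist; and we may assume $q \ge 2$, since for $q = 1$ the single constant function suffices. By the construction of Naor, Schulman, and Srinivasan~\cite{NaorSS95} (see also~\cite[Chapter~5]{cygan2015parameterized}), there is an $(n,\ell)$-perfect hash family $\mathcal{H}$ of functions $[n] \to [\ell]$ with $|\mathcal{H}| = e^\ell \ell^{\Oh(\log \ell)} \log n$, computable in time $e^\ell \ell^{\Oh(\log \ell)} n \log n$. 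We then define
\[
\mathcal{F} = \{\, g \circ h \;:\; h \in \mathcal{H},\ g : [\ell] \to [q] \,\},
\]
viewing each $g \circ h$ as a function $U \to \{1,\ldots,q\}$.

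For correctness, I would argue as follows. Given disjoint sets $A_1,\ldots,A_q$ with $|A_i| = k_i$, pick $h \in \mathcal{H}$ that acts injectively on $A$; since $|A| = \ell$, the map $h$ restricts to a bijection $A \to [\ell]$. Define $g : [\ell] \to [q]$ by letting $g(j)$ be the unique index $i$ with $h^{-1}(j) \cap A \subseteq A_i$ — well-defined because $h|_A$ is a bijection and $A_1,\ldots,A_q$ partition $A$. Then for every $x \in A_i$ we get $g(h(x)) = i$, so $f = g \circ h \in \mathcal{F}$ has exactly the required property. (On elements outside $A$ nothing is prescribed, and any value of $g$ on $[\ell] \setminus h(A) = \emptyset$ is irrelevant.) Since we enumerate \emph{all} $g : [\ell] \to [q]$, the appropriate $g$ is certainly included.

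Finally I would verify the size and running-time bounds. We have $|\mathcal{F}| \le |\mathcal{H}| \cdot q^\ell = e^\ell \ell^{\Oh(\log \ell)} q^\ell \log n$. Because $q \ge 2$, we have $e^\ell q^\ell = (eq)^\ell \le (q^3)^\ell$ and $\ell^{\Oh(\log \ell)} = 2^{\Oh(\log^2 \ell)} = 2^{\Oh(\ell)} \le q^{\Oh(\ell)}$, so $|\mathcal{F}| = q^{\Oh(\ell)} \log n$ as claimed. For the construction time, computing $\mathcal{H}$ costs $e^\ell \ell^{\Oh(\log \ell)} n \log n = q^{\Oh(\ell)} n \log n$, and then writing out the $q^{\Oh(\ell)} \log n$ functions of $\mathcal{F}$ explicitly (each as a table of $n$ values, obtained by one table lookup per element) costs another $q^{\Oh(\ell)} n \log n$. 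I do not expect a genuine obstacle here; the only points requiring care are the degenerate cases ($q = 1$, $\ell \in \{0\}$ or $\ell > n$) and checking that both the $e^\ell \ell^{\Oh(\log \ell)}$ overhead of the perfect hash family and the $q^\ell$ blow-up from enumerating relabelings are absorbed into $q^{\Oh(\ell)}$, which is exactly where the assumption $q \ge 2$ is used.
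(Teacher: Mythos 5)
Your proof is correct and takes essentially the same approach as the paper's: both compose the Naor--Schulman--Srinivasan $(n,\ell)$-perfect hash family with an exhaustive enumeration of maps from $[\ell]$ to $[q]$, and both verify correctness by noting that an injective $h$ on $A$ can be followed by a relabeling that sends $h(A_i)$ to $i$. The one cosmetic difference is in the second step: you enumerate all $q^\ell$ functions $g:[\ell]\to[q]$, whereas the paper restricts to the ordered partitions $I_1,\ldots,I_q$ of $[\ell]$ with $|I_i|=k_i$ and bounds their number by $\Oh(\ell\cdot q^\ell)$ via a Stirling/convexity argument in their Claim 5.3 — your choice enumerates a superset, sidesteps that combinatorial claim entirely (the trivial bound $q^\ell$ suffices), and arrives at the same $q^{\Oh(\ell)}\log n$ size; you are also somewhat more careful than the paper about the degenerate cases $q=1$ and $\ell>n$, and about explicitly absorbing the $e^\ell\ell^{\Oh(\log\ell)}$ overhead into $q^{\Oh(\ell)}$ using $q\ge 2$.
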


 \begin{proof}
     First, construct an $(n, \ell)$-perfect hash family $\mathcal{H}$ of size $e^\ell \ell^{\Oh(\log \ell)} \log n$ in time $e^\ell \ell^{\Oh(\log \ell)} n \log n$ by the result of Naor, Schulman, and Srinivasan~\cite{NaorSS95}.
     For every $h \in \mathcal{H}$ and a partition $[t] = I_1 \cup \ldots \cup I_q$ such that $|I_i| = k_i$ for $i \in [q]$, add a function $f_{I_1, \ldots, I_q}^h$ to $\mathcal{F}$. The function acts as follows: for any $x \in U$, $f_{I_1, \ldots, I_q}^h(x) = i$ if $h(x) \in I_i$.

     We now show that $\mathcal{F}$ defined above satisfies the conditions of the lemma. Fix the subsets $A_1$, \ldots, $A_q$ of $U$, denote $A = A_1 \cup \ldots \cup A_q$, $|A| = \ell$. By the definition of an $(n, \ell)$-perfect hash family, there exists $h \in \mathcal{H}$ such that the images $h(x)$ are distinct for all $x \in A$. For each $i \in [q]$, define $I_i$ to be the set of indices that $h$ assigns to $A_i$. By definition, $f_{I_1, \ldots, I_q}^h(x) = i$ if $h(x) \in I_i$, and $h(x) \in I_i$ if and only if $x \in A_i$. It only remains to bound the number of partitions $I_1$, \ldots, $I_q$.
     \begin{claim}\label{claim:partitions}
         The number of partitions of $[\ell]$ into disjoint subsets $I_1$, \ldots, $I_q$ with $|I_i| = k_i$ for $i \in [q]$ is $\Oh(\ell \cdot q^\ell)$.
     \end{claim}
     \begin{proof}
         The number of partitions is equal to the multinomial coefficient
         \[\binom{\ell}{k_1, k_2, \ldots, k_q} = \frac{\ell!}{k_1! k_2! \cdots k_q!} \le e \sqrt{\ell} \frac{\left(\frac{\ell}{e}\right)^t}{\left(\frac{k_1}{e}\right)^{k_1}  \cdot \left(\frac{k_2}{e}\right)^{k_2} \cdots \left(\frac{k_q}{e}\right)^{k_q}} = \Oh\left(\ell \cdot \frac{\ell^\ell}{k_1^{k_1} \cdot k_2^{k_2} \cdots k_q^{k_q} } \right),\]
         by Stirling's formula. We now argue that $k_1^{k_1} \cdot k_2^{k_2} \cdots k_q^{k_q} \ge \left(\frac{\ell}{q} \right)^\ell$, which immediately implies that the desired number of partitions is bounded by $\Oh\left(\ell \cdot \frac{\ell^\ell}{\left(\frac{\ell}{q} \right)^\ell} \right) = \Oh(\ell \cdot q^\ell)$. For that, we observe that the function $f(x) = x \log x$ is strictly convex on $x \ge 1$, since $(x \log x)' = (1 + \log x)$, thus $x \log x + y \log y \ge 2 \cdot \left(\frac{x + y}{2} \cdot \log\frac{x + y}{2}\right)$, and $x^x \cdot y^y \ge \left(\frac{x + y}{2}\right)^{2 \cdot \frac{x + y}{2}}$ for any $x, y \ge 1$, where the equality only holds if $x = y$. Now, consider the function $h(x_1, x_2, \dots, x_q) = x_1^{x_1} \cdot x_2^{x_2} \cdots x_q^{x_q}$ defined on the polytope $K \subset \mathbb{R}^q$ bounded by $x_1 \ge 1$, \ldots, $x_q \ge 1$, $\sum_{i = 1}^q x_i = \ell$. Since $h$ is continuous on $K$ and $K$ is compact, $h$ attains its minimum in $K$. Assume that $h$ achieves its minimum on $x_1$, \ldots, $x_q \in K$ with $x_i \ne x_j$ for some $i, j \in [q]$. Then by the above, $h(x_1, \ldots, x_i, \ldots, x_j, \ldots, x_q) > h(x_1, \ldots,  \frac{x_i + x_j}{2}, \ldots, \frac{x_i + x_j}{2}, \ldots, x_q)$; the tuple on the right-hand side still belongs to $K$. Thus, such a $x_1$, \ldots, $x_q$ cannot achieve the minimum, and $h$ is minimized at the only point with equal coordinates, $(\ell/q, \ldots, \ell/q) \in K$. Since $(k_1, \ldots, k_q) \in K$, the claim is done.
     \end{proof}

     The bound on the size of $\mathcal{F}$ now follows directly from \Cref{claim:partitions}: $|\mathcal{F}| = |\mathcal{H}| \cdot \Oh(\ell \cdot q^\ell) = q^{\Oh(\ell)} \cdot n^{\Oh(1)}$.
 \end{proof}

 Finally, with \Cref{lemma:tokens} and \Cref{lemma:derandomize} at hand, we move to the proof of \Cref{thm:detmaintheorem} itself.

\begin{proof}[Proof of \Cref{thm:detmaintheorem}]
    First, we observe that finding an $(S, T)$-\lkg of order $p$ and total length at least $k$ is equivalent to finding an $(s, t)$-\lkg of order $p$ and total length at least $(k + 2)$, where moreover $s \ne t$ and $s$ is not adjacent to $t$.
    Indeed, consider the digraph $G'$ that is a copy of $G$ with two new vertices $s$ and $t$, where $N_{G'}^+(s) = S$ and $N_{G'}^-(t) = T$. Then, any directed $(s, t)$-\lkg of order $p$ and length $k + 2$ in $G'$ induces a directed $(S, T)$-\lkg of order $p$ and length $k$ in $G$ by removing $s$ and $t$, and vice versa. Thus for the rest of the proof we assume that the task is to find an $(s, t)$-\lkg of order $p$ and total size at least $k$, $s \ne t$, and $s$ is not adjacent to $t$. We now describe two separate subroutines of our algorithm, tailored for different cases of the maximum length of the path in the target $(s, t)$-\lkg.
    The \textbf{short case} succeeds if there is an $(s, t)$-\lkg where all paths have less than $2k$ internal vertices, and the \textbf{main case} succeeds otherwise (the proof of correctness follows after the description of the algorithm). 

    \textbf{Short case.} For each $i \in [p]$, we branch over the number of internal vertices $k_i$ of the $i$-th path in the target linkage, $1 \le k_i < 2k$. If $\sum_{i = 1}^p k_i < k - 2$, we disregard the choice of $\{k_i\}_{i = 1}^p$ and proceed to the next branch. Otherwise, consider a function family $\mathcal{F}$ given by an invocation of \Cref{lemma:derandomize} with $q = p$ and the current values of $k_1$, \ldots, $k_p$. Branch over the choice of $f \in \mathcal{F}$ and denote by $C_1, C_2, \ldots, C_p$ the vertices colored by the respective colors via $f$. For each $i \in [p]$, we use a deterministic algorithm for finding a directed $(s, t)$-path with exactly $k_i$ internal vertices in the graph $G[C_i \cup \{s, t\}]$ in time $2^{Oh(k_i)} \cdot n^{\Oh(1)}$. The fastest-known such algorithm is the algorithm of Zehavi~\cite{Zehavi15} running in time $\Oh(2.597^{k_i}) \cdot n^{\Oh(1)}$\footnote{While the result in~\cite{Zehavi15} is stated for finding an arbitrary path of certain length, it could be easily adjusted to finding an $(s, t)$-path.} If for some choice of $\{k_i\}_{i = 1}^p$ and $f$ the desired collection of paths is found, the algorithm returns it. If no branch succeeds, the algorithm reports a no-instance.

    We now argue for correctness of the algorithm above. Since the paths are internally-disjoint by construction and $\sum_{i = 1}^p k_i \ge k - 2$, if the algorithm returns a collection of paths, they clearly form a solution. In the other direction, fix a solution induced by directed $(s, t)$-paths $P_1^*$, \ldots, $P_p^*$, where for each $i$ the $i$-th path contains exactly $k_i < 2k$ internal vertices, and consider the respective branch of the algorithm above. If for each $i \in [p]$ the set $C_i$ contains the internal vertices of the $i$-th path, the algorithm succeeds, as $G[C_i \cup \{s, t\}]$ contains an $(s, t)$-path with exactly $k_i$ internal vertices. Denote by $A_i$ the set of internal vertices of $P_i^*$, for each $i \in [p]$,  \Cref{lemma:derandomize} guarantees that there exists $f \in \mathcal{F}$ that colors each $A_i$ in color $i$, concluding the proof of correctness in this case.

\begin{algorithm}[h]
\For{$q'=1, \ldots, p$}
{
    Invoke \Cref{lemma:derandomize} with $q = p + 1$, $k_1 = \cdots = k_{q'} = k$, $k_{q' + 1} = \cdots = k_p = 2k$, and $k_{p + 1} = k$, to obtain the function family $\mathcal{F}$\;
\ForEach{$f \in \mathcal{F}$}
{
    Denote by $C_0, C_1, \ldots, C_p$ the vertices colored by the respective colors via $f$\;\label{alg:det_main_step:start}

\For{$i=1, \ldots, p$}
{
\ForEach{$v_i\in C_i$ at distance $k$ from $t$ in $G_i=G[C_i\cup\{t\}]$}
{
Find a shortest $(v_i,t)$-path $Q_i$ in $G_i$\;\label{alg:det_main_step:shortest}
Find paths $P_1$, $P_2$, \ldots, $P_p$ in $G-(V(Q_i)\setminus\{v_i,t\})$, where $P_i$ is an $(s, v_i)$-path and for each $j \ne i$, $P_j$ is an $(s, t)$-path, such that no two paths share a vertex except for $s$ and $t$, and $P_i$ does not contain $t$\;\label{alg:det_main_step:flow}
\If{such paths $P_1$, $P_2$, \ldots, $P_p$ exist}
{
    Set $P_i=P_i \circ Q_i$\;
    \Return{the paths $P_1$, $P_2$, \ldots, $P_p$}\; 
}
}
}\label{alg:det_main_step:end}
}
}
\caption{Main case of the algorithm in \Cref{thm:detmaintheorem}.}\label{alg:det_main_step}
\end{algorithm}

    \textbf{Main case.} 
    The basic procedure is given in \Cref{alg:det_main_step}. Observe that the task in \Cref{alg:det_main_step:flow} can be easily reduced to an instance of network flow, thus the whole procedure given in \Crefrange{alg:det_main_step:start}{alg:det_main_step:end} runs in polynomial time. If no iteration returns a collection of paths, the algorithm reports a no-instance.

    It is easy to observe that if \Cref{alg:det_main_step} returns a collection of paths, then these paths constitute a solution to the given instance. Indeed, by construction, $P_1$, \ldots, $P_p$ are internally-disjoint $(s, t)$-paths. The length of $Q_i$ is exactly $k$, since the path $P_i$ contains $Q_i$ as a subpath, the $(s, t)$-\lkg given by the paths $P_1$, \ldots, $P_p$ is of order $p$ and length at least $k$. It remains to verify that if there exists an $(s, t)$-\lkg of order $p$ where there is a path with at least $2k$ internal vertices, then our algorithm successfully returns a collection of paths for some choice of $f$. Denote the $p$-many $(s, t)$-paths that form a solution of minimum total length by $P_1^*$, \ldots, $P_p^*$, assuming that the paths are ordered from longest to shortest. In particular, $|V(P_1^*) \setminus \{s, t\}| \ge 2k$. Denote by $q'$ the maximum index such that $|V(P_{q'}^*) \setminus \{s, t\}| \ge 2k$, $1 \le q' \le p$.

    We say that the coloring $C_0$, $C_1$, \ldots, $C_p$ \emph{agrees} with the paths $P_1^*$, \ldots, $P_p^*$ if the following holds:
    \begin{enumerate}[label=(\roman*)]
        \item for each $i \in [q']$, the last $k$ internal vertices of $P_i^*$ belong to $C_i$,
        \item the first $k$ vertices of $P_1^*$ belong to $C_0$,
        \item for each $j \in [p] \setminus [q']$, all internal vertices of $P_j^*$ belong to $C_j$.
    \end{enumerate}
    For $i \in [q']$, denote by $A_i$ the last $k$ internal vertices of $P_i^*$; by $A_{p + 1}$ the first $k$ vertices of $P_1^*$; for $j \in [p] \setminus [q']$, denote by $A_j$ all internal vertices of $P_j^*$ plus arbitrary vertices not yet part of any $A_i$ so that $|A_j| = 2k$. By \Cref{lemma:derandomize}, there exists $f \in \mathcal{F}$ that induces a coloring $C_1$, \ldots, $C_{p + 1}$ with $A_i \subset C_i$ for each $i \in [p + 1]$, meaning that this coloring agrees with the solution $P_1^*$, \ldots, $P_p^*$.
    In the remainder of the proof, we argue that for this choice of $f$ \Cref{alg:det_main_step} outputs a solution.

    For $i \in [p]$, let $v_i$ be the vertex of $P_i^*$ at distance exactly $k$ from $t$ along the path. Since the coloring agrees with the solution, the last $k$ internal vertices of $P_i^*$ belong to $C_i$, including $v_i$. If $v_i$ is at distance $k$ from $t$ in $G_i$, denote by $Q_i$ the shortest $(v_i, t)$-path in $G_i$ that the algorithm finds on \Cref{alg:det_main_step:shortest}; otherwise denote by $Q_i$ an arbitrary shortest $(v_i, t)$-path in $G_i$. We now apply \Cref{lemma:tokens} to the paths $P_1^*$, \ldots, $P_q^*$ in the graph $G$ with selected disjoint vertex subsets $C_1$, \ldots, $C_q$, with selected vertices $v_1$, \ldots, $v_q$and paths $Q_1$, \ldots, $Q_q$. By the lemma, there exist internally-disjoint $(s,t)$-paths $P_1'$, \ldots, $P_q'$, such that for each $i \in [q]$, $P_i'$ is either $P_i^*$ or a concatenation of a prefix $\newpref_i$ of $P_1^*$ not extending beyond $v_i$, and a suffix $\newsuf_i$ of $Q_j$ for some $j \in [q]$; moreover, at least one of the paths is of the second type. 

    First, we claim that the paths $P_1'$, \ldots, $P_q'$, $P_{q + 1}^*$, \ldots, $P_p^*$ together form an $(\{s\}, \{t\})$-linkage of length at least $k$. Indeed, the paths $P_1^*$, \ldots, $P_p^*$ are internally-disjoint from the beginning; thus for each $q < i < j \le p$, $P_i^*$ and $P_j^*$ do not share common internal vertices. Moreover, for each $i \in [q]$, $j \in [p] \setminus [q]$, $P_i'$ and $P_j^*$ are immediately internally-disjoint in case $P_i' = P_i^*$. In case $P_i'$ is a concatenation of $\newpref_i$ and $\newsuf_i$, the path $\newpref_i$ again does not share a vertex with $P_j^*$ as a prefix of $P_i^*$, except for $s$. The suffix $\newsuf_i - \{t\}$ on the other hand is fully contained in some $C_{i'}$, $i' \in [q]$, while $P_j^* - \{s, t\}$ is contained in $C_j$ by the property (iii) of the coloring. Finally, for the length observe that the path $P_1'$ contains the first $k$ internal vertices of the path $P_1^*$ since they belong to the set $C_0$ disjoint from $C_i$ for any $i > 0$.

    Consider now a path $P_i'$ that is not $P_i^*$ but a concatenation, $\newpref_i \circ \newsuf_i$.
    Since the paths $P_1^*$, \ldots, $P_p^*$ come from a solution of smallest total length, it cannot be that $P_i'$ is shorter than $P_i^*$. On the other hand, the length of $\newsuf_i$ is at most $k$, which is the length of the suffix of $P_i^*$ from $v_i$ to $t$, and $\newpref_i$ is at most as long as the prefix of $P_i^*$ from $s$ to $v_i$. Thus it has to be that $\newpref_i$ is exactly the prefix of $P_i^*$ from $s$ to $v_i$, and $\newsuf_i$ is $Q_i$; additionally, $Q_i$ is then of length exactly $k$, thus $v_i$ is at distance $k$ from $t$ in $G_i$, and $Q_i$ is the shortest path chosen by the algorithm in \Cref{alg:det_main_step:shortest}.

    We now claim that the existence of $(s, t)$-\lkg $P_1'$, \ldots, $P_q'$, $P_{q + 1}$, \ldots, $P_p$ implies that \Cref{alg:det_main_step:flow} is executed successfully for the respective choice of $v_i$. Indeed, denote $P_i = \newpref_i$, for each $i' \in [q]$, $i' \ne i$, denote $P_{i'} = P_i'$, and for each $j \in [p] \setminus [q]$, denote $P_j = P_j^*$. This collection of paths is of form required by the conditions in \Cref{alg:det_main_step:flow}, and thus the algorithm of \Cref{alg:det_main_step:flow} returns a suitable collection of paths as well (not necessarily the same). This concludes the proof of correctness for \textbf{Main case}.

    Finally, observe that the running time of both cases is dominated by the invocation of \Cref{lemma:derandomize} with $q = p + 1$ and $\ell \le 2k (p + 1)$, resulting in the total running time of $p^{\Oh(kp)} \cdot n^{\Oh(1)}$.
\end{proof}


\section{Conclusion}\label{sec:concl} 
We conclude with several concrete open questions. The first question is about derandomizing \Cref{thm:maintheorem}, even for the case when $p=1$.
The algorithm in  \Cref{thm:maintheorem} is based on DeMillo-Lipton-Schwartz-Zippel lemma  for  polynomial identity testing, and therefore we do not expect to derandomize it using similar techniques~\cite{kabanets2004derandomizing,DBLP:journals/mst/MontoyaM13}.
However, similarly to \Cref{thm:detmaintheorem}, we do not exclude that other methods could result in (maybe slower) deterministic algorithms.
We are not aware of any \emph{deterministic} and \classFPT in $k$ algorithm for \ProblemName{Maximum Colored Path}.

The second question is about the \textsc{Disjoint Paths} problem. Here for a given set of pairs of terminal vertices $(s_1, t_1), \dots, (s_r, t_r)$, the problem is to decide whether there are vertex-disjoint $(s_i,t_i)$-paths, $i\in [r]$. The problem is FPT parameterized by $r$ by the seminal algorithm of Robertson and Seymour  \cite{RobertsonS95b}. A natural extension of this problem would be on colorful graphs, where we want the disjoint paths to collect at least $k$ colors. We do not know whether the colored variant of the problem is FPT parameterized by $k$ even for  \textsc{$2$-Disjoint Paths}, that is for $r=2$.

The third question concerns extending \Cref{thm:main-frameworks}, where we demand matroid  $M$ to be represented as a matrix  over a finite field of order $q$. 
The natural question here  is whether there is an FPT algorithm for $k$-ranked $(s,t)$-path (and more generally, for $k$-ranked $(S,T)$-\lkg of order $p$) in frameworks $(G,M)$, where  $M$ is a linear matroid represented as a matrix over rationals.
We also ask what is the complexity of this problem when $M$ is given by an independence oracle.
As was shown by Jensen and Korte~\cite{JensenK82}, various matroid problems have unconditional complexity lower bounds asserting that they admit no algorithms where the number of oracle calls is bounded by a polynomial on the size of the matroid ground set. 
For example, this concerns the classical \textsc{Matroid Parity} problem that can be solved in polynomial time on linear matroids as it was shown by Lov\'{a}sz (see, e.g.,~\cite{LovaszPlummerbook876}).
It is natural to ask whether such a lower bound can be shown for $k$-ranked $(s,t)$-path.

The last concrete question is about \probLongSTP and \probLongCycle.
Our algorithm implies the first $2^k n^{\Oh(1)}$ time algorithms for these problems, and the dependency on $k$ in the time complexity of our algorithm is unlikely to be improved in the general colored case.
However, it remains an interesting open problem whether \probLongSTP or \probLongCycle could be solved in time $(2-\varepsilon)^k n^{\Oh(1)}$ for some $\varepsilon > 0$, especially keeping in mind that \probkPath admits an $1.66^k n^{\Oh(1)}$ time algorithm~\cite{BjorklundHKK17}.

\bibliographystyle{siam}
\bibliography{Frameworks}

\end{document}

\end{document}